\setlist[enumerate]{leftmargin=.5in}
\setlist[itemize]{leftmargin=.5in}
\newcommand{\gen}{\mathscr{L}}
\DeclareMathOperator{\diam}{diam}
\newcommand{\ignore}[1]{}
\newcommand{\oldstuff}[1]{}
\newcommand{\E}{\mathrm{\mathbf{E}}}
\newcommand{\1}{\mathrm{\mathbf{1}}}
\renewcommand{\P}{\mathrm{\mathbf{P}}}
\DeclareMathOperator{\var}{var}
\DeclareMathOperator{\cov}{cov}
\DeclareMathOperator{\acov}{C}
\newcommand{\convdist}{\xrightarrow[]{\rm{d}}}
\newcommand{\convas}{\xrightarrow[]{\rm{as}}}
\newcommand{\N}{\rm{N}} 
\newcommand{\iidx}{\mathbf{i}}
\newcommand{\jidx}{\mathbf{j}}
\newcommand{\kidx}{\mathbf{k}}
\newcommand{\Real}{\mathbb{R}}
\newcommand{\Z}{\mathbb{Z}}
\newcommand{\eps}{\varepsilon}
\DeclareMathOperator{\diver}{div}
\DeclareMathOperator*{\diag}{diag}
\renewcommand{\t}{\mathrm{t}}
\DeclareMathOperator*{\tr}{tr}
\newcommand{\zv}{z^{\rm V}}
\newcommand{\emus}{\text{US}}
\newcommand{\pu}[1]{#1^\ast}
\title{
  Stratification as a general variance reduction method for Markov chain Monte Carlo
  \thanks{BvK was supported by NSF RTG: Computational and Applied Mathematics in Statistical Science, number 1547396. ARD, EHT, and JW were supported by National Institutes of Health (NIH) Grant Number R01GM109455.  JW was also supported by the Advanced Scientific Computing Research Program within the DOE Office of Science through award DE-SC0020427. Computing resources were provided by the University of Chicago Research Computing Center.  We wish to thank Jonathan Mattingly, Jeremy Tempkin, and Charlie Matthews for helpful discussions.}
}
\author{
  Aaron R.\@ Dinner \thanks{Department of Chemistry and James Franck Institute, the University of Chicago, Chicago, IL 60637}
  \and Erik H.\@ Thiede \thanks{Department of Chemistry and James Franck Institute, the University of Chicago, Chicago, IL 60637}
  \and Brian Van Koten \thanks{Department of Mathematics and Statistics, the University of Massachusetts, Amherst, MA 01003}
  \and Jonathan Weare \thanks{Courant Institute of Mathematical Sciences, New York University, New York, NY 10012}}
\begin{document}

\maketitle

\begin{abstract}
  The Eigenvector Method for Umbrella Sampling (EMUS) \cite{Thiede2016} belongs to a popular class of methods in statistical mechanics which adapt the principle of stratified survey sampling to the computation of free energies.
  We develop a detailed theoretical analysis of EMUS.
  Based on this analysis, we show that EMUS is an efficient \emph{general} method for computing averages over arbitrary target distributions. In particular, we show that EMUS can be dramatically more efficient than direct MCMC when the target distribution is multimodal or when the goal is to compute tail probabilities. To illustrate these theoretical results, we present a tutorial application of the method to a problem from Bayesian statistics. 
\end{abstract}

\section{Introduction}
Markov chain Monte Carlo (MCMC) methods have been widely used with great success throughout statistics, engineering, and the natural sciences. 
However, when estimating tail probabilities or when sampling from multimodal distributions, accurate MCMC estimates often require a prohibitively large number of samples. 
In this article, we analyze the Eigenvector Method for Umbrella Sampling (EMUS)~\cite{Thiede2016}.  We first proposed EMUS as a method for computing free energies, and we demonstrated that it was useful for treating the multimodality that typically arises in that context.
Here, we demonstrate that EMUS is an effective \emph{general} means of addressing the challenges posed not just by multimodality but also tail events, with potential applications to a broad range of problems in statistics, engineering, and the natural sciences.

EMUS was inspired by Umbrella Sampling~\cite{torrie1977nonphysical} and other methods such as the Weighted Histogram Analysis Method (WHAM)~\cite{WHAM1992} and the Multistate Bennett Acceptance Ratio (MBAR)~\cite{shirts2008statistically} for computing potentials of mean force and free energies in statistical mechanics. 
\footnote{A potential of mean force is the logarithm of a marginal density. A free energy is the logarithm of a normalization constant. Both quantities play fundamental roles in statistical mechanics, e.g.,\@ in theories of rates of chemical reactions.} We call these \emph{stratified MCMC methods} since they each adapt the principle of stratified survey sampling to MCMC simulation. That is,  they each draw samples concentrated in a collection of subregions of state space and combine the resulting averages in a consistent manner.
Stratified MCMC methods are among the most powerful, most successful, and most widely used tools in molecular simulation. (However, in contrast to our presentation here, they are not typically used in molecular simulation to compute averages of general observables.) WHAM, for example, has been instrumental for treating biomolecular processes ranging from  protein folding~\cite{boczko1995folding} to conductance by ion channels~\cite{berneche2001energetics}.

While the practical utility of stratification has been established in many applications, the advantages and disadvantages of the method have remained poorly understood; cf.\@~\cite{Thiede2016}. 
Motivated by the substantial gap between theory and application of stratified MCMC within statistical mechanics, and also by the general challenges posed by multimodality and tail probabilities, the goal of this paper is to develop a clear theoretical explanation of the advantages of EMUS. Our theory suggests new applications of stratified MCMC (and EMUS in particular) to broad classes of sampling problems arising in statistics and statistical mechanics. For example, very recently EMUS was successfully applied to a parameter estimation problem in cosmology~\cite{matthews2017cosmo}.


\paragraph{Summary of Main Results}
Our most general results are a central limit theorem (CLT) for the EMUS method and a convenient upper bound on the asymptotic variance, cf.\@ Theorem~\ref{thm: CLT for EMUS} and Theorem~\ref{thm: upper bound on asymptotic variance}. We note that the proof of the upper bound relies on a new class of perturbation estimates for Markov chains which we derived in~\cite{ThVKWe:Perturbation}. These estimates are substantially more detailed than previous results~\cite{ChoMey:Survey}. 
After proving the CLT, we address the dependence of the sampling error on the choice of strata. In particular, for a representative MCMC method, we estimate the asymptotic variances of trajectory averages sampling the biased distributions, cf.\@ Theorem~\ref{thm: ergodicity and estimates of iat}. Our estimate shows how factors such as the diameters of the strata influence the asymptotic variances.  

In Section~\ref{sec: limiting results}, we apply the general theory developed in Section~\ref{sec: error analysis} to case studies involving tail probabilities and multimodality. Our results concern two limits: a \emph{small probability} limit and a \emph{low-temperature} limit. In the small probability limit, we consider estimation of probabilities of the form
\begin{equation*}
p_M := \P[X \geq M].
\end{equation*}
For a broad class of random variables $X$, we show that while the cost of computing $p_M$ with relative precision by direct MCMC increases exponentially with $M$, the cost by EMUS increases only polynomially; cf.\@ Section~\ref{sec: small probability limit}.
In the low-temperature limit, a parameter of the target distribution decreases, intensifying the effects of multimodality on the efficiency of MCMC sampling.
We show that the cost of computing an average to fixed precision by direct MCMC increases exponentially in this limit, whereas the cost by EMUS increases only polynomially; cf.\@ Section~\ref{sec: low temp}. 
We conclude that EMUS may be dramatically more efficient than direct MCMC sampling when the target distribution is multimodal or when the goal is to compute a small tail probability.

To illustrate our theoretical results, we present a tutorial numerical study applying EMUS to a problem in Bayesian statistics in Section~\ref{sec: numerical results}. 
In addition to illustrating the theory, our numerical study demonstrates the problems that may occur when EMUS and other similar stratified MCMC methods are used carelessly. It also addresses practical issues such as the choice of strata and the computation of error bars for averages estimated by EMUS.

The results in this article significantly extend and generalize the ideas in~\cite{Thiede2016}. We first proposed the EMUS method with the goal of analyzing and improving umbrella sampling approaches in free energy calculations.
Here, our goal is to establish EMUS as a \emph{general} variance reduction technique, and we present entirely new results, including an upper bound on the asymptotic variance of EMUS (Theorem~\ref{thm: upper bound on asymptotic variance}), a condition to guide some aspects of the choice of strata (Remark~\ref{rem: remark on size of entries in F and choice of strata}), a theoretical argument demonstrating the benefits of EMUS for computing tail probabilities (Section~\ref{sec: small probability limit}), numerical results applying EMUS to Bayesian inference (Section~\ref{sec: numerical results}), a method of correcting problems related to poorly chosen strata (Section~\ref{subsec: numerical experiments}), and a greatly improved numerical method for estimating the standard deviations of quantities computed by EMUS (Appendix~\ref{sec:error-bars}).
In addition, we give complete justifications of some results that were stated without proof in~\cite{Thiede2016}, including Theorem~\ref{thm: ergodicity and estimates of iat} concerning the dependence of the sampling error on the choice of strata.
Finally, we note that our results concerning multimodal distributions and the low-temperature limit generalize and clarify the results given in~\cite{Thiede2016}; in particular, our Theorem~\ref{thm: low temp EMUS} covers periodic boundary conditions and stratification in more than one variable.

\section{The Eigenvector Method for Umbrella Sampling}\label{sec: EMUS}
In this section, we present the Eigenvector Method for Umbrella Sampling (EMUS), 
and we prove that it is consistent.  A detailed derivation of the estimator can be found in \cite{Thiede2016}.
We also review a related method, iterative EMUS, and we compare iterative EMUS with the MBAR method from statistical mechanics~\cite{shirts2008statistically}.

\subsection{The EMUS estimator}\label{sec:emus}
The objective of EMUS is to compute the average 
\begin{equation*}
 \pi[g]:= \int_\Omega g(x) \pi(dx), 
\end{equation*}
of a function $g$ with respect to a measure $\pi$ defined on a set $\Omega$.
In EMUS, instead of sampling directly from $\pi$, we sample from biased distributions analogous to the strata in stratified survey sampling methods. 
We then weight the samples from the biased distributions to estimate $\pi[g]$.

We assume that the \emph{biased distributions} take the form 
\begin{equation*}
 \pi_i(dx) :=  \frac{\psi_i(x) \pi(dx)}{\pi[\psi_i]}
\end{equation*}
for some set $\{\psi_i\}_{i=1}^L$ of non-negative $\emph{bias functions}$ defined on $\Omega$.
We assume that 
\begin{equation*} 
 \sum_{i=1}^L \psi_i(x) >0 \text{ for all } x \in \Omega.
\end{equation*}
We call the support of $\psi_i$ the \emph{$i$'th stratum} to make an analogy between the biased distributions of EMUS and the strata of stratified survey sampling.

The EMUS estimator is based on two observations. First, one may write $\pi[g]$ in terms of weighted sums of averages over the biased distributions.  Let $u\in (0,\infty)^L$ be arbitrary, and for any function $h:\Omega \rightarrow \Real$, define
\begin{equation*}
 h^\ast (x) := \frac{h(x)}{\sum_{k=1}^L \psi_k(x)/ u_k}.
\end{equation*}
By~\cite[Equation~16]{Thiede2016}, we have 
\begin{equation}\label{eq: decomposition of average 2}
 \pi[g] = \frac{\sum_{i=1}^L z_i \,\pi_i [ \pu{g} ] / u_i}{\sum_{i=1}^L z_i \,\pi_i [\1^\ast]/u_i},
\end{equation}
where $\1$ denotes the function equal to one everywhere and
\begin{equation}\label{eq: first formula for weights}
z_i = \frac{\pi[\psi_i]}
{\sum_{k=1}^L \pi[\psi_k]}.
\end{equation}
(For the reader's convenience, we present complete derivations of equations~\eqref{eq: decomposition of average 2} and~\eqref{eq: eigenvector problem} in Appendix~\ref{apx: derivation of emus}.) The parameter $u$ above can be thought of as an initial guess of the unknown \emph{weight vector} $z \in \Real^L$ in~\eqref{eq: first formula for weights}.  Its explicit presence here will simplify the description of an iterative version of EMUS in Section~\ref{sec: iterative EMUS}. 
Outside of this section and Section~\ref{sec: iterative EMUS}, the choice of $u$ is absorbed in the definition of the $\psi_i$, i.e. $u_i=1$.

Second, the weight vector $z$ can be found by solving a certain eigenproblem. 
Let $w \in \Real^L$ be defined by $z_i = u_i w_i$. 
By~\cite[Equation~17]{Thiede2016}, we have  
\begin{equation}\label{eq: eigenvector problem}
w^\t F=w^\t,  \quad \text{ where } F_{ij} := \pi_i [\pu{\psi_j}]/u_j.
\end{equation}
We call $F$ the \emph{overlap matrix}.

We observe that equations~\eqref{eq: decomposition of average 2} and~\eqref{eq: eigenvector problem} combine to express $\pi[g]$ as a function of averages over the biased distributions, namely $\pi_i[g^\ast]$, $\pi_i[\1^\ast]$, and $F$. To see this, it suffices to recognize that $F$ is stochastic and $w$ is a probability vector. Therefore, whenever $F$ is irreducible, the solution of the eigenproblem is unique by the Perron--Frobenius theorem. We will assume throughout the remainder of this work that $F$ is irreducible, which amounts to requiring some overlap between neighboring strata; see Lemma~\ref{lem: irreducibility condition}. In general, for any irreducible, stochastic matrix $G \in \Real^{L \times L}$, we will let $w(G) \in \Real^L$ denote the unique solution of 
\begin{equation*}
 w(G)^\t G = w(G)^\t \text{ with } \sum_{i=1}^L w_i(G) = 1.
\end{equation*}
With this notation, by~\eqref{eq: decomposition of average 2} and~\eqref{eq: eigenvector problem} we have 
\begin{equation}\label{eq: final decomposition formula}
 \pi[g] = \frac{\sum_{i=1}^L w_i(F) \pi_i [g^\ast]}{\sum_{i=1}^L w_i(F) \pi_i [\1^\ast]}.
\end{equation}

In EMUS, we substitute MCMC estimates for the averages over biased distributions on the right hand side of~\eqref{eq: final decomposition formula} to estimate $\pi[g]$. 
To be precise, let $X^i_t$ be a Markov process ergodic for $\pi_i$.
We call $X^i_t$ the \emph{biased process} sampling the biased distribution $\pi_i$.
The EMUS algorithm proceeds as follows:
\begin{enumerate}
 \item For each $i = 1, \dots, L$, compute $N_i$ steps of the process $X^i_t$.
 \item Compute the averages
 \begin{equation}\label{eq:barg1F}
  \bar g^\ast_i := \frac{1}{N_i} \sum_{t=1}^{N_i} g^\ast(X^i_t), \text{ } 
  \bar \1^\ast_i := \frac{1}{N_i} \sum_{t=1}^{N_i} \1^\ast(X^i_t), \text{ and } 
  \bar F_{ij} := \frac{1}{N_i} \sum_{t=1}^{N_i} \psi^\ast_j(X^i_t)/u_j.
 \end{equation}
 \item Compute $w(\bar F)$ numerically, for example from the QR factorization of $I-\bar F$ \cite{GolMey:ComputingInvDist}.
 \item Compute the estimate 
 \begin{equation*}
  \pi_\emus[g] := \frac{ \sum_{i=1}^L w_i(\bar F) \bar g^\ast_i}{\sum_{i=1}^L w_i(\bar F) \bar \1^\ast_i}
 \end{equation*}
 of $\pi[g]$.
\end{enumerate}

Recall that $w(\bar F)$ is defined only if $\bar F$ is irreducible.
In the following lemma, whose proof is in Appendix~\ref{apx: irreducibility condition}, we state simple criteria for the irreducibility of $F$ and $\bar F$:

\begin{lemma}\label{lem: irreducibility condition}
The overlap matrix $F$ is irreducible if and only if for every $A \subset \{1,2,\dots,L\}$, we have
\begin{equation}\label{eq: irreducibility condition}
\pi \left [ \left (\sum_{i \in A} \psi_i \right ) \left (\sum_{j \notin A} \psi_j \right ) \right ] >0.
\end{equation}
The approximate overlap matrix $\bar F$ is irreducible if and only if for every $A \subset \{1,2,\dots,L\}$, the set
$
\cup_{i \in A} \{ x: \psi_i(x) >0 \}
$
contains at least one sample point generated from one of the biased processes $X^j_t$ with $j \notin A$.
\end{lemma}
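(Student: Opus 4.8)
The plan is to reduce both assertions to the standard graph-theoretic characterization of irreducibility: a stochastic matrix $G \in \Real^{L\times L}$ is irreducible if and only if for every proper nonempty $A \subsetneq \{1,\dots,L\}$ there exist $i \in A$ and $j \notin A$ with $G_{ij} > 0$; equivalently, no proper nonempty index set is closed under $G$. I would first record this fact: one direction follows by tracing a connecting path from a vertex of $A$ to a vertex of $A^c$ until it first leaves $A$, and the other by taking $A$ to be the set of indices reachable from a fixed $i$ and noting that, having no outgoing edges, it must be all of $\{1,\dots,L\}$. Since the quantifier ranges over all proper nonempty $A$, this is equivalent to the ``entering'' version obtained by replacing $A$ with its complement: for every proper nonempty $A$ there exist $j \notin A$ and $i \in A$ with $G_{ji} > 0$. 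Throughout I read the lemma's quantifier ``for every $A$'' as ranging over proper nonempty subsets, since for $A = \emptyset$ or $A = \{1,\dots,L\}$ the displayed quantities degenerate.

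For the first assertion I would translate positivity of the entries of $F$ into positivity of integrals against $\pi$. From the definitions, $F_{ij} = \frac{1}{\pi[\psi_i]}\,\pi\!\left[\frac{\psi_i\psi_j}{\sum_k\psi_k}\right]$, and since $\pi[\psi_i] > 0$ while $\frac{1}{\sum_k\psi_k}$ is a strictly positive finite function on $\Omega$, the key observation is that for a nonnegative $h$ and a strictly positive $w$ one has $\pi[hw] > 0 \iff \pi[h] > 0$, because $\{hw > 0\} = \{h > 0\}$. Hence $F_{ij} > 0 \iff \pi[\psi_i\psi_j] > 0$. Expanding the product, $\pi\!\left[\left(\sum_{i\in A}\psi_i\right)\left(\sum_{j\notin A}\psi_j\right)\right] = \sum_{i\in A}\sum_{j\notin A}\pi[\psi_i\psi_j]$ is a sum of nonnegative terms, so it is positive exactly when $\pi[\psi_i\psi_j] > 0$ for some $i \in A$, $j \notin A$, i.e. $F_{ij} > 0$ for some such pair. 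The lemma's condition therefore holds for every $A$ precisely when no proper nonempty $A$ is closed, which is irreducibility of $F$.

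For the second assertion the same scheme applies to the empirical matrix, with integrals replaced by trajectory sums. Since $\bar F_{ij} = \frac{1}{N_i}\sum_{t=1}^{N_i}\frac{\psi_j(X^i_t)}{\sum_k\psi_k(X^i_t)}$ is a sum of nonnegative terms and $\sum_k\psi_k > 0$ everywhere, I have $\bar F_{ij} > 0$ if and only if some sample $X^i_t$ of the $i$th biased process lies in $\{x:\psi_j(x) > 0\}$. The stated condition says that for every proper nonempty $A$ the region $\cup_{i\in A}\{x:\psi_i(x) > 0\}$ contains a sample $X^j_t$ with $j \notin A$; unwinding the union, this is exactly the existence of $i \in A$, $j \notin A$ with $\bar F_{ji} > 0$, i.e. an edge entering $A$. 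By the entering form of the graph characterization, this holds for all $A$ if and only if $\bar F$ is irreducible.

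The computations are routine; the points requiring care are the elementary fact (iii) that multiplying a nonnegative integrand by the strictly positive weight $\frac{1}{\sum_k\psi_k}$ cannot change whether its $\pi$-integral vanishes, and—what I expect to be the main place for a slip—the direction of the edges in the second part. Unlike $F$, whose support pattern is symmetric because $F_{ij}$ and $F_{ji}$ share the numerator $\pi[\psi_i\psi_j/\sum_k\psi_k]$, the empirical matrix $\bar F$ need not have symmetric support, so one must invoke the ``entering'' rather than the ``leaving'' form of the characterization. I would therefore state the entering/leaving equivalence explicitly before applying it, to avoid an orientation error.
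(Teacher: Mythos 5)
Your proof is correct and follows essentially the same route as the paper's: both reduce the claim to the characterization of irreducibility via the existence, for every proper nonempty index set $A$, of a positive entry crossing between $A$ and its complement, and then translate positivity of $F_{ij}$ (resp.\@ $\bar F_{ij}$) into positivity of $\pi[\psi_i\psi_j]$ (resp.\@ the presence of a sample of $X^i_t$ in $\{\psi_j>0\}$). You are in fact more complete than the paper, which proves only one direction of the second assertion and declares the first ``similar''; your care about the entering versus leaving orientation for $\bar F$, whose support pattern need not be symmetric, is well placed.
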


We claim that the EMUS estimator is consistent; that is, $\pi_\emus[g]$ converges almost surely to $\pi[g]$ as the total number of samples tends to infinity.
To make this precise, we require the following assumption on the growth of $N_i$ with the total number of samples:
\begin{assumption}\label{asm: defn of kappa}
Let
\begin{equation*}
  N= \sum_{i=1}^L N_i
\end{equation*}
be the total number of samples from all biased distributions.
Assume that for each $i$,
\begin{equation*}
  \lim_{N \rightarrow \infty} N_i/N = \kappa_i >0.
\end{equation*}
That is, assume that when $N$ is large, the proportion of samples drawn from the $i$'th biased distribution is fixed and greater than zero.
\end{assumption}

We now prove that EMUS is consistent:
\begin{lemma}\label{lem: consistency}
Under Assumption~\ref{asm: defn of kappa} and the irreducibility condition~\eqref{eq: irreducibility condition}, $\pi_\emus[g]$ converges almost surely to $\pi[g]$ as the total number of samples $N$ tends to infinity.
\end{lemma}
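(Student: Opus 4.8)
The plan is to establish consistency by showing that the EMUS estimator $\pi_\emus[g]$ is a continuous function of the sample averages $\bar g^\ast_i$, $\bar\1^\ast_i$, and $\bar F_{ij}$, and that each of these sample averages converges almost surely to its population counterpart by the ergodic theorem. Since $\pi_\emus[g]$ is assembled from these quantities via the map $w(\cdot)$ and a ratio, the key is to verify that this assembly is continuous at the limiting values, so that almost sure convergence of the inputs transfers to almost sure convergence of the output.

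First I would invoke the ergodic theorem for each biased process. Since $X^i_t$ is ergodic for $\pi_i$, for every fixed $i$ and $j$ we have $\bar g^\ast_i \to \pi_i[g^\ast]$, $\bar\1^\ast_i \to \pi_i[\1^\ast]$, and $\bar F_{ij} \to \pi_i[\psi_j^\ast] = F_{ij}$ almost surely as $N_i \to \infty$. Under Assumption~\ref{asm: defn of kappa}, $N \to \infty$ forces $N_i \to \infty$ for every $i$, since $N_i/N \to \kappa_i > 0$. Taking the intersection of these finitely many almost-sure events, we conclude that on an event of full probability, $\bar F \to F$ entrywise and all the sample averages converge to their limits simultaneously.

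Next I would address the stationary-distribution map $w(\cdot)$. The irreducibility condition~\eqref{eq: irreducibility condition}, together with Lemma~\ref{lem: irreducibility condition}, guarantees that $F$ is irreducible, so $z = w(F)$ is well defined. The main analytic point is that $w$ is continuous at $F$: for any irreducible stochastic matrix $G$, the vector $w(G)$ is the unique solution of the linear system $w^\t(I - G) = 0$ together with $\sum_i w_i = 1$, and by the Perron--Frobenius theorem this solution depends continuously on the entries of $G$ in a neighborhood of any irreducible stochastic matrix (for instance because the relevant $(L{-}1)\times(L{-}1)$ minor of $I - G$ is invertible there, so Cramer's rule expresses $w(G)$ as a ratio of polynomials in the entries with nonvanishing denominator). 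Since $\bar F \to F$ almost surely and $F$ is irreducible, for all sufficiently large $N$ the matrix $\bar F$ lies in a neighborhood of $F$ on which it too is irreducible and $w(\bar F)$ is defined, and $w(\bar F) \to w(F) = z$ almost surely.

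Finally I would combine these convergences through the formula for $\pi_\emus[g]$. By continuity of multiplication and addition, the numerator $\sum_i w_i(\bar F)\,\bar g^\ast_i \to \sum_i z_i\,\pi_i[g^\ast]$ and the denominator $\sum_i w_i(\bar F)\,\bar\1^\ast_i \to \sum_i z_i\,\pi_i[\1^\ast]$ almost surely. The denominator limit is strictly positive because $\1^\ast = 1/\sum_k \psi_k > 0$ on $\Omega$, each $z_i \ge 0$ with $\sum_i z_i = 1$, so the ratio is continuous at the limit; hence $\pi_\emus[g] \to \pi[g]$ by the decomposition~\eqref{eq: final decomposition formula}. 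The main obstacle I anticipate is the continuity of $w$ at $F$ and the accompanying claim that $\bar F$ inherits irreducibility for large $N$: one must argue that entrywise convergence to an irreducible limit eventually places $\bar F$ in the irreducible regime, which follows from the second part of Lemma~\ref{lem: irreducibility condition} once enough sample points have populated the strata, and that the normalization constraint is preserved so the limiting vector is genuinely $w(F)$ rather than some other left null vector.
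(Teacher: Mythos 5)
Your proposal is correct and follows essentially the same route as the paper: almost sure convergence of the trajectory averages by ergodicity, continuity of the stationary-distribution map $w$ at the irreducible matrix $F$, and the continuous mapping theorem applied to the ratio formula~\eqref{eq: final decomposition formula}. The only difference is cosmetic: where you justify continuity of $w$ directly via Cramer's rule on an invertible $(L-1)\times(L-1)$ minor of $I-G$, the paper instead invokes Lemma~\ref{lem: differentiability of z}, which provides a differentiable (hence continuous) extension of $w$ to all of $\Real^{L\times L}$ and thereby also disposes of the technicality that $\bar F$ may be reducible for finite $N$.
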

\begin{proof}
Since the processes $X^i_t$ are ergodic,
\begin{equation}\label{eq: convergence of F and gi}
\bar F \convas F, \text{ } \bar g^\ast_i \convas g^\ast_i, \text{ and } \bar \1^\ast_i \convas \1^\ast_i \text{ as } N \rightarrow \infty.
\end{equation}
Moreover, by Lemma~\ref{lem: differentiability of z} in Appendix~\ref{apx: proof of clt}, $w(G)$ is continuous at $F$.
(Technically, $w(G)$ admits an extension to the set of all $L \times L$ matrices, which is continuous at $F$.)
Therefore, as a function of $\bar F$, $\bar g^\ast_i$, and $\bar \1^\ast_i$, $\pi_\emus[g]$ is continuous at $F$, $\pi_i[g^\ast]$, and $\pi_i [\1^\ast]$.
It follows by the continuous mapping theorem and equation~\eqref{eq: final decomposition formula} that ${\pi_\emus[g] \convas \pi[g]}$.
\end{proof}

\subsection{Iterative EMUS and Vardi's Estimator for Selection Bias Models}
\label{sec: iterative EMUS}

EMUS resembles certain methods for computing normalization constants of families of probability densities \cite{Geyer1994,Vardi:EmpDistSelectionBias1985,meng1996simulating,Kong2003}.
The resemblance arises because the weights in the third step of EMUS are the normalization constants of the biased distributions.
These methods have been used, for example, to compute Bayes factors in model selection problems~\cite{Geyer1994} and for computations related to selection bias models~\cite{Vardi:EmpDistSelectionBias1985}.
In this section, we explain how EMUS relates to Vardi's estimator for selection bias models~\cite{Vardi:EmpDistSelectionBias1985} and its descendants such as the popular Multistate Bennett Acceptance Ratio (MBAR) method~\cite{shirts2008statistically}.
In addition to comparing EMUS with these methods, we review a method, iterative EMUS, for solving the nonlinear system of equations defining Vardi's estimator~\cite{Thiede2016}.
The first iterate of this method is exactly the EMUS estimator described in Section \ref{sec:emus}. We note that our analysis and calculations in Sections~\ref{sec: error analysis},~\ref{sec: limiting results}, and~\ref{sec: numerical results} pertain only to the EMUS method and not iterative EMUS or MBAR. However, the similarity between the methods suggests that our results may generalize.


Given the notation developed in Section \ref{sec:emus}, we can express Vardi's estimate $\zv$ of the weight vector as the vector with entries $\zv_i = u_i\,N_i$ where $u$ solves the equation
\begin{equation}\label{eq: vardi estimating equation}
N_j = \sum_{i=1}^L  N_i\,  \bar F_{ij}(u),  \quad\sum_{i=1}^L u_i\,N_i = 1
\end{equation}
and where we have now made the dependence of the matrix $\bar F$ in \eqref{eq:barg1F} 
on the choice of  $u \in \Real^L$ explicit.
By \cite[Theorem~1]{Vardi:EmpDistSelectionBias1985}, this nonlinear equation determines  $z^V$ uniquely  whenever the irreducibility criterion of Lemma~\ref{lem: irreducibility condition} holds.

Vardi's estimator was originally derived assuming that the samples $X^i_t$ from the biased distributions were i.i.d. 
In that case, it is the nonparametric maximum likelihood estimator of the target distribution $\pi$ given samples from the biased distributions $\pi_i$~\cite{Vardi:EmpDistSelectionBias1985}, and it has certain optimality properties~\cite{gill1988}.
Several adjustments to the estimator have been proposed for the case of samples from Markov processes.
In the Multistate Bennett Acceptance Ratio (MBAR) method, one replaces the factors $N_i$ appearing in the summand in~\eqref{eq:barg1F} with \emph{effective sample sizes} $n_i$, which are computed from estimates of the integrated autocovariance of a family of functions~\cite{shirts2008statistically}.
(In some versions of MBAR, the sample average over all $N_i$ points is replaced with a sample average over the $n_i$ points obtained by including only every $N_i /n_i$'th point along the trajectory $X^i_t$.)
Another recent work proposes different effective sample sizes computed by minimizing an estimate of the asymptotic variance of the estimator~\cite{doss2014estimates}.
In fact, the estimator is consistent with $N_i$ replaced by any fixed positive number~\cite{doss2014estimates}.
We have found that our numerical results do not depend sensitively on the choice of effective sample size, so we use $N_i$ for simplicity.

We now review iterative EMUS, which we introduced in~\cite{Thiede2016}.
Iterative EMUS may be understood as a fixed point iteration for solving equation~\eqref{eq: vardi estimating equation}.
The iteration proceeds as follows:
\begin{enumerate}
 \item As an initial guess for $z^V$, choose a positive vector $z^0 \in \Real^L$. 
 Set $m=0$. Choose a tolerance $\tau >0$.
 \item Compute $\bar F_{ij}(u)$ for $u_i = z^m_i  / N_i$. Solve the eigenvector equation 
 \begin{equation}\label{eq: iterative emus eigenvector equation}
  w_j= \sum_{i=1}^L w_i\, \bar F_{ij}(u),\quad \sum_{i=1}^L w_i = 1
 \end{equation} 
 for $w \in \mathbb{R}^L$ 
to obtain the  updated estimate $z^{m+1}$ of $\zv$ with entries 
\[
z^{m+1}_i = \frac{u_i\, w_i}{\sum_{k=1}^L u_k w_k}.
\]
 \item If $\max_{i} \left\lvert w_i-\frac{N_i}{N}\right\rvert >\tau$, then increment $m$ and repeat step~2.
\end{enumerate}

In \cite{Thiede2016}  we show that the eigenvector equation~\eqref{eq: iterative emus eigenvector equation} has a unique solution for every $m$, and we suggest a numerical method for finding the solution.
We also discuss the convergence of iterative EMUS, and we show that for every fixed $m$, $z^m$ is a consistent estimator of the weight vector $z$.
If one chooses $z^0_i = N_i/N$, then $z^1$ is the EMUS estimate of $z$, $w(\bar F)$.

\subsection{Related MCMC methods}
EMUS belongs to a large class of MCMC methods that by various mechanisms promote a more uniform sampling of space. 
For example, in parallel tempering \cite{swendsen1986replica,Geyer1991paralleltemp}, one uses MCMC samples drawn from a distribution or sequence of distributions close to the uniform distribution to speed sampling of the target distribution.
The bias introduced by the choice of distributions is corrected either by reweighting the samples or by a replica exchange strategy~\cite{Geyer1991paralleltemp}.
The Wang--Landau~\cite{wang2001efficient} and Metadynamics methods~\cite{laio2002escaping} adaptively construct a biased distribution to achieve uniform sampling in certain coordinates. 
The temperature accelerated molecular dynamics method~\cite{maragliano2006TAMD} is also designed to achieve uniform sampling in a given coordinate, but it works by entirely different means.
In EMUS and other stratified MCMC methods, one achieves more uniform sampling by ensuring that each stratum contains points from at least one MCMC simulation.  EMUS is perhaps most similar in spirit to the parallel Markov chain Monte Carlo method~\cite{vanderwerken2013parallel}.

\section{Error Analysis of EMUS}
\label{sec: error analysis}
Here, we develop tools for analyzing the error of EMUS. 
First, in Section~\ref{sec: sensitivity of EMUS}, we prove a CLT for EMUS, and we derive a convenient upper bound on the asymptotic variance.
Then, in Section~\ref{sec: quantifying sampling error}, we analyze the dependence of the asymptotic variance of EMUS on the choice of biased distributions. 
We use these tools in Section~\ref{sec: limiting results} to prove limiting results demonstrating the advantages of EMUS for multimodal distributions and tail probabilities. In addition, our CLT for EMUS is the basis for both practical error estimates (Section~\ref{sec: numerical results} and Appendix~\ref{sec:error-bars}) and also a method of optimizing the allocation $\kappa_i$ of the samples among the different biased distributions~\cite{Thiede2016}.

\subsection{A CLT for EMUS and an Estimate of the Asymptotic Variance} \label{sec: sensitivity of EMUS}  
In this section, we prove a Central Limit Theorem (CLT) for EMUS, and we derive an upper bound on the asymptotic variance $\sigma^2_\emus(g)$ of $\pi_\emus[g]$.
To prove the CLT for EMUS, we must assume that a CLT holds for trajectory averages over the biased processes:
\begin{assumption}\label{asm: CLT for components}
For any matrix $H$, let  $H_{i:}$ denote the $i$'th row of $H$.
Define $\bar G \in \Real^{L \times 2}$ by 
\begin{equation*}
 \bar G_{i:} = (\bar{g}^\ast_i, \bar \1^\ast_i).
\end{equation*}
Assume that
\begin{equation}\label{eq: CLT for ith components}
  \sqrt{N_i} \left ( \left ( \bar{F}_{i:}, \bar{G}_{i:} \right ) - (F_{i:}, \pi_i[g^\ast], \pi_i[\1^\ast] ) \right )
  \convdist \N(0, \Sigma_i)
\end{equation}
for some asymptotic covariance matrix $\Sigma_i \in \Real^{(L+2) \times (L+2)}$ of the form
\begin{equation}\label{eq: form of ith covariance matrix}
  \Sigma_i 
  \ignore{=\begin{pmatrix}
  \acov \left (\bar{F}_{i:}\right ) & \acov \left (\bar{F}_{i:}, \bar G_{i:} \right)  \\
  \acov \left ( \bar G_{i:}, \bar{F}_{i:} \right ) & \acov \left (\bar G_{i:} \right )
  \end{pmatrix}}
  = \begin{pmatrix}
    \sigma^i & \rho_i \\
    \rho_i^\t & \tau_i
  \end{pmatrix},
\end{equation}
where $\sigma^i \in \Real^{L \times L}$ denotes the asymptotic covariance of $\bar{F}_{i:}$ with itself, $\rho_i \in \Real^{L \times 2}$ denotes the asymptotic covariance of $\bar{F}_{i:}$ with $\bar{G}_{i:}$, and $\tau_i \in \Real^{2 \times 2}$ denotes the asymptotic covariance of $\bar{G}_{i:}$ with itself. 
\end{assumption}

We expect a CLT to hold for most MCMC methods, target distributions, and target functions of interest in statistics and statistical mechanics. 
We refer to~\cite{roberts1997geometric} for a comprehensive review of conditions guaranteeing a CLT. 
In Theorem~\ref{thm: ergodicity and estimates of iat} of Section~\ref{sec: quantifying sampling error}, we prove a CLT and an estimate of the asymptotic variance for a simple family of processes which one might use to sample the biased distributions in an application of EMUS.

We now prove a CLT for $\pi_\emus[g]$, and we give a formula expressing the asymptotic variance $\sigma^2_\emus(g)$ of $\pi_\emus[g]$ in terms of the asymptotic variances $\Sigma_i$ of the trajectory averages. 
In this formula, $(I-F)^\#$ denotes the group generalized inverse of $I-F$; the group inverse $A^\#$ of a matrix $A$ is characterized by the properties  
\begin{align*}
A A^\# A = A \text{, } A^\# A A^\# = A^\# \text{, and } A A^\# = A^\# A.
\end{align*}
We refer to~\cite{GolMey:ComputingInvDist} for a detailed explanation of the properties of the group inverse, a proof that $(I-F)^\#$ exists whenever $F$ is stochastic and irreducible, and an algorithm for computing $(I-F)^\#$.

In Theorem~\ref{thm: CLT for EMUS} and below we impose the following assumption:
\begin{assumption}\label{asm: independent processes}
  The processes $X^i_t$ sampling the biased distributions are independent.
\end{assumption}

This assumption does not hold for all stratified MCMC methods. For example, in replica exchange umbrella sampling one periodically allows configuration exchanges between neighboring processes; see~\cite{liu2001monte} for a general discussion of replica exchange strategies and~\cite{sugita_multidimensional_2000} for an application of replica exchange in a method similar to EMUS. The result is a single process taking values in $\Real^{L \times d}$ and sampling the product distribution $\Pi(x_1,x_2, \dots, x_L) = \pi_1(x_1)\pi_2(x_2) \dots \pi_1(x_L)$. In this case, a CLT would still hold for EMUS, but the asymptotic variance would take a different form.  We also assume that $u_i=1$ for all $i$.  As already mentioned, we can equivalently absorb the choice of $u_i$ into the choice of the bias function $\psi_i$.

\begin{theorem}\label{thm: CLT for EMUS}
Let Assumptions~\ref{asm: defn of kappa},~\ref{asm: CLT for components}, and~\ref{asm: independent processes} and the irreducibility condition~\eqref{eq: irreducibility condition} hold. 
Let $g$ be square integrable over $\pi$, so $\pi[g^2] < \infty$.
Defining
\begin{equation*}
\Psi = \frac{1}{\sum_{i=1}^L \pi[\psi_i]}\quad\text{and}\quad
 \ell := \Psi 
 \left ( 1, -\pi[g] \right )^\t \in \Real^2.
\end{equation*}
Let $\mathfrak{g} \in \Real^L$ be the vector with $\mathfrak{g}_i := \ell \cdot (\pi_i[g^\ast], \pi_i[\1^\ast])$.
We have
\begin{equation}
  \sqrt{N} \left( \pi_\emus[g]  - \pi[g]\right )
  \convdist \N(0, \sigma^2_\emus(g)),
\end{equation}
where
\begin{align}\label{eq: formula for asymptotic variance}
  \sigma^2_\emus (g) = \sum_{i = 1}^L \frac{z_i^2}{\kappa_i} \big \{
  &(I-F)^\# \mathfrak{g}  \cdot \sigma^i (I-F)^\# \mathfrak{g} 
  + 2 (I-F)^\# \mathfrak{g}  \cdot \rho_i \ell + \ell^\t \tau_i \ell \big \}.
\end{align}
\ignore{(We note that $\rho_i \ell$ and $\ell^\t \tau_i \ell$ are asymptotic covariances of averages over the biased processes; we have 
\begin{equation*}
 \rho_i \ell = \acov(\bar{F}_{i:}, \bar h_i) \text{ and }  
 \ell^\t \tau_i \ell = \acov(\bar h_i),
\end{equation*}
where $h = \ell \cdot (g^\ast, \1^\ast)$, $\acov(\bar{F}_{i:}, \bar h_i)$ is the asymptotic variance of $\bar{F}_{i:}$ with $\bar h_i$, and $\acov(\bar h_i)$ is the asymptotic variance of $\bar h_i$.)
}
\end{theorem}

\begin{proof}
 The result follows using the delta method and a formula expressing $w'(F)$ in terms of $(I-F)^\#$; we give the details in Appendix~\ref{apx: proof of clt}.
\end{proof}

We now derive a convenient upper bound on the asymptotic variance $\sigma^2_\emus(g)$.
In Section~\ref{sec: limiting results}, we use this bound to analyze the efficiency of EMUS in the low-temperature limit and in the limit of small tail probabilities. 
Our bound is based on the probability $\P_i[t_j < t_i]$ defined below:
\begin{definition}\label{def: definitions of pij}
Let $Y_n$ be the Markov chain with state space $\{1,2, \dots, L\}$ and transition matrix $F$.
Let ${\P_i[t_j < t_i]}$ denote the probability that $Y_n$ hits $j$ before returning to $i$, conditioned on $Y_0 = i$.
\end{definition}

\begin{theorem}\label{thm: upper bound on asymptotic variance}
Let Assumptions~\ref{asm: defn of kappa},~\ref{asm: CLT for components}, and~\ref{asm: independent processes} and the irreducibility condition~\eqref{eq: irreducibility condition} hold.
  Let $g$ be square integrable over $\pi$, so $\pi[g^2] < \infty$. Let $\sigma^2(g)$ be the asymptotic variance of $\pi_\emus[g]$, and for any measure $\nu$ and function $f$ let $\var_\nu(f)$ be the variance of $f$ over $\nu$.
  Define the function
  \begin{equation*}
    h = g^\ast - \pi[g] \1^\ast,
  \end{equation*}
  and let 
  $
  \acov(\bar{h}_i) 
  $
  be the asymptotic variance of the trajectory average of $h$ over the biased process $X^i_t$.
  We have
  \begin{align}
    \sigma^2_\emus(g) 
    \leq 
    2\sum_{i=1}^L \frac{1}{\kappa_i}
    \Bigg \{
    &z_i^2 \Psi^2 \acov(\bar{h}_i) 
        + \tr(R^i)\pi[\lvert h \rvert ]^2  
      \sum_{\substack{j \neq i \\ F_{ij} >0}}
    \frac{\var_{\pi_i}(\psi_j^\ast)}{\P_{i}[t_j < t_i]^2}
    \Bigg \},
    \label{eq:upper bound on asymptotic variance}
  \end{align}
  where $R^i \in \Real^{L \times L}$ with 
  \begin{equation*}
    R^i_{jk}:= \frac{\sigma^i_{jk}}{\sqrt{\var_{\pi_i}(\psi_j^\ast)}\sqrt{\var_{\pi_i}(\psi_k^\ast)}}.
  \end{equation*}
\end{theorem}

\begin{proof}
 The result follows from Theorem~\ref{thm: CLT for EMUS}, using the perturbation bounds which we derived in~\cite{ThVKWe:Perturbation}. Details appear in Appendix~\ref{apx: proof of clt}.
\end{proof}

\subsection{Dependence of the Asymptotic Variance on the Choice of Strata} \label{sec: quantifying sampling error}

In this section, we consider how the choice of strata influences the factors in the upper bound~\eqref{eq:upper bound on asymptotic variance} on $\sigma^2_\emus(g)$. Roughly, the asymptotic variances $C(\bar h_i)$ and $\tr(R^i)$ characterize the sampling error, and for each $i$ the factor
  \begin{equation}\label{eq: key factor in section on relative variances}
   \sum_{\substack{j \neq i \\ F_{ij} >0}} \frac{\var_{\pi_i}(\psi_j^\ast)}{\P_{i}[t_j < t_i]^2}
\end{equation}
  measures the sensitivity of the EMUS estimator to sampling errors associated with $\pi_i$.

  We show in Section~\ref{subsec: rate of convergence} that the factors $C(\bar h_i)$ and $\tr(R^i)$ characterizing the sampling error may be controlled by decreasing the diameters of the strata.
  We show in Section~\ref{subsec: size of entries} that that $\P_{i}[t_j < t_i]$ may be controlled by ensuring sufficient overlap between neighboring strata. This last observation leads to a practical condition guiding the choice of strata; see Remark~\ref{rem: remark on size of entries in F and choice of strata} and~\eqref{eqn: size of sub and superdiagonal entries in calculations}.

Our theorems in this section apply to a specific class of strata and Markov processes that are broadly representative of those employed in practical applications. Thus, the assumptions made here are much stronger than those made in proving the CLT for EMUS, for example. We discuss how our results might extend to more general implementations of stratified MCMC after the statement of Theorem~\ref{thm: ergodicity and estimates of iat} in Section~\ref{subsec: rate of convergence} 
 and in Remark~\ref{rem: remark on size of entries in F and choice of strata}.
  
  \subsubsection{Asymptotic Variances of MCMC Averages}\label{subsec: rate of convergence}
 Here, we consider the effect of the choice of strata on the asymptotic variances $C(\bar h_i)$ and $\tr(R^i)$.
Because such a diverse variety of biased processes and distributions could in principle be used, it is futile in our opinion to try for a completely general result. Instead, motivated by the efficiency analysis undertaken in Section~\ref{sec: limiting results}, we introduce a simple parametric family of bias functions, and for this family we state Assumption~\ref{asm: dependence of iat on strata} relating the diameters of the strata with the asymptotic variances.  In Theorem~\ref{thm: ergodicity and estimates of iat}, we verify Assumption~\ref{asm: dependence of iat on strata} for one representative class of biased processes. Finally, at the end of this section, we explain why we expect the assumption to hold for other choices of biased processes and distributions.

Consider the following representative class of bias functions:
Given a family of sets ${\{U_i : i = 1, \dots, L\}}$ with ${\cup_{i=1}^L U_i = \Omega}$, define  
\begin{equation}\label{eq: pw constant bias functions}
 \psi_i := \1_{U_i} \text{ and } \pi_i (dx) := \frac{\1_{U_i}(x) \pi(dx)}{\pi[\1_{U_i}]} \text{ for } i = 1, \dots, L,
\end{equation}
where $\1_{U_i}$ denotes the characteristic function of $U_i$.
Assume that the sets $U_i$ are chosen so that the irreducibility criterion of Lemma~\ref{lem: irreducibility condition} holds. 
For example, suppose that $\Omega = [0,1]^d$ is the $d$-dimensional unit cube.
One might choose $K \in \mathbb{N}$, set $h := 1/K$, and define 
\begin{equation}\label{eq: uniform grid of strata}
 U_\iidx = (h[-1,1]^d + h\iidx) \cap \Omega \text{ for } \iidx \in \{0, 1, \dots, K\}^d,
\end{equation}
covering $\Omega$ uniformly by a grid of strata having diameters proportional to $h$.
We use this uniform grid as a device when analyzing the effect of the stratum size on the efficiency of EMUS in Section~\ref{sec: limiting results}.
However, while such a na\"ive choice may suffice for small $d$, it is not practical for large $d$.
We discuss appropriate bias functions for high-dimensional problems later in this section and again in Section~\ref{subsec: EMUS marginals}.

Since we wish to study grids like~\eqref{eq: uniform grid of strata} as $h=1/K$ varies, we state our assumption on asymptotic variances in terms of the following parametric family of strata: Let $x_0 \in \Omega$, and let $Z \subset \Real^d$ be a bounded set containing $0$.
For each $h > 0$, define a stratum and a biased distribution by
\begin{equation}\label{eq: Z form of bias functions}
  Z_h = x_0 + hZ 
  \text{ and } \pi_h(dx) = \frac{ \1_{Z_h} (x) \pi(dx) }{\pi[\1_{Z_h}]}.
\end{equation}
To make the connection with~\eqref{eq: uniform grid of strata}, one should imagine that $Z=[-1,1]^d$ and that $x_0$ is one of the grid points $h \mathbf{i}$.

Assumption~\ref{asm: dependence of iat on strata} characterizes the dependence of the asymptotic variance of MCMC averages over $\pi_h$ on the parameter $h$:

\begin{assumption}\label{asm: dependence of iat on strata}
 Assume that $f:\Omega \rightarrow \Real$ has finite variance $\var_{h}(f)$ over $\pi_h$, and define $\sigma^2_h(f)$ to be the asymptotic variance of an MCMC trajectory average approximating $\pi_h[f]$. Write
\begin{equation*}
 \pi(x) = \frac{\exp(-\beta V(x))}{\int \exp(-\beta V(y)) \, dy},
\end{equation*}
for some \emph{potential} $V : \Omega \rightarrow \Real$ and \emph{inverse temperature} $\beta >0$. 
  We assume
  \begin{align*}
    \frac{\sigma^2_h(f)}{\var_{\pi_h}(f)} &\leq C h^a \beta^b \exp \left (\beta \left ( \max_{Z_h} V - \min_{Z_h} V \right ) \right ) 
        \leq  C h^a \beta^b \exp \left (\beta h \diam(Z) \lVert \nabla V \rVert_\infty \right ) 
  \end{align*}
  for some $C,a,b \geq 0$ independent of $h$,$Z$, and $f$. 
\end{assumption}

To motivate Assumption~\ref{asm: dependence of iat on strata}, we prove that a special case holds for a representative class of processes sampling the biased distributions, cf.\@ Theorem~\ref{thm: ergodicity and estimates of iat}.
Assume that the potential $V$ appearing in the assumption is continuously differentiable.
Let $Z \subset \Real^d$ be either a convex polyhedron or a set with $C^3$ boundary.\footnote{The boundary of a set is $C^3$ if in a neighborhood of each point on the boundary, the boundary is the graph of a three times continuously differentiable function.}
Now let $X^h_t$ be the overdamped Langevin process with reflecting boundary conditions on $Z_h$.
This process is defined by the Fokker--Planck equation
\begin{alignat}{3}
  &\frac{\partial u}{\partial t}(x,t) = \diver (\beta^{-1} \nabla u(x,t) + u(x,t) \nabla V(x) ) 
  &&\text{ for } x \in U, t>0, \nonumber \\
  &(\beta^{-1} \nabla u(x,t) + u(x,t) \nabla V(x) )\cdot \mathbf{n}(x)=0 
  &&\text{ for } x \in \partial Z_h, t \geq 0, \text{ and }
  \label{eq: FKE for simple process}\\
  &u(x,0)=p(x) 
  &&\text{ for } x \in Z_h, \nonumber
\end{alignat}
where $\beta$ and $V$ are the inverse temperature and potential defined in Assumption~\ref{asm: dependence of iat on strata} and $\mathbf{n}(x)$ denotes the inward unit normal to $\partial Z_h$ at $x$.
That is, $X^h_t$ is the unique Markov process so that if $X^h_0$ has density $p(x)$, then $X^h_t$ has density $u(x,t)$.
The existence of the reflected process is established in~\cite{wang2013gradient,andres2009pathwise} when $Z$ is a convex polyhedron and in~\cite[Chapter~8]{ethier1986markov} when $Z$ has $C^3$ boundary.
A simple introduction to the reflected process and its properties appears in~\cite[Chapter~4]{pavliotis2014stochastic}.
We show in Theorem~\ref{thm: ergodicity and estimates of iat} that $X^{h}_t$ is ergodic for $\pi_h$, at least when $Z$ is bounded.   The reflected process $X^h_t$ shares many features with the processes used in practical stratified MCMC methods.  In particular, it is closely related to the (unreflected) overdamped Langevin process $Y_t$ \cite{RoDoFr78,RobTwe}.


\ignore{
However, there are some significant differences between the reflected process and practical methods: First, in computational chemistry, one typically chooses Gaussian bias functions instead of the piecewise constant bias functions of~\eqref{eq: pw constant bias functions}.
Second, one cannot compute trajectories of continuous time processes such as $X^h_t$, so one uses discretizations of these processes. 
Third, for high-dimensional problems, one typically stratifies only a certain low-dimensional \emph{reaction coordinate} or \emph{collective variable}.
We discuss these differences and their consequences in more detail after the statement of Theorem~\ref{thm: ergodicity and estimates of iat} below.
}

We now verify Assumption~\ref{asm: dependence of iat on strata} for the reflected process: 

\begin{theorem}\label{thm: ergodicity and estimates of iat}
 Assume that $f:\Omega \rightarrow \Real$ has finite variance $\var_{h}(f)$ over $\pi_h$. Let $Z$ either have $C^3$ boundary or be convex. Assume that $V$ is continuously differentiable.
 Suppose that $X^h_t$ is stationary; that is, $X^h_0$ has distribution $\pi_h$.
 Let $\bar f^h := \frac{1}{T} \int_{t=0}^T f(X^h_t) \, dt$ be the continuous time trajectory average of $f$.
 We have  
 \begin{equation*}
  \sqrt{T} (\bar f^h - \pi_h[f]) \convdist \N (0, \sigma^2_h(f)), 
 \end{equation*}
 where 
  \begin{equation}\label{eq: estimate of asymptotic variance for reflected diffusion}
    \sigma^2_h(f) \leq \Lambda h^2 \beta \exp \left (\beta \left ( \max_{Z_h} V - \min_{Z_h} V \right ) \right ) \var_{h}(f).
  \end{equation}
  The constant $\Lambda$ depends only on $Z$, not on $h$, $\beta$, $V$, or $f$.
\end{theorem}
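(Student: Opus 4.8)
The plan is to reduce the entire statement to a Poincaré (spectral gap) inequality for the measure $\pi_h$ on $Z_h$. The reflected process $X^h_t$ is reversible with respect to $\pi_h$, and its generator $\Lap = \beta^{-1}\Delta - \nabla V \cdot \nabla$, equipped with the reflecting (Neumann) boundary conditions on $\partial Z_h$, is a nonpositive self-adjoint operator on $L^2(\pi_h)$ with Dirichlet form $\langle f, -\Lap f\rangle_{\pi_h} = \beta^{-1}\int_{Z_h}|\nabla f|^2\,d\pi_h$. Once a spectral gap $\lambda > 0$ is established, ergodicity and the central limit theorem both follow from standard results on additive functionals of reversible Markov processes (Kipnis--Varadhan), with asymptotic variance $\sigma^2_h(f) = 2\langle g, (-\Lap)^{-1}g\rangle_{\pi_h}$, where $g = f - \pi_h[f]$. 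Bounding $(-\Lap)^{-1}$ on the orthogonal complement of the constants by $1/\lambda$ then gives $\sigma^2_h(f) \leq (2/\lambda)\var_h(f)$.

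It remains to bound $\lambda$ from below. Introducing the Poincaré constant $C_P(\pi_h)$ through $\var_{\pi_h}(f) \leq C_P(\pi_h)\int_{Z_h}|\nabla f|^2\,d\pi_h$, the variational characterization of $\lambda$ yields $\lambda \geq 1/(\beta C_P(\pi_h))$, and hence $\sigma^2_h(f) \leq 2\beta C_P(\pi_h)\var_h(f)$. The whole estimate therefore comes down to showing that $C_P(\pi_h) \leq \Lambda' h^2 \exp(\beta(\max_{Z_h}V - \min_{Z_h}V))$ for some $\Lambda'$ depending only on $Z$.

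I would obtain this in two steps. First, a Holley--Stroock perturbation comparison between $\pi_h$ and the uniform measure $\mu_h$ on $Z_h$: since $d\pi_h/d\mu_h \propto \exp(-\beta V)$ has log-oscillation $\beta(\max_{Z_h}V - \min_{Z_h}V)$, bounding the relative density above and below gives $C_P(\pi_h) \leq \exp(\beta(\max_{Z_h}V - \min_{Z_h}V))\,C_P(\mu_h)$. Second, a scaling argument: because $Z_h = x_0 + hZ$ is a dilation and translation of the fixed set $Z$, the Neumann Poincaré constant scales quadratically in $h$, so that $C_P(\mu_h) = h^2 C_P(\mu_Z)$, where $\mu_Z$ is uniform measure on $Z$ and $C_P(\mu_Z)$ depends only on $Z$. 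Setting $\Lambda = 2C_P(\mu_Z)$ and combining the two bounds produces the claimed inequality.

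The main obstacle is the functional-analytic groundwork rather than the perturbation and scaling estimates, which are routine dimensional analysis. One must justify that the reflected generator with the stated boundary conditions is self-adjoint on $L^2(\pi_h)$ and, crucially, that the uniform measure $\mu_Z$ on $Z$ satisfies a finite Poincaré inequality. This is exactly where the hypothesis that $Z$ is either a convex polyhedron or has $C^3$ boundary enters: it guarantees that $Z$ is a bounded Sobolev (extension) domain on which the Neumann Laplacian has a strictly positive first nonzero eigenvalue. With the Poincaré inequality for $\mu_Z$ in hand, the spectral gap for $\pi_h$, ergodicity, and the central limit theorem all follow, completing the proof.
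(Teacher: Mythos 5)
Your proposal follows essentially the same route as the paper's proof in Appendix~\ref{apx: estimate of integrated autocovariance}: the Kipnis--Varadhan CLT for reversible processes with $\sigma^2_h(f)$ expressed through $(-\Lap)^{-1}$, a spectral-gap bound obtained from a Poincar\'e inequality for $\pi_h$, and that inequality derived by combining a sup/inf (Holley--Stroock type) comparison of $\pi_h$ with the flat measure on $Z_h$ with the $h^2$ scaling of the Neumann Poincar\'e constant of the reference domain $Z$. The only ingredient you gesture at rather than carry out --- the functional-analytic verification that the reflected generator is self-adjoint and the semigroup strongly continuous so that the Kipnis--Varadhan framework applies --- is exactly the part the paper spells out via integration by parts and a Feller/density argument, so the two proofs agree in substance.
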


\begin{proof} See Appendix~\ref{apx: estimate of integrated autocovariance}.
\end{proof}

\ignore{
We remark that~\eqref{eq: asymptotic variance in terms of generator} is a version of the usual formula equating the asymptotic variance and the integrated autocovariance:
In fact, for  
\begin{equation*}
C(t) = \int_{Z_h} \E_x[g(X^h_t)- \pi_h[g]] (g(x) - \pi_h[g]) \pi_h(x) \, dx
\end{equation*}
the autocovariance of $X^h_t$, we have 
\begin{equation*}
-\pi_h [ (g-\pi_h[g]) L^{-1} (g-\pi_h[g])] = \int_{t=0}^\infty C(t) \, dt;
\end{equation*}
see~\cite{kipnis1986clt} for details. }


  

There are three major differences between the choice of strata and sampling scheme specified in this subsection and those typical of practical applications:  First, in molecular simulations, one typically chooses Gaussian bias functions instead of piecewise constant.
Second, practical methods must be discrete in time, e.g.,\@ one might use a discretization of the continuous time process $X^h_t$. 
Third, for high-dimensional problems, one typically stratifies only a certain low-dimensional \emph{reaction coordinate} or \emph{collective variable}.

In the first case, for Gaussian bias functions, a version of Theorem~\ref{thm: ergodicity and estimates of iat} holds with minor adjustments; we omit the exact statement and proof for simplicity.
In the second case, for discretizations of Langevin dynamics, the asymptotic variances of trajectory averages are closely related to the corresponding averages for the continuous time dynamics: In fact, under some conditions on the potential $V$,  
\begin{equation}\label{eq: discrete to continuous asymptotic variance}
 \lim_{\Delta t \rightarrow 0} \Delta t C_{\Delta t}(f) = C(f),
\end{equation}
where $C_{\Delta t}(f)$ is the asymptotic variance of the trajectory average of $f$ for the discretization with time step $\Delta t$ and $C(f)$ is the asymptotic variance for the continuous time process \@~\cite[Section~3.2]{LelievreStoltz:PDEinMD}. For other discrete time processes, we expect Assumption~\ref{asm: dependence of iat on strata} to hold with different exponents $a$ and $b$. For example, the affine invariance property of the affine invariant ensemble sampler~\cite{goodman2010ensemble} suggests $a=0$. 

The third case is subtle. When $d$ is large, one typically stratifies only in a function ${\theta: \Omega \subset \Real^d \rightarrow \Real^\ell}$ with $\ell$ much smaller than $d$. 
To be precise, one might choose a uniform grid of nonnegative functions $\eta_i:\Real^\ell \rightarrow \Real$ defined as in~\eqref{eq: uniform grid of strata}, but with supports covering $\theta(\Omega) \subset \Real^\ell$ instead of $\Omega \subset \Real^d$.
One would then define the bias functions
\begin{equation}\label{eq: stratifying in theta}
 \psi_i(x) := \eta_i(\theta(x)).
\end{equation}
(We make a similar choice in our calculations in Section~\ref{sec: numerical results}, cf.\@ the natural stratification~\eqref{eq: biased dist in projection case}.)
For a clever choice of $\theta$, these biased distribution may be much easier to sample than the target distribution. 
For example, suppose that the marginal $\pi_\theta$ of $\pi$ in $\theta$ were multimodal, but that the conditional distributions $\pi(\cdot \mid \theta = \theta_0)$ were unimodal or otherwise easy to sample for each fixed $\theta_0$.
In that case, for $h$ sufficiently small, each biased distribution would be unimodal, hence easy to sample. 
(Recall that $h$ sets the diameters of the strata for the grid of bias functions defined in~\eqref{eq: uniform grid of strata}, so $h$ small means that the diameter of the support of $\eta_i$ is small.)
In free energy calculations, it is often possible to choose such a $\theta$ based on intuition or scientific principles; see~\cite{Thiede2016,lelievre2010free} for discussion. When computing tails or marginals, the problem itself typically suggests a particular $\theta$; cf.\@ the \emph{natural stratification} in Section~\ref{subsec: EMUS marginals}.

The reader will notice that bias functions of the form~\eqref{eq: stratifying in theta} will typically have infinite support, rendering the bound in Assumption~\ref{asm: dependence of iat on strata} useless. In this case, one might hope for a similar bound with the potential function $V$ replaced by the free energy
\begin{equation*}
  F(\theta) := -\beta^{-1} \log( \pi_\theta (\theta) ),
\end{equation*}
where $\pi_\theta$ is the marginal density of $\pi$ in $\theta$.  Roughly, this replacement will be valid when, under the dynamics of the MCMC processes sampling $\pi$, the distribution of any variable (any function of the process) converges rapidly to its conditional distribution under $\pi$ given the current value of the $\theta$ variable. This will occur, for example, when the marginal in $\theta$ is multimodal or otherwise difficult to sample, but the conditional distributions are easy to sample. 
In general an effective choice of $\theta$ will be one for which conditional equilibration given $\theta$ occurs much more rapidly than the overall time to convergence of the process. More on the effective dynamics of low-dimensional variables can be found in \cite{pavliotis2008multiscale} or \cite{legoll2010conditional}.

\oldstuff{
In the first two cases, we expect the conclusions of Theorem~\ref{thm: ergodicity and estimates of iat} to hold, perhaps with minor adjustments. 
The third case is more complicated.
These practical considerations motivate the flexibility built into Assumption~\ref{asm: dependence of iat on strata} in the form of the coefficients $a$ and $b$.

We discuss the Gaussian bias functions first. 
Let $\kappa >0$ and $x_0 \in \Omega$. 
Define the bias function and biased distibution
\begin{equation*}
 \psi_\kappa (x) 
 = \exp \left (- \frac{\lvert x-x_0 \rvert^2}{2 \kappa^2} \right ) \text{ and }  \pi_\kappa(x) \propto \exp \left ( -\beta V(x) - \frac{\lvert x-x_0 \rvert^2}{2 \kappa^2} \right).
\end{equation*}
To sample such a biased distribution, one might use the (unreflected) overdamped Langevin dynamics 
\begin{equation}\label{eq: unreflected biased Langevin}
 dY^\kappa_t = - \nabla W (Y^\kappa_t) \, dt + \sqrt{2\beta^{-1}} \, dB_t, \text{ where }
 W(x) =  V(x) + \frac{\lvert x-x_0 \rvert^2}{2 \kappa^2 \beta}, 
\end{equation}
is ergodic for $\pi_\kappa$ under some conditions on $V$.
Observe that as $\kappa$ decreases, $\pi_\kappa$ concentrates near $x_0$, so in effect the width of the distribution $\pi_\kappa$ decreases. 
Therefore, one might expect trajectory averages to converge more quickly for smaller $\kappa$.
In fact, one can prove an estimate similar to~\eqref{eq: estimate of asymptotic variance for reflected diffusion}, but with $\kappa$ playing the role of $h$.
We omit the proof for simplicity. 

Next, we discuss discretizations of continuous time processes such as $X^h_t$ and $Y^\kappa_t$.
We note that discretizations of the (unreflected) overdamped Langevin dynamics are commonly used for MCMC in both chemistry and statistics~\cite{roberts1996exponential}.
In some cases, it is possible to relate the asymptotic variance of a trajectory average over the practical discretized process to the asymptotic variance of the same average over the ideal continuous time process:
Typically, one expects that 
\begin{equation}
 \lim_{\Delta t \rightarrow 0} \Delta t C_{\Delta t}(f) = C(f),
\end{equation}
where $C_{\Delta t}(f)$ is the asymptotic variance for the discretization with time step $\Delta t$ and $C(f)$ is the asymptotic variance for the continuous time process \@~\cite[Section~3.2]{LelievreStoltz:PDEinMD}.
We note that~\eqref{eq: discrete to continuous asymptotic variance} holds for some discretizations of the Gaussian biased processes $\pi_\kappa$ discussed above~\cite[Section~3.2]{LelievreStoltz:PDEinMD}.  
However, we have not verified~\eqref{eq: discrete to continuous asymptotic variance} for any discretization of the reflected process.

We also warn the reader that one only expects~\eqref{eq: discrete to continuous asymptotic variance} in the limit as $\Delta t \rightarrow 0$ \emph{with the parameter $h$ (or $\kappa$) fixed}. 
In fact, when sampling distributions such as $\pi_h$ (or $\pi_\kappa$), one might decrease the time step $\Delta t$ with $h$ (or $\kappa$).
This might be necessary, for example, to ensure that proposed steps of the discretized process remain in the support of $\pi_h$ with high probability.
In this case, it is more difficult to understand the behavior of the asymptotic variance of trajectory averages as $h$ decreases.
Estimate~\eqref{eq: estimate of asymptotic variance for reflected diffusion} may suggest that the asymptotic variance should decrease as $h^2$ when $h \rightarrow 0$, but this is too good to be true if $\Delta t$ must decrease with $h$. 

In fact, for some MCMC methods, when the step size is proportional to $h$, the asymptotic variance tends to a constant as $h \rightarrow 0$.
For example, let the potential $V = \1$ be constant, and let $Z$, $Z_h$, and $\pi_h$ be as defined above. 
Now suppose that instead of the reflected Langevin dynamics, we choose the biased process $Q^h_t$ sampling $\pi_h$ to be Metropolis--Hastings with proposals drawn from $\N(0,h^2)$. 
Since we scale the proposal distribution with the diameter of $Z_h$, the asymptotic variances of trajectory averages are constant in the limit of small $h$ instead of decreasing like $h^2$ as in~\eqref{eq: estimate of asymptotic variance for reflected diffusion}.
A similar phenomenon occurs with the affine invariant ensemble sampler~\cite{goodman2010ensemble}:
The affine invariance property of this sampler suggests that the asymptotic variances of trajectory averages are constant in the limit of small $h$. 
We note that modification of the power of $h$ in~\eqref{eq: estimate of asymptotic variance for reflected diffusion} would not affect any of our qualitative conclusions related to the efficiency of the EMUS method in Section~\ref{sec: limiting results}.

Finally, we consider practical choices of bias functions for high-dimensional problems. 
When $d$ is large, one typically stratifies only in a function ${\theta: \Omega \subset \Real^d \rightarrow \Real^\ell}$ with $\ell$ much smaller than $d$. 
To be precise, one might choose a uniform grid of nonnegative functions $\eta_i:\Real^\ell \rightarrow \Real$ defined as in~\eqref{eq: uniform grid of strata}, but with supports covering $\theta(\Omega) \subset \Real^\ell$ instead of $\Omega \subset \Real^d$.
One would then define the bias functions
\begin{equation}\label{eq: stratifying in theta}
 \psi_i(x) := \eta_i(\theta(x)).
\end{equation}
(We make a similar choice in our calculations in Section~\ref{sec: numerical results}, cf.\@ the natural stratification~\eqref{eq: biased dist in projection case}.)
In some cases, biased distributions of this form may be much easier to sample than the target distribution. 
For example, suppose that the marginal $\pi_\theta$ of $\pi$ in $\theta$ were multimodal, but that the conditional distributions $\pi(\cdot \mid \theta = \theta_0)$ were unimodal or otherwise easy to sample for each fixed $\theta_0$.
In that case, for $h$ sufficiently small, each biased distribution would be unimodal, hence easy to sample. 
(Recall that $h$ sets the diameters of the strata for the grid of bias functions defined in~\eqref{eq: uniform grid of strata}, so $h$ small means that the diameter of the support of $\eta_i$ is small.)

\ignore{When stratifying only a low-dimensional function $\theta$, the biased distributions will typically have unbounded support, so the upper bound in Assumption~\ref{asm: dependence of iat on strata} will be infinite.
  However, when the marginal in $\theta$ is multimodal, but the conditionals are easy to sample, useful results similar to Assumption~\ref{asm: dependence of iat on strata} may hold.
  For example, when $\theta$ is a ``slow variable,'' $\theta(X^i_t)$ may reduce approximately to a Markov process of the same form as $X^i_t$.
  Let $X_t$ be the overdamped Langevin dynamics~\eqref{eq: overdamped langevin without reflection}.
  Define the free energy 
  \begin{equation*}
    F(\theta)= -\beta^{-1} \log(\pi_\theta(\theta)),
  \end{equation*}
  where $\pi_\theta$ is the marginal density of $\pi$ in $\theta$.
  Under conditions given in~\cite{legoll2010conditional}, $\theta_t := \theta(X_t)$ approximately solves
  \begin{equation*}
    d\theta_t = -\nabla F(\theta_t) + \sqrt{2 \beta^{-1}} dB_t.
  \end{equation*}
  Thus, when the target function $g$ depends only on $\theta$, we conjecture that Assumption~\ref{asm: dependence of iat on strata} holds, but with $F$ in place of $V$. 
}

The reader will notice that basis functions of the form~\eqref{eq: stratifying in theta} will typically have infinite support, rendering the bound in Assumption~\ref{asm: dependence of iat on strata} useless. In this case, one might hope for a similar bound with the potential function $V$ replaced by the free energy
\begin{equation*}
  F(\theta) := -\beta^{-1} \log( \pi_\theta (\theta) ),
\end{equation*}
where $\pi_\theta$ is the marginal density of $\pi$ in $\theta$.  Roughly, this
replacement will be valid when, for MCMC processes sampling $\pi$, the $\theta$ variables equilibrate very slowly compared to other variables. This will occur, for example, when the marginal in $\theta$ is multimodal or otherwise difficult to sample, but the conditional distributions are easy to sample. More on the effective dynamics of low-dimensional variables can be found in \cite{pavliotis2008multiscale} or \cite{legoll2010conditional}.
}

\ignore{
We note that in applications of stratified MCMC in computational chemistry one often chooses $\theta :\Omega \rightarrow \Real$ to be a \emph{collective variable} or \emph{reaction coordinate}, i.e.,\@ a coordinate indicating the progress of a chemical reaction or other transformation. 
It is usually the case that stratifying a reaction coordinate leads to biased distributions which are easy to sample. 
We refer to~\cite{Thiede2016} for more discussion of this important point. 
See also Section~\ref{sec: numerical results} for a discussion of similar phenomena in the context of a problem arising in Bayesian statistics.
}

\subsubsection{Controlling the Probabilities $\P_{i}[t_j < t_i]$}\label{subsec: size of entries}

Here, we examine the effect of the choice of strata on the factors in display \eqref{eq: key factor in section on relative variances}
that appear in our upper bound~\eqref{eq:upper bound on asymptotic variance} on $\sigma^2_\emus(g)$.

We begin with a lemma estimating ${\var_{\pi_i}(\psi_j^\ast)}/{\P_{i}[t_j < t_i]^2}$ in terms of $F_{ij}$:
\begin{lemma}\label{lem: estimates of sensitivity in terms of overlaps}
 We have $$\frac{\var_{\pi_i}(\psi_j^\ast)}{\P_{i}[t_j < t_i]^2} \leq \frac{1}{F_{ij}}.$$
\end{lemma}

\begin{proof}
 We have $$\P_i[t_k < t_i ] \geq \P[X_1=k \vert X_0=i] = F_{ik},$$
  where $X_t$ denotes the Markov chain with transition matrix $F$.
  Therefore, since $\psi_j^\ast(x) \in [0,1]$,
  \begin{align*}
    \frac{\var_{\pi_i}(\psi_j^\ast)}{\P_{i}[t_j < t_i]^2}
    &\leq \frac{\var_{\pi_i}(\psi_j^\ast)}{F_{ij}^2} 
      \leq \frac{\pi_i[(\psi_j^\ast)^2]}{F_{ij}^2} 
      \leq \frac{\pi_i[\psi_j^\ast]}{F_{ij}^2} 
      = \frac{1}{F_{ij}}.
  \end{align*}
\end{proof}

We now estimate the size of $F_{ij}$ for piecewise constant bias functions such as the uniform grid~\eqref{eq: uniform grid of strata}:

\begin{lemma}\label{lem: size of entries in F}
  Assume as in~\eqref{eq: pw constant bias functions} that the bias functions are piecewise constant, and write ${\pi(x) \propto \exp(-\beta V(x))}$. We have
  \begin{equation*}
   F_{ij} \geq  \frac{\lvert U_i \cap U_j \rvert}{\lvert U_i \rvert \left \lVert \sum_{k=1}^L \1_{U_k} \right \rVert_\infty}\exp \left (\beta \left ( \min_{U_i} V - \max_{U_i} V \right ) \right )
 \end{equation*}
 In particular, for the uniform grid of strata~\eqref{eq: uniform grid of strata}, we have
 \begin{align}
   F_{\iidx\jidx}
   &\geq \frac{1}{4^d} \exp \left (\beta \left ( \min_{U_\iidx} V - \max_{U_\iidx} V \right ) \right ) 
     \geq  \frac{1}{4^d} \exp \left (-2\beta h \sqrt{d} \lVert \nabla V \rVert_\infty \right ) 
   \label{eq: lower bound on F for uniform grid of strata}
\end{align}
 for any $\iidx, \jidx \in \Z^d$ so that $F_{\iidx \jidx} >0$.
\end{lemma}
\begin{proof}
  We have 
  \begin{align*}
    F_{ij} &= \pi_i[\psi_j^\ast] = \pi_i \left [\frac{\1_{U_j}}{\sum_{k=1}^L \1_{U_k}} \right ]
             \geq \frac{\pi[U_i \cap U_j]}{\pi[U_i]} \frac{1}{\left \lVert\sum_{k=1}^L \1_{U_k} \right \rVert_\infty} \\ 
           &\geq \frac{\lvert U_i \cap U_j \rvert}{\lvert U_i \rvert} \frac{1}{\left \lVert\sum_{k=1}^L \1_{U_k} \right \rVert_\infty} 
    \exp \left (\beta \left ( \min_{U_i} V - \max_{U_i} V \right ) \right ) ,  
  \end{align*}
  which proves the first claim made in the statement of the lemma.
  
  Now, for the uniform grid of strata~\eqref{eq: uniform grid of strata}, the minimum nonzero value of $\lvert U_\iidx \cap U_\jidx \rvert / \lvert U_\iidx \rvert$ is $1/2^d$, attained when $\jidx = (1,1,\dots,1)+ \iidx$.
  Moreover, except for a set of measure zero, each  $x\in\Real^d$ lies within $2^d$ strata, so $\left \lVert \sum_{k=1}^L \1_{U_k} \right \rVert_\infty=2^d$.
  Finally, we have
  \begin{equation*}
    \max_{U_\iidx} V(x) - \min_{U_\iidx} V(x) \leq \diam(U_\iidx) \lVert \nabla V \rVert_\infty = 2 \sqrt{d} h \lVert \nabla V \rVert_\infty,
  \end{equation*}
  and the result follows.
\end{proof}

\begin{remark}[A Condition to Guide the Choice of Strata]
  \label{rem: remark on size of entries in F and choice of strata}
  Lemmas~\ref{lem: estimates of sensitivity in terms of overlaps} and~\ref{lem: size of entries in F} suggest a practical constraint on the choice of strata: To ensure that the calculation of the weights is not too sensitive to sampling errors, it will suffice to choose strata so that nonzero entries of $F$ are not too small. We let this condition guide the choice of strata in Section~\ref{sec: numerical results}, cf.\@~\eqref{eqn: size of sub and superdiagonal entries in calculations}. However, the condition is only sufficient, not necessary. For example, consider a uniform grid of Gaussian bias functions similar to~\eqref{eq: uniform grid of strata}, but with Gaussian densities having
  mean $\mu = h[-1,1]^d + h\iidx$ and variance $\sigma^2=h^2$
replacing the characteristic functions $\1_{U_\iidx}$. In that case, even though $F$ will be dense and may have some extremely small nonzero entries, one can still control~\eqref{eq: key factor in section on relative variances} by decreasing $h$, under some conditions on $\pi$. We omit the exact statement and proof for simplicity.
\end{remark}

Despite the exponential dependence on $d$ in~\eqref{eq: lower bound on F for uniform grid of strata}, EMUS and other stratified MCMC methods are advantageous for high-dimensional problems because it often suffices to stratify only a low-dimensional collective variable.  In such cases, the dimension of the grid of strata is much smaller than dimension of the state space $\Omega$; see our discussion of collective variables in Section~\ref{subsec: rate of convergence} and our computations in Section~\ref{sec: numerical results}. It is important to keep this in mind when reading our results below.
Also, one may define a uniform grid of strata so that~\eqref{eq: key factor in section on relative variances} increases only as $d^2$ with dimension, not exponentially.  We construct such a grid in Appendix~\ref{apx: sparse grid}.

\ignore{
  When $\Omega = [0,\infty)$ and the target is a tail probability of the form $p_M = \pi([M,\infty))$, some strata must be unbounded.
 Consider, for example, the bias functions depicted in Figure~\ref{fig: small prob basis}. In that case, one cannot control the size of the entries in $F$ simply by decreasing the size of the strata, since $F_{K,K-1}$ tends to zero as $h$ decreases. The resolution of this difficulty is subtle; we present the details in Section~\ref{sec: small probability limit}, since they pertain only to the calculation of tail probabilities. 
}

\section{Limiting Results as a Rationale for EMUS}\label{sec: limiting results}

In this section, we analyze the efficiency of EMUS in two limits: 
First, we consider a \emph{low temperature limit}, where we write $\pi(x) \propto \exp(-\beta V(X))$ and let the inverse temperature $\beta$ increase, concentrating the target distribution at its modes and intensifying the effects of multimodality on the efficiency of MCMC sampling.
Second, we consider the estimation of increasingly small tail probabilities. 
Our goal in each case is to elucidate the advantages and disadvantages of EMUS for a broad class of problems, providing a rationale for the use of the method.
We hope that others will use the tools of Section~\ref{sec: error analysis} in similar fashion to develop their own novel applications of EMUS.

\subsection{Limit of Low Temperature}\label{sec: low temp}

Let the target distribution take the form 
\begin{equation*}
 \pi_\beta(x) = \frac{\exp(-\beta V(x))}{\int \exp(-\beta V(x)) \, dx}
\end{equation*}
for some \emph{potential} $V$ and \emph{inverse temperature} $\beta > 0$, as in Section~\ref{sec: quantifying sampling error}.
In this section, we analyze the efficiency of EMUS in the \emph{low temperature limit} as $\beta$ tends to infinity with $V$ fixed. 
We observe that $\pi_\beta$ concentrates at its modes (the minima of $V$) in this limit.
As a consequence, MCMC methods for sampling $\pi_\beta$ undergo transitions between modes only rarely, which makes direct MCMC sampling increasingly inefficient.
To be precise, we show that the asymptotic variance of a trajectory average of the overdamped Langevin dynamics increases exponentially with $\beta$ in the worst case. 
On the other hand, we show that the asymptotic variance of the EMUS estimate of the same average increases only polynomially.
Therefore, EMUS is dramatically more efficient than direct sampling in the low temperature limit. 

We consider the low temperature limit because it provides a convenient sequence of increasingly difficult to sample multimodal distributions: By analyzing EMUS in the low temperature limit, we hope to clarify its advantages for multimodal problems in general.
We have no other interest in low temperature.

We now examine the overdamped Langevin dynamics 
\begin{equation}\label{eq: overdamped dynamics, low temp section}
 dX_t^\beta = -\nabla V(X_t^\beta) dt + \sqrt{2\beta^{-1}} dB_t 
\end{equation}
in the low temperature limit. 
(The overdamped Langevin dynamics is ergodic for $\pi_\beta$ under certain conditions on $V$; see~\cite{roberts1996exponential} for example.)
For typical potentials $V$, the generator
\begin{equation*}
 \gen := -\beta^{-1} \Delta + \nabla V \cdot \nabla
\end{equation*}
of~\eqref{eq: overdamped dynamics, low temp section} has a spectral gap that shrinks exponentially with $\beta$; that is, for some $c >0$,
\begin{equation}\label{eq: spectral gap is exponentially small}
  -\exp(-c \beta) \leq \lambda_1 <0, 
\end{equation}
where $\lambda_1$ is the greatest nonzero eigenvalue of $\gen$. We refer to~\cite[Section~2.5]{LelievreStoltz:PDEinMD} for a review of results on the spectrum of $\gen$, and we refer to~\cite{HelfferKN2004metastability} for precise conditions on $V$ which guarantee~\eqref{eq: spectral gap is exponentially small}. 
Now let $v_1$ be an eigenfunction corresponding to $\lambda_1$ normalized so that $\pi_h[v_1^2] =1$.
By formula~\eqref{eq: resolvent formula for asymptotic variance}, the asymptotic variance $\sigma^2_\beta(v_1)$ of the trajectory average of $v_1$ satisfies
\begin{align*}
 \sigma^2_\beta(v_1) &= -\pi_\beta [ v_1 L^{-1} v_1 ] = -\lambda_1^{-1} \pi_\beta [ v_1^2] = -\lambda_1^{-1} \geq \exp(c\beta),
\end{align*}
indicating that the cost of estimating $\pi[v_1]$ by direct MCMC grows exponentially with $\beta$.

Having analyzed the overdamped Langevin dynamics, we now examine EMUS in the low temperature limit. 
For convenience, we assume that $\Omega$ is the unit cube $[0,1]^d \subset \Real^d$ with periodic boundary conditions; to be more precise, we let ${\Omega = \Real^d / \mathbb{Z}^d}$ be the set of all points in $\Real^d$ with $x$ and $y$ identified if and only if $x-y \in \mathbb{Z}^d$. 
Periodic boundary conditions are typical of problems in chemistry and computational statistical mechanics.
We do not see any difficulties in generalizing our results to other types of domains.

As $\beta$ increases, we must make the supports of the bias functions smaller.
We accomplish this by adjusting the parameter $h$ in a uniform grid of bias functions similar to those defined in~\eqref{eq: uniform grid of strata}.
To be precise, we fix $K \in \mathbb{N}$, set $h := 1/K$, and define
\begin{equation}\label{eq: simple family of bias functions}
 \psi_{\iidx} (x) := \frac{1}{2^d} \1_{[-1,1]^d} (K(x-h \iidx)) \text{ for } \iidx \in \{0, 1, \dots, K-1\}^d. 
\end{equation}
This family of $K^d$ bias functions is a partition of unity over $\Omega$, and the support of the $\iidx$'th bias function is 
\begin{equation*}
 U_\iidx := h[-1,1]^d + h\iidx. 
\end{equation*}
For convenience, we treat the index $\iidx$ as an element of $\mathbb{Z}^d/K\mathbb{Z}^d$; that is, we let $\iidx$ be periodic with period $K$ in each of its components, identifying $(0,i_2, \dots, i_d)$ with $(K,i_2,\dots,i_d)$, for example.
Figure~\ref{fig: low temp} illustrates such a family of bias functions, and it demonstrates the appropriate relationship between $\beta$ and $h$.

\begin{figure}[ht]
\begin{center}
\includegraphics[width=0.6\linewidth]{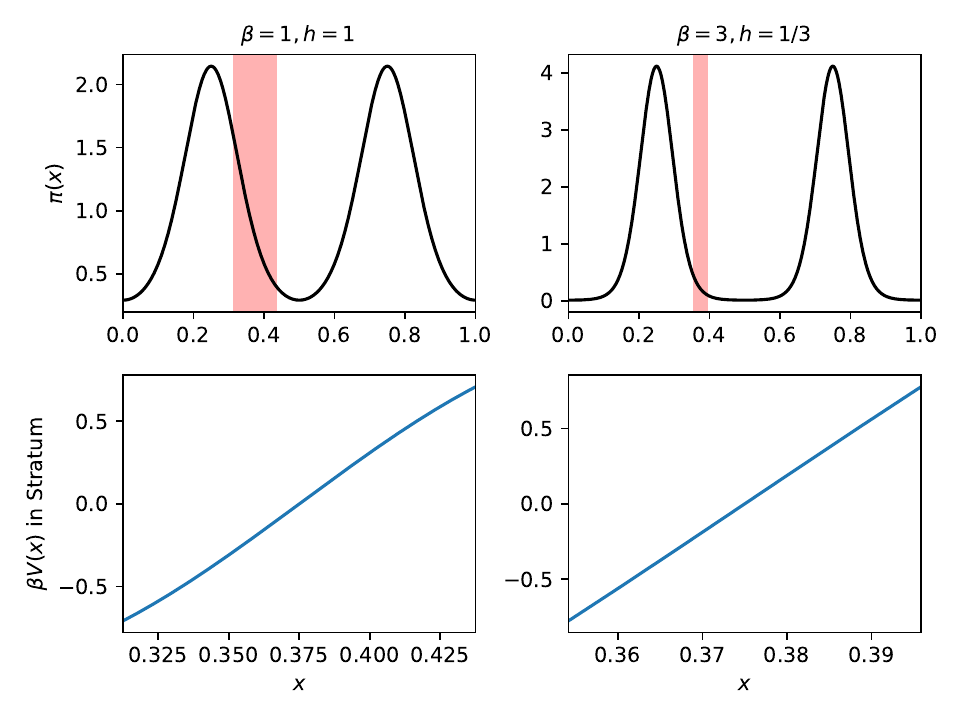}
\end{center}
 \caption{
 Bias functions and target distributions in the low temperature limit. In the upper two plots, the black curves are the densities of the target distributions for two different values of $\beta$. 
 Observe that $\pi$ concentrates at the minima of $V$ as $\beta$ increases.
 The red bands each lie above a single stratum chosen from a family of strata for which $h \propto \beta^{-1}$. 
 In the lower two plots, the blue curve is $\beta V(x)$ and the $x$-axis covers the bottom of the red band in the plot immediately above.
 Observe that the range of $\beta V(x)$ over the red band is the same for each of the two values of $\beta$.
 By Theorem~\ref{thm: ergodicity and estimates of iat} and the ensuing discussion in Section~\ref{sec: quantifying sampling error}, this implies that the cost of sampling a single biased distribution increases at most polynomially with $\beta$ when $h \propto \beta^{-1}$. 
 }
 \label{fig: low temp}
\end{figure}

We now show that the asymptotic variance of EMUS increases at most polynomially with $\beta$ when $K$ is chosen appropriately.
In light of the above discussion, this means that EMUS may be dramatically more efficient than direct sampling for multimodal problems.
We note that despite the exponential dependence on $d$ in~\eqref{eq: estimate for low temp limit} below, EMUS and other stratified MCMC methods are often advantageous for high-dimensional multimodal problems; see our discussion of low-dimensional collective variables in Section~\ref{sec: quantifying sampling error} and also our computations in Section~\ref{sec: numerical results}.

\begin{theorem}\label{thm: low temp EMUS}
For any bounded continuous function $g$, let $\sigma^2_{\beta,\emus} (g)$ denote the asymptotic variance of $\pi_{\beta,\emus}[g]$.
Let the bias functions be defined by~\eqref{eq: simple family of bias functions} with $K$ equal to the least integer greater than $\beta$; that is, 
\begin{equation*}
 K = \lceil \beta \rceil.
\end{equation*}
Take $\kappa_\iidx=1/K^d$. 
Let Assumption~\ref{asm: dependence of iat on strata} hold. 
We have 
\begin{equation}\label{eq: estimate for low temp limit}
\frac{\sigma^2_{\beta,\emus} (g)}{\var_{\pi_\beta} (g)} \leq  C (1+\beta)^{qd}
\end{equation}
for constants $C,q >0$ independent of $g$ and $\beta$, but depending on $V$ and the constants in Assumption~\ref{asm: dependence of iat on strata}.
\end{theorem}
\begin{proof}
  The proof is a straightforward application of the theory developed in Section~\ref{sec: error analysis}; we present the details in Appendix~\ref{apx: proof of low temp limit}.
\end{proof}

Our proof of Theorem~\ref{thm: low temp EMUS} relies on the perturbation bounds which we derived in~\cite{ThVKWe:Perturbation}.
These bounds allow one to estimate the sensitivity of $w(F)$ to small perturbations of $F$.
Most perturbation bounds in the literature predict that $w(F)$ is highly sensitive when the spectral gap of $F$ is small, but ours show that this is not always the case. 
(The spectral gap is $1 - \lvert \lambda_2 \rvert$, where $\lambda_2$ is the eigenvalue of $F$ with second largest absolute value.)
In the low-temperature limit, the spectral gap of $F$ decreases exponentially with $\beta$; see~\cite{ThVKWe:Perturbation} for a simple example of this phenomenon. 
Nonetheless, using our bounds, we show that the cost to compute averages by EMUS increases only polynomially in $\beta$.

\subsection{Limit of Small Probability}\label{sec: small probability limit}
In this section, we assess the performance of EMUS for computing tail probabilities.
To be precise, we let $\Omega=[0,\infty)$, and we consider estimation of probabilities of the form
\begin{equation*}
p_M := \pi([M, \infty)).
\end{equation*}
We show that for a broad class of distributions $\pi$, the cost of computing $p_M$ with relative precision by direct MCMC increases exponentially with $M$, whereas the cost by EMUS increases only polynomially.
Thus, EMUS is dramatically more efficient than direct sampling for computing the probabilities of tail events.

In Assumption~\ref{asm: assumptions for small prob limit} below, we state the conditions which we will impose on $\pi$ in our analysis.
These conditions specify a simple class of problems for which strong conclusions may be drawn. 
Similar results hold more generally. 
For example, in Section~\ref{sec: numerical results}, we report the results of a computational experiment demonstrating the advantages of EMUS for computing tails of a marginal density.

\begin{assumption}\label{asm: assumptions for small prob limit}
Write 
\begin{equation*}
\pi(x) = \exp(-V(x))
\end{equation*}
for some potential function $V: [0,\infty) \rightarrow \Real$. Assume that for some $M_0 \geq 0$:
\begin{enumerate}
\item Whenever $x \geq M_0$,
\begin{equation}\label{eq: convexity condition in small prob limit}
 0 \leq V''(x) \text{ and } 0 < V'(x).
\end{equation}
\item For some $\alpha \in (0,1)$ and $c>0$, whenever $x \geq M_0$,
\begin{equation}\label{eq: ergodicity condition in small prob limit}
 \alpha V'(x)^2 - V''(x) \geq c >0.
\end{equation}
\end{enumerate}
For example, we might have 
\begin{equation*}
 \pi(x) \propto \exp(-\lvert x \rvert^r) \text{ for any } r \geq 1.
\end{equation*}
\end{assumption}

  Condition~\eqref{eq: ergodicity condition in small prob limit} in Assumption~\ref{asm: assumptions for small prob limit} implies geometric ergodicity of the overdamped Langevin dynamics with potential $V$~\cite{roberts1997geometric}.
  We rely on this fact to motivate Assumption~\ref{asm: sampling unbounded strata} concerning the convergence of MCMC processes sampling biased distributions with unbounded support. 
Interestingly, we use the same condition to prove lower bounds on some of the entries of the overlap matrix; cf.\@ Lemma~\ref{lem: lower bound on F in small prob limit}.

Condition~\eqref{eq: convexity condition in small prob limit} in Assumption~\ref{asm: assumptions for small prob limit} implies  
\begin{equation*}
 p_M \leq D \exp(-\gamma M)
\end{equation*}
whenever $M \geq M_0$ for some $D, \gamma >0$. 
Therefore, the relative variance $\rho_M^2$ of $\1_{[M, \infty)}$ over $\pi$ satisfies 
\begin{equation*}
 \rho^2_M = \frac{p_M - p_M^2}{p_M^2} \geq D^{-1} \exp(\gamma M) -1.
\end{equation*}
We conclude that estimating $p_M$ with relative accuracy by a direct MCMC method (or even Monte Carlo with independent samples) requires a number of samples increasing exponentially with $M$.

By contrast, we show that for an appropriate choice of bias functions, the cost to estimate $p_M$ by EMUS increases only polynomially in $M$. 
For each $M>0$ and $K \in \mathbb{N}$, let
\begin{equation*}
 h:=\frac{M}{K},
\end{equation*}
and define the family of $K+2$ bias functions 
\begin{equation}\label{eq: bias functions for tails}
\psi_i(x) := 
\begin{cases}
 \frac12 \1_{[0,h]}(x) &\text{for } i=0, \\
 \frac12 \1_{[(i-1)h, (i+1) h]}(x) &\text{for } i=1,\dots,K-1 \\
 \frac12 \1_{[M-h,\infty)}(x) &\text{for } i=K, \text{ and } \\
 \frac12 \1_{[M,\infty)}(x) &\text{for } i = K+1.
\end{cases}
\end{equation}
As in Section~\ref{sec: low temp}, let $U_i$ denote the support of $\psi_i$.
This family of bias functions is a partition of unity on $[0,\infty)$; see Figure~\ref{fig: small prob basis}.

\begin{figure}[ht]
\begin{center}
\includegraphics[width=0.5\linewidth]{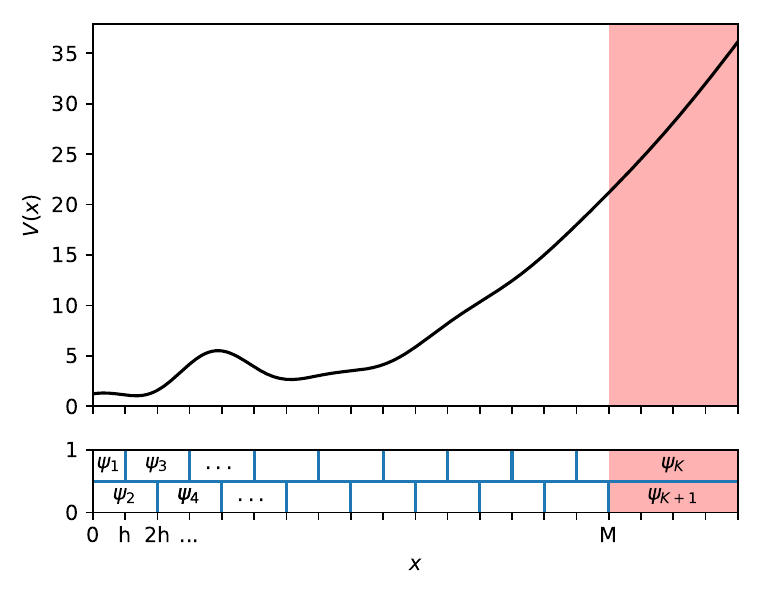}
\end{center}
\caption{The bias functions $\{\psi_i: i = 0, \dots, K+1\}$ defined in~\eqref{eq: bias functions for tails} and a potential function $V$ satisfying Assumption~\ref{asm: assumptions for small prob limit}.
Observe that the bias functions $\psi_{K}$ and $\psi_{K+1}$ have unbounded support. }
\label{fig: small prob basis}
\end{figure}

We now address the cost of estimating $p_M$ by EMUS.
First, we observe that Assumption~\ref{asm: dependence of iat on strata} on the asymptotic variances of MCMC averages does not cover the sampling of $\pi_K$ and $\pi_{K+1}$, since the supports of these distributions are unbounded. Thus, we require the following assumption.

\begin{assumption}\label{asm: sampling unbounded strata}
  Let $f: [0,\infty) \rightarrow \Real$, and define $\sigma^2_i(f)$ to be the asymptotic variance of an MCMC trajectory average approximating $\pi_i[f]$ for $i =K,K+1$. We assume
  \begin{equation*}
    \frac{\sigma^2_i(f)}{\var_{\pi_i}(f)} \leq D
  \end{equation*}
  for some $D$ independent of $M$ and $f$. 
\end{assumption}
In fact, since Assumption~\ref{asm: assumptions for small prob limit} implies that the overdamped Langevin dynamics is ergodic for $\pi(x) = \exp(-V(x))$ on the unbounded domain $\Omega = \Real$, we fully expect (but do not prove here) that under Assumption~\ref{asm: assumptions for small prob limit}, Assumption~\ref{asm: sampling unbounded strata} holds for overdamped Langevin constrained (by reflection as in~\eqref{eq: FKE for simple process}) to remain in the support of $\pi_K$ or $\pi_{K+1}.$  
Alternatively the reader  may simply assume that we draw i.i.d.\@ samples from the biased distributions.
All our results hold in that case.

\ignore{
In fact, considering the case of i.i.d.\@ samples reveals that the reasons for the advantages of EMUS are quite different in the low-temperature and small probability cases:
In the low-temperature limit, EMUS is advantageous because Markov processes sampling the biased distributions converge more quickly than processes sampling the target distribution. 
To be precise, the spectral gap of the generator for the unbiased process is much smaller than the spectral gap for the biased processes.
In the small probability case, EMUS is advantageous because the very small tail probability is represented as a function of much larger quantities, such as the entries in the overlap matrix.
}

\ignore{
    When $\Omega = [0,\infty)$ and the target is a tail probability of the form $p_M = \pi([M,\infty))$, some strata must be unbounded.
 Consider, for example, the bias functions depicted in Figure~\ref{fig: small prob basis}. In that case, one cannot control the size of the entries in $F$ simply by decreasing the size of the strata. For observe that as $h$ decreases in Figure~\ref{fig: small prob basis}, so does $F_{K,K-1}$. The resolution of this difficulty is subtle; we present the details in Section~\ref{sec: small probability limit}, since they pertain only to the calculation of tail probabilities. 
}

We show in Theorem~\ref{thm: small prob EMUS} that the relative asymptotic variance of the EMUS estimate of $p_M$ grows only polynomially with $M$ for a broad class of target distributions $\pi$. 
Therefore, EMUS may be dramatically more efficient than direct MCMC sampling when the goal is to compute tail probabilities. 
We observe that while the hypotheses of the theorem are somewhat restrictive, similar results hold more generally; for example, see Section~\ref{sec: numerical results} where we compute tails of a marginal density.

\begin{theorem}\label{thm: small prob EMUS}
Let Assumptions~\ref{asm: dependence of iat on strata},~\ref{asm: assumptions for small prob limit}, and~\ref{asm: sampling unbounded strata} hold.
Set 
\begin{equation*}
 K = M \max_{x \leq M} \lceil \lvert V'(x) \rvert \rceil.
\end{equation*}
Define a family of $K+2$ bias functions $\psi_i$ by~\eqref{eq: bias functions for tails}.
Take $\kappa_i=1/(K+2)$.
Let $\sigma^2_{M,\emus}$ denote the asymptotic variance of the EMUS estimate of $p_M$. 
We have
\begin{equation*}
 \frac{\sigma^2_{M,\emus}}{p_M^2} \leq C K^2
\end{equation*}
for some constant $C>0$ depending on $V$ but not on $M$.

For example, suppose that 
\begin{equation*}
 V(x) = \tilde V(x) + x^r,
\end{equation*}
where $\tilde V$ has bounded support and $r \geq 1$.
Then $\lvert V'(x) \rvert \leq C(1+M^{r-1})$, and so
\begin{equation*}
 \frac{\sigma^2_{M,\emus}}{p_M^2} \leq C M^2 (1+M^{r-1})^2.
\end{equation*}
\end{theorem}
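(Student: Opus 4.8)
The plan is to bound the relative asymptotic variance $\sigma^2_{M,\emus}/p_M^2$ by applying Theorem~\ref{thm: upper bound on asymptotic variance} with $g = \1_{[M,\infty)}$, so that $\pi[g]=p_M$, and then controlling each of the two families of terms appearing in that bound. Since the biased processes are i.i.d.\@ samples, the integrated autocovariance is trivial, so the asymptotic covariances $\tau_i$ and $\sigma^i$ are simply the single-sample variances and covariances over each biased distribution $\pi_i$; in particular $\tr(\sigma^i) = \sum_j \var_{\pi_i}(\psi_j) \leq \sum_j \pi_i[\psi_j]$ is $O(1)$ (each row of $F$ has at most a bounded number of nonzero entries because the strata overlap only with their neighbors), and $\tau_i = \var_{\pi_i}(g) \leq 1$ since $g$ is an indicator. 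Because $\kappa_i^{-1}=K+2$ and there are $K+2$ strata, the term $\sum_i \kappa_i^{-1} \tau_i z_i^2$ contributes at most $(K+2)\sum_i z_i^2 \tau_i \leq (K+2)\sum_i z_i^2$. Since $z$ is a probability vector, $\sum_i z_i^2 \leq 1$, but I expect the more careful accounting (dividing by $p_M^2$) to be what produces the clean $K^2$ rate, so I would track the weights $z_i$ explicitly rather than crudely.

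The heart of the argument is the second family of terms, which after dividing by $p_M^2$ reads
\begin{equation*}
 \frac{1}{p_M^2}\sum_i \kappa_i^{-1}\,\pi[|g|]^2\,\tr(\sigma^i)\!\!\sum_{\substack{j\neq i\\ F_{ij}>0}}\frac{1}{\P_i[t_j<t_i]^2}.
\end{equation*}
Here $\pi[|g|]^2 = p_M^2$ cancels against the $1/p_M^2$ prefactor, which is exactly the mechanism promised in the text: the small tail probability never appears undivided. Theorem~\ref{thm: upper bound on asymptotic variance} permits replacing $\P_i[t_j<t_i]$ by $F_{ij}$, so it suffices to lower-bound the neighboring overlap entries $F_{ij}$ uniformly away from zero. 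The strata here are intervals of length $2h$ (with $h=M/K$), so I would estimate, exactly as in the low-temperature proof, $F_{ij} \geq \exp(-(\max_{U_i}V-\min_{U_i}V))\,|U_i\cap U_j|/|U_i|$. The oscillation of $V$ over a stratum of width $2h$ is at most $2h\max_{x\le M}|V'(x)|$, and the choice $K = M\max_{x\le M}\lceil|V'(x)|\rceil$ forces $h\max|V'| \leq 1$, so $F_{ij}$ is bounded below by a constant depending only on $V$ for the bounded strata $i=0,\dots,K-1$.

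The main obstacle — and the step I would spend the most care on — is the unbounded strata $\pi_K$ (support $[M-h,\infty)$) and $\pi_{K+1}$ (support $[M,\infty)$), where the oscillation of $V$ is infinite and the crude bound above fails. This is precisely where Assumption~\ref{asm: assumptions for small prob limit} is needed: I would use the convexity and growth conditions~\eqref{eq: convexity condition in small prob limit} and~\eqref{eq: ergodicity condition in small prob limit} to control the mass ratios $\pi[\1_{[M,\infty)}]/\pi[\1_{[M-h,\infty)}]$ that govern $F_{K,K+1}$ and $F_{K+1,K}$, showing these remain bounded below independent of $M$. The relevant estimate is a tail-ratio bound: for a log-concave tail with $V'>0$ and $V''\geq 0$, the ratio of masses over $[M,\infty)$ to $[M-h,\infty)$ is comparable to $\exp(-hV'(M))$, again $O(1)$ under our choice of $K$. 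I would also verify that $\var_{\pi_K}(g)$ and $\var_{\pi_{K+1}}(g)$ and the associated $\tr(\sigma^K),\tr(\sigma^{K+1})$ stay bounded, using that $g$ restricted to these strata is still an indicator. Assembling the $O(K)$ strata each contributing $O(1)$, multiplied by the prefactor $\kappa_i^{-1}=K+2$, yields the claimed $O(K^2)$ bound; the worked example with $V=\tilde V+x^r$ then follows immediately by substituting $\max_{x\le M}|V'(x)| \leq C(1+M^{r-1})$ into $K$.
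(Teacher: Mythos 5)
Your overall strategy is exactly the paper's: apply Theorem~\ref{thm: upper bound on asymptotic variance} with $g=\1_{[M,\infty)}$, use the i.i.d.\@ assumption to bound $\tr(\sigma^i)$ and $\tau_i$ by constants, cancel $\pi[\lvert g\rvert]^2=p_M^2$ against the $1/p_M^2$ prefactor, lower-bound the nonzero $F_{ij}$ for the bounded strata via the oscillation bound $\max_{U_i}V-\min_{U_i}V\le 2h\max_{[0,M]}\lvert V'\rvert\le 2$, and invoke Assumption~\ref{asm: assumptions for small prob limit} to handle the two unbounded strata (the paper isolates that step as Lemma~\ref{lem: lower bound on F in small prob limit}, whose proof of the $F_{K,K+1}$ bound does require the ODE comparison coming from $V''<\alpha (V')^2$, not just log-concavity of the tail).

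There is, however, one step you flag but do not close, and it is the one place where the argument as written fails: the term $p_M^{-2}\sum_i \kappa_i^{-1}\tau_i z_i^2$. Your crude bound $(K+2)\sum_i z_i^2\le K+2$ becomes $(K+2)/p_M^2$ after dividing by $p_M^2$, which grows exponentially in $M$; saying you would ``track the weights explicitly'' does not identify the mechanism that rescues it. The paper's resolution has two ingredients. First, $\tau_i=\var_{\pi_i}(g)=0$ for every $i\neq K$, because $\1_{[M,\infty)}$ is identically $0$ on $U_i$ for $i\le K-1$ and identically $1$ on $U_{K+1}=[M,\infty)$; only the straddling stratum $U_K=[M-h,\infty)$ contributes. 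Second, for that single stratum one needs $z_K\lesssim p_M$, which follows from the reversibility identity $z_KF_{K,K+1}=z_{K+1}F_{K+1,K}$ (valid since $z_iF_{ij}=\pi[\psi_i\psi_j]$ is symmetric for a partition of unity), giving $z_K/z_{K+1}=F_{K+1,K}/F_{K,K+1}\le 1/c$ with $c$ the uniform lower bound on the relevant overlap entries. Your tail-ratio estimate contains the raw material for this second point, but without the observation that $\tau_i$ vanishes off $i=K$ the sum over all $K+2$ strata is not controlled. With those two observations inserted, your proof coincides with the paper's.
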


\begin{proof}
  The proof is similar to the low temperature limit, Theorem~\ref{thm: low temp EMUS}, but with complications arising because not all strata are bounded and because here we consider the relative variance instead of the variance; see Appendix~\ref{apx: proof of low temp limit}. In particular, we require Assumption~\ref{asm: assumptions for small prob limit} to show that one can in fact choose $h$ so that all nonzero entries of $F$ are bounded above zero uniformly as $M$ increases; cf.\@ Lemma~\ref{lem: lower bound on F in small prob limit}. This is the only part of the proof relying on Assumption~\ref{asm: assumptions for small prob limit}.
\end{proof}

\section{EMUS for tails: An example from Bayesian inference}
\label{sec: numerical results}

We demonstrate the use of EMUS for efficiently exploring and visualizing distributions.
In particular, we show how EMUS may be used to efficiently compute both marginal densities and also tail probabilities of the form $\P[\eta(Z) \geq \eps^{-1}]$ where $\eta(Z)$ is a real valued function of a high-dimensional random variable $Z$.
For both tails and marginals, there is a natural and easy to implement choice of strata, which we describe in Section~\ref{subsec: EMUS marginals}. 

In Section~\ref{subsec: numerical experiments}, we calculate two different one-dimensional marginals of the posterior distribution of the hierarchical Bayesian mixture model described in Section~\ref{subsec: mixture model}. For one marginal, the natural stratification suffices. For the other, it does not, but a preliminary computation made with the natural stratification suggests a better choice of strata. We use this example to explain how to diagnose and correct problems related to poorly chosen strata: Our results will serve to guide the practice of stratified MCMC.   

\subsection{The natural stratification for tails and marginals}\label{subsec: EMUS marginals}

Here, we briefly explain how EMUS can be used to estimate tail probabilities and low-dimensional marginals of high-dimensional distributions. 
Let $\Omega \subset \Real^d$; let $\pi$ be a probability distribution on $\Omega$; and let $\eta : \Omega \rightarrow \Real$.
Suppose that one wishes to estimate the very small tail probability $\P[\eta(Z) \geq \eps^{-1}]$.
In this case, it is natural to stratify in $\eta$ only. 
That is, one may choose a partition of unity $\{\phi_i\}_{i=1}^L$ on $\Real$ and define bias functions 
\begin{equation}\label{eq: biased dist in projection case}
 \psi_i(x) = \phi_i(\eta(x)) \text{ for } i =1, \dots, L
\end{equation}
depending only on $\eta$.
For a partition of unity, one might choose the  regular grid of piecewise constant functions defined in Section~\ref{sec: small probability limit}.
We refer to \eqref{eq: biased dist in projection case} as the \emph{natural stratification}.
To compute the tail probability, one uses EMUS to estimate $\pi(\1_{[\eps^{-1}, \infty)} \circ \eta)$.

Computing marginal densities is similar; in fact, computing tails may be understood as a special case of computing a marginal density. Suppose now that $\eta:\Omega \rightarrow \Real^\ell$. To estimate the marginal $\pi_\eta$ of $\pi$ in $\eta$, one chooses a partition of unity $\{\phi_i\}_{i=1}^L$ on $\Real^\ell$, again defining bias functions by~\eqref{eq: biased dist in projection case}. One then uses EMUS to compute averages of \emph{histogram bins}, which are functions of the form 
\begin{equation}\label{eq: defn of histogram bin}
 b_{\eta_0}(\eta(x)) = \1_{\eta_0 + h[-1,1]^\ell}(\eta(x)).
\end{equation}
We have 
\begin{equation*}
 \lim_{h \rightarrow 0} \frac{1}{(2h)^\ell} \pi[b_{\eta_0}] = \pi_\eta(\eta_0),
\end{equation*}
so for small $h$ the averages of the histogram bins approximate~$\pi_\eta$.

By the argument in Section~\ref{sec: small probability limit}, EMUS with the natural stratification will be dramatically more efficient than direct sampling \emph{as long the biased distributions are no harder to sample than the target distribution $\pi$.} 
Essentially, this is because, with the natural stratification, very small averages like $\P[\eta(Z) \geq \eps^{-1}]$ over the target distribution $\pi$ are expressed as functions of much larger averages over the biased distributions $\pi_i$. Unfortunately, however, for general functions $\eta$, the biased distributions of the natural stratification need not be easy to sample. In Section~\ref{subsec: numerical experiments}, we give one example where the natural stratification works and one where it does not. In the case where it does not, we explain how to make a better choice of strata.

\subsection{A hierarchical Bayesian mixture model}\label{subsec: mixture model}
Here, we review the hierarchical Bayesian mixture model proposed in~\cite{richardson1997bayesmix}, and we discuss the difficulties which complicate inference under this model. As a tutorial in the use of EMUS, we present a numerical investigation of these difficulties in Section~\ref{sec: numerical results}.

In the hierarchical mixture model, the data vector 
$
\mathbf{y} = (y_1, \dots ,y_n) \in \Real^n
$
consists of independent identically distributed samples drawn from a mixture distribution of the form
\begin{equation*}
  p(y_i \vert \phi) = \sum_{k=1}^K q_k \nu(y_i; \mu_k, \lambda_k^{-1}), 
\end{equation*}
where $K$ is the number of mixture components, $q_k$ is the weight of the $k$'th mixture component, $\nu(\cdot; \mu_k, \lambda_k^{-1})$ is the normal density with mean $\mu_k$ and variance $\lambda_k^{-1}$, and $\phi$ is the vector of parameters
\begin{equation*}
 \phi = (\mu_1, \dots, \mu_K, \lambda_1, \dots, \lambda_K, q_1, \dots, q_{K-1}).
\end{equation*}
(Since $p(y_i \vert \phi)$ is a probability distribution, $q_1 + \dots + q_K=1$, and ${q_1, \dots, q_{K-1}}$ determine $q_K$.)
The following prior distribution is imposed on $\phi$:
\begin{align*}
 \mu_i &\sim \N (m, \kappa^{-1})\\
 \lambda_k &\sim \text{Gamma}(\alpha, \beta) \\
 \beta &\sim \text{Gamma}(g,h) \\
 (q_1, \dots, q_{K-1}) &\sim \text{Dirichlet}_K(1,\dots, 1).
\end{align*}
As in~\cite{jasra2005,Chopin2012}, we choose 
\begin{equation*}
 m=M, \text{ } \kappa=\frac{4}{R^2}, \text{ } \alpha=2, \text{ } g=0.2 , \text{ and } h=\frac{100g}{\alpha R^2}
\end{equation*}
 where $R$ and $M$ are the range and the mean of the observed data, respectively.
The posterior density is
\begin{align*}
 p(\theta \vert \mathbf{y}) &= 
 \frac{\kappa^{K/2}g^h \beta^{K\alpha +g-1}}{Z_K \Gamma(\alpha)^K \Gamma(g) (2\pi)^{\frac{n+K}{2}}}
 \left (\prod_{k=1}^K \lambda_k \right )^{\alpha-1} \\
 &\qquad \times \exp \left \{ -\frac{\kappa}2 \sum_{k=1}^K(\mu_k-M)^2 - \beta \left (h+\sum_{k=1}^K \lambda_k \right )  \right \} \\
 &\qquad \times \prod_{i=1}^N \left ( \sum_{k=1}^K q_k \lambda_k^{\frac12}
 \exp \left \{\frac{\lambda_k}2 (y_i - \mu_k)^2 \right \} \right ),
\end{align*}
where $\theta = (\phi,\beta)$ denotes the vector of all parameters to be inferred, including the hyperparameter $\beta$.

Several factors complicate inference based on this model. First, the mixture components are not identifiable; that is, the posterior distribution is invariant under permutation of the labels of the mixture components. 
Consequences of non-identifiability are discussed at length in~\cite{jasra2005,Chopin2012}. 
In our computations in Section~\ref{subsec: numerical experiments}, we impose the constraint 
\begin{equation*}
\mu_1 \leq \mu_2 \leq \dots \leq \mu_K 
\end{equation*}
to ensure that the components are identifiable.
Second, in Lemma~\ref{lem: posterior is unbounded}, we show that the posterior density may be unbounded, introducing spurious modes with infinite density. 
Finally, even with identifiability constraints, the posterior distribution may have multiple modes of finite posterior density. 
For example, see the modes reported in~\cite{Chopin2012}.
In Section~\ref{subsec: numerical experiments}, we use EMUS to efficiently visualize the posterior, assessing the effects of multimodality and unboundedness.

We suspect that the unboundedness of the posterior for this model is well known. However, we are unable to find a reference, so we now explain.
It is certainly well known that the likelihood of a Gaussian mixture model is unbounded: Roughly speaking, the likelihood is infinite when any mixture component is collapsed on a single data point~\cite{aitkin2001}. 
Nonetheless, one might expect the posterior density $p(\theta \vert \mathbf{y})$ to be bounded, since the prior penalizes large values of the precisions $\lambda_i$.
This is not always the case when the data vector contains repeated entries:

\begin{lemma}\label{lem: posterior is unbounded}
If any datum $y_i$ has frequency $N_i$ greater than 
\begin{equation*}
2g+2(K-1)\alpha,
\end{equation*}
then the posterior density $p(\theta \vert \mathbf{y})$ is unbounded.
\end{lemma}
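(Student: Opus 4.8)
The plan is to exhibit an explicit ray in parameter space along which the posterior density diverges. I would collapse one mixture component onto the high-frequency datum while simultaneously sending the hyperparameter $\beta$ to zero at a carefully chosen rate. Concretely, suppose $y_i$ occurs with frequency $N_i > 2g + 2(K-1)\alpha$. Fix $\mu_1 = y_i$, fix the remaining means $\mu_2, \dots, \mu_K$, precisions $\lambda_2, \dots, \lambda_K$, and weights $q_1, \dots, q_{K-1}$ at any admissible positive values, and let $\lambda_1 \to \infty$ with $\beta = 1/\lambda_1$.

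The first step is to isolate the factors in $p(\theta \mid \mathbf{y})$ that depend on $\lambda_1$ and $\beta$. The likelihood contributes, for each of the $N_i$ data points equal to $y_i$, a sum containing the term $q_1 \lambda_1^{1/2}$ (since $\mu_1 = y_i$), so the product over these points grows at least like $\lambda_1^{N_i/2}$. For each data point $y_j \neq y_i$, the $k = 1$ summand $q_1 \lambda_1^{1/2} \exp\{-\tfrac{\lambda_1}{2}(y_j - y_i)^2\}$ tends to $0$, while the remaining summands stay bounded below by a positive constant; hence the product over those points is bounded below independently of $\lambda_1$. From the prior, the relevant factors are $\beta^{K\alpha + g - 1}$, the factor $\lambda_1^{\alpha - 1}$ coming from $(\prod_k \lambda_k)^{\alpha - 1}$, and the exponential $\exp\{-\beta(h + \sum_k \lambda_k)\}$; the Gaussian prior on the means contributes only a positive constant once the $\mu_k$ are fixed.

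The second step is the scaling computation. With $\beta = 1/\lambda_1$, the dangerous factor becomes $\exp\{-\beta \lambda_1\} = e^{-1}$, a constant, and $\exp\{-\beta(h + \sum_{k \geq 2}\lambda_k)\} \to 1$; meanwhile $\beta^{K\alpha + g - 1} = \lambda_1^{-(K\alpha + g - 1)}$. Collecting the powers of $\lambda_1$ from the likelihood ($\lambda_1^{N_i/2}$), the precision prior ($\lambda_1^{\alpha - 1}$), and the $\beta$ prior ($\lambda_1^{-(K\alpha + g - 1)}$), the density is bounded below by a positive constant times $\lambda_1^{N_i/2 - (K-1)\alpha - g}$. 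The exponent is positive exactly when $N_i > 2g + 2(K-1)\alpha$, so the density tends to $+\infty$ along this ray.

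The main obstacle, and the real point of the argument, is that simply collapsing a component ($\lambda_1 \to \infty$ at fixed $\beta$) does not produce divergence: the prior term $\exp\{-\beta \lambda_1\}$ suppresses the polynomial growth exponentially. The essential observation is that $\beta$ is itself an inferred hyperparameter, and coupling $\beta \to 0$ to $\lambda_1 \to \infty$ neutralizes this suppression at the cost of only a polynomial penalty $\lambda_1^{-(K\alpha + g - 1)}$ from $\beta^{K\alpha + g - 1}$. The rate $\beta = 1/\lambda_1$ is the natural choice, and the final bookkeeping shows that the frequency threshold $2g + 2(K-1)\alpha$ arises precisely as the break-even point between the likelihood gain $N_i/2$ and the combined prior penalties.
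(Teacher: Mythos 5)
Your proof is correct and follows essentially the same route as the paper: the paper's argument also sends $\lambda_1 \to \infty$ along the ray $\beta = \lambda_1^{-1}$ with $\mu_1$ pinned to the repeated datum, and the same bookkeeping of powers of $\lambda_1$ yields the exponent $\tfrac{N_i}{2} - (K-1)\alpha - g$, whence the stated threshold. The only implicit point worth making explicit (as the paper does) is that $0 < q_1 < 1$, so that the summands $C_{2,i}$ for data points $y_j \neq y_i$ are strictly positive and the corresponding factors stay bounded below.
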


\begin{proof}
Take the limit of $p(\theta \vert \mathbf{y})$ as $\lambda_1 \rightarrow \infty$ with $\mu_1=y_i$, $\beta = \lambda_1^{-1}$, and all other variables held fixed. 
\end{proof}

The reader will observe that under the model, the set of data vectors with repeated entries has probability zero. However, in practice, the data consist of measurements with finite precision, and therefore repeated entries occur commonly, cf.\@ the Hidalgo stamp data used in Section~\ref{subsec: numerical experiments}.

\subsection{Numerical experiments: Choosing strata, computing tails, diagnosis of problems}
\label{subsec: numerical experiments}
In this section, we explain how to recognize and correct problems related to poor choices of strata, and we demonstrate the use of EMUS to investigate the multimodality and unboundedness of the posterior in the mixture model.
We first compute two one-dimensional marginals of the high-dimensional posterior density $p(\theta \vert \mathbf{y})$ using the natural stratification~\eqref{eq: biased dist in projection case}.
The natural stratification works in one case but not the other. In the case where the natural stratification does not work, preliminary calculations based on the natural stratification suggest a better choice of strata. 

Here, we let $\mathbf{y}$ be the Hidalgo stamp data set first studied in~\cite{izenman1988philatelic}, consisting of the thicknesses of $485$ stamps, ranging between $60\ \mu \rm{m}$ and $130\ \mu \rm{m}$. 
We let there be three mixture components ($K=3$), following previous computational studies~\cite{Chopin2012,jasra2005}.
In our first calculation, we estimated the marginal in $\mu_2$ using the natural stratification with a grid of 201 bias functions covering the range $[7,11]$, with the support of the leftmost and rightmost bias functions reaching to $-\infty$ and $\infty$, respectively.
For the middle strata, define $\phi_1: \Real \rightarrow \Real$ by 
\begin{equation}
 \phi_1 (x) := \max \{0, 1 - \lvert x \rvert \}.
\end{equation}
We used the  bias functions
\begin{equation}\label{eq:bias-functions-mu2}
  \psi_{i}(\theta) =
  \phi_1 \left (\frac{\mu_2 - (7+(i-1)h)}{h} \right ), \text{ where } h:= 0.02
\end{equation}
for $i=2, \dots, 200$.
Now, define $\phi_2: \Real \rightarrow \Real$ by
\begin{equation}
    \phi_2 (x) := \min\{\max \{0, 1 - x \}, 1\}
\end{equation}
The first and last bias functions were
\begin{align}
    \psi_{1}(\theta) &=
  \phi_2 \left (\frac{\mu_2 - 7}{h} \right ) \\
    \psi_{201}(\theta) &=
    \phi_2 \left (\frac{(7 + 200 h)- \mu_2}{h} \right ),
\end{align}
where $h=0.02$ as before.

We chose the total number of bias functions based on the sizes of the off-diagonal entries in the overlap matrix.
For any bias functions of the form~\eqref{eq:bias-functions-mu2}, the overlap matrix is tridiagonal.
Thus, by Remark~\ref{rem: remark on size of entries in F and choice of strata}, if the superdiagonal and subdiagonal entries $F_{i,i+1}$ and $F_{i,i-1}$ are sufficiently large, then the EMUS estimator is not too sensitive to statistical errors in $\bar F$.
For our choice of bias functions,
\begin{equation}\label{eqn: size of sub and superdiagonal entries in calculations}
 \min\{F_{i,i+1}; i=1, \dots 200\} \geq 0.01 \text{ and } \min\{F_{i,i-1}; i=2, \dots, 201\} \geq 0.004. 
\end{equation}

We sampled the biased distributions using the affine invariant ensemble sampler with $100$ walkers, as implemented in the emcee package~\cite{foreman2013emcee}.
Due to computational restrictions on memory, only every tenth sample point was saved.
As a check on the sampling, the average acceptance probability over all walkers in the ensemble sampler was calculated for each biased distribution.  
Averaging over biased distributions gave a total average acceptance probability of 0.31.  
The minimum acceptance probability over all distributions was 0.12.

To initialize sampling, we computed an unbiased test trajectory; that is, a trajectory having ergodic distribution $\pi$.
We then started by sampling a single biased distribution $\pi_k$, initializing with points drawn randomly from the unbiased trajectory.
We sampled the other biased distributions in sequence, initializing with points drawn randomly from samples of adjacent biased distributions. 
Thus, we sampled $\pi_k$ first, then $\pi_{k-1}$ and $\pi_{k+1}$, then $\pi_{k-2}$ and $\pi_{k+2}$, etc.
We equilibrated the sampler in each $\pi_i$ for 3000 Monte Carlo steps, and collected data for an additional 100000 Monte Carlo steps.   Each step of the ensemble sampler involves perturbing the positions of each of the 100 walkers.

We computed the marginal in $\mu_2$ using a grid of 200 histogram bins, covering the region $[7,11]$; this corresponds to taking ${h=0.01}$ in~\eqref{eq: defn of histogram bin}. The result is the curve labeled EMUS in Figure~\ref{fig:mu2-marginal}.
The marginal in $\mu_2$ has two modes, labeled $1$ and $2$ in Figure~\ref{fig:mu2-marginal}.
We plot the mixture distributions corresponding to these modes in Figure~\ref{fig:mu2-mode-mixtures}.
(To be precise, the distributions in Figure~\ref{fig:mu2-mode-mixtures} correspond to means over histogram bins centered at the labeled points.)

For comparison, we also estimated the marginal in $\mu_2$ from multiple long, unbiased trajectories.  
We computed $100$ unbiased trajectories of the affine invariant ensemble sampler in parallel.
For each trajectory, the ensembles were first equilibrated for 10000 Monte Carlo steps, and then data were collected for 100000 steps.
These trajectories were combined and binned to produce the density labeled Unbiased in  Figure~\ref{fig:mu2-mode-mixtures}.
We estimated the relative asymptotic variance of the marginal density for the unbiased calculation using ACOR~\cite{acor},
and we estimated the relative asymptotic variance for the EMUS calculation using the method outlined in Appendix~\ref{sec:error-bars}.
We present the results in Figure~\ref{fig:mu2-marginal}.
Note that near the mode, unbiased MCMC performs slightly better than EMUS, but in the tails, EMUS performs dramatically better. 

\begin{figure}[ht]
  \subfloat[\label{fig:mu2-marginal}]{\includegraphics[width=0.5\linewidth]{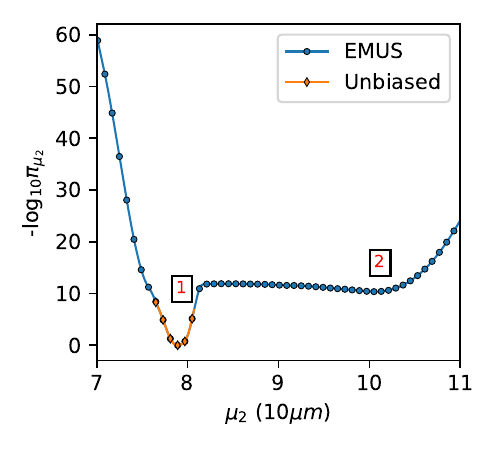}}
\subfloat[\label{fig:mu2-marginal_err}]{\includegraphics[width=0.5\linewidth]{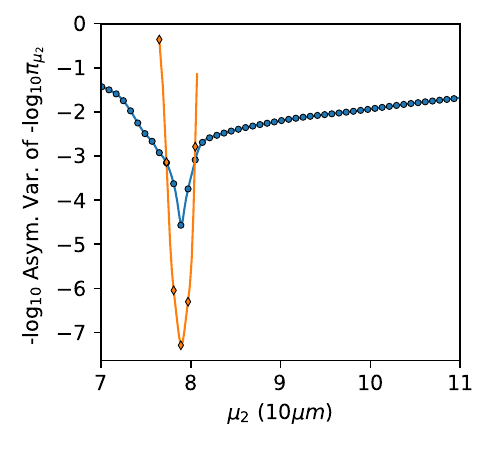}}
\caption{Estimates of the logarithm of the marginal density in $\mu_2$ and the asymptotic variances of those estimates. 
 Figure~\ref{fig:mu2-marginal} displays estimates of the marginal in $\mu_2$ computed by EMUS and by an unbiased trajectory of the ensemble sampler. 
Figure~\ref{fig:mu2-marginal_err} displays the asymptotic variances of these two estimates of the marginal density. We note that while the unbiased calculation has greater accuracy near the mode, the EMUS calculation has greater accuracy in the tails.
The relative errors in this figure were estimated using the method described in Appendix~\ref{sec:error-bars}. }
\end{figure}

\begin{figure}[ht]
  \begin{center}
    \includegraphics[width=\linewidth]{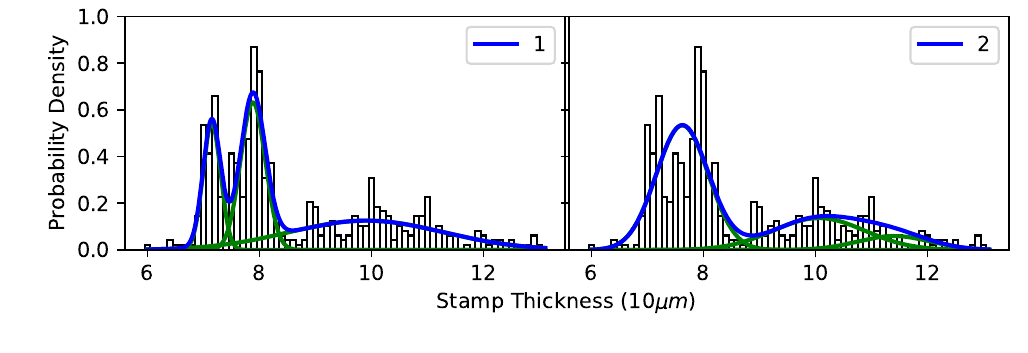}
  \end{center}
  \caption{Gaussian mixtures corresponding to modes of the marginal in $\mu_2$. 
    Mixtures $1$ and $2$ correspond to the labeled points in Figure~\ref{fig:mu2-marginal}.
    To be precise, the blue curve in each plot is the mixture distribution corresponding to the mean of a histogram bin centered at the point labeled in Figure~\ref{fig:mu2-marginal}.
    The green curves are the individual mixture components. 
    The black bars are a histogram of the Hidalgo stamp data.}
  \label{fig:mu2-mode-mixtures}
\end{figure}

After computing the marginal in $\mu_2$, we tried computing the marginal in $\log_{10} \lambda_1$.  We used the natural stratification with a grid of 50 bias functions with maxima equally spaced between --1 and 3.2 constructed as
\begin{equation*}
  \psi_{i}(\theta) = \phi \left (\frac{-1 + h(i-1) - \log_{10} \lambda_1}{h} \right )
\end{equation*}
where
\begin{equation*}
  h = \frac{3.2-(-1)}{49}.
\end{equation*}
 We used the same initialization scheme as for the marginal in $\mu_2$, beginning with a single biased distribution initialized from an unbiased test trajectory.
We call this the \emph{center} sample.
The result of this calculation was the density labeled ``1D Center'' in Figure~\ref{1D_marginal_plot}.
When we tried to compute the asymptotic variance of this density estimate, we noticed very slow convergence of the sampler for some biased distributions.
To investigate, we performed another EMUS calculation using a similar initialization procedure, but starting from $\pi_1$, the biased distribution at the extreme left, covering the lowest values of $\lambda_1$.
We call this the \emph{left} sample.
The result of this second calculation was the density labeled ``1D Left'' in Figure~\ref{1D_marginal_plot}.
For both the center and left samples, the strata were equilibrated for $3000$ steps and sampled for another $200000$.
We observe that the two densities differ significantly in the region $-1 \leq \log_{10} \lambda_1 \leq 0.5$.
They should be the same up to sampling errors;
for example, we observe that different initializations have no effect on the calculation of the marginal in $\mu_2$, cf.\@ Figure~\ref{fig:mu2-marginal}.

\begin{figure}[ht]
  \subfloat[\label{1D_marginal_plot}]{\includegraphics[width=0.5\linewidth]{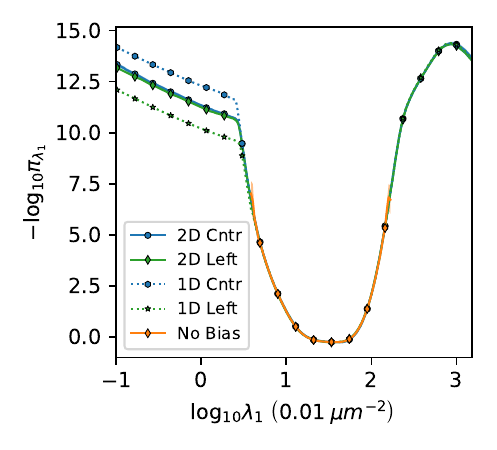}}
  \subfloat[\label{1D_marginal_err_plot}]{\includegraphics[width=0.5\linewidth]{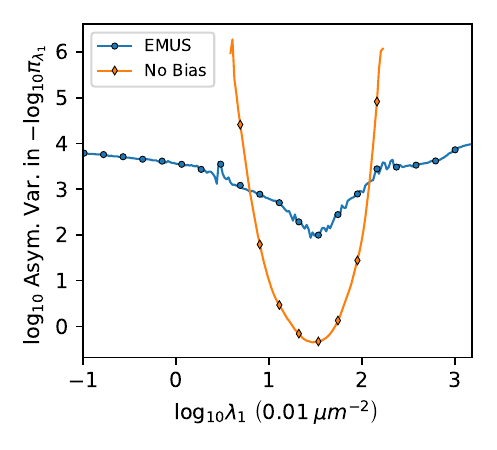}}
  \caption{Estimates of the logarithm of the marginal density in $\log_{10} \lambda_1$ and the asymptotic variances of those estimates. 
    Figure~\ref{1D_marginal_plot} displays the estimates of the marginal in $\log_{10} \lambda_1$ computed by various methods. The error bars are twice the estimated asymptotic standard deviation in each histogram bin. For the two-dimensional EMUS calculations, standard deviations were estimated using the method described in Appendix~\ref{sec:error-bars}. For both the unbiased calculation asymptotic variances were estimated using ACOR~\cite{acor}. No error bars are given for the two one-dimensional calculations, as the barrier depicted in Figure~\ref{marginal_slice} makes accurate estimation of the asymptotic variance impossible. A clear error is visible in the two one-dimensional umbrella sampling calculations, due to initialization along either side of the barrier in Figure \ref{marginal_slice}.
    Figure~\ref{1D_marginal_err_plot} displays the asymptotic variance of the marginal density in $\log_{10} \lambda_1$ for the unbiased and the two-dimensional EMUS calculations. We note that while the unbiased calculation achieves greater accuracy near the mode, the EMUS calculation achieves greater accuracy in the tails. }
\end{figure}

Figure~\ref{hist_diff} explains the problem and suggests a solution: In the region $0.2 \leq \log_{10} \lambda_1 \leq 0.7$, the center and left samples cover entirely different ranges of $\log_{10} \lambda_2$.
This suggests that the biased distributions corresponding to the range $0.2 \leq \log_{10} \lambda_1 \leq 0.7$ are multimodal, with barriers in $\lambda_2$ impeding sampling.

\begin{figure}
\begin{center}
  \includegraphics[width=0.75\linewidth]{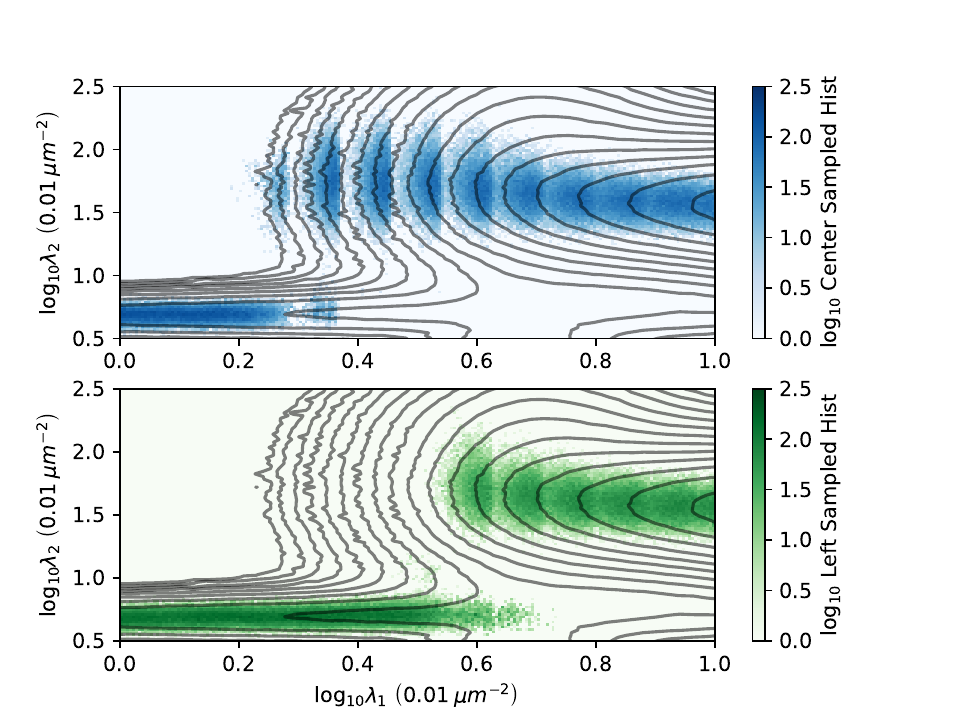}
\end{center}
\caption{
To generate Figure~\ref{hist_diff}, we binned the samples for the one-dimensional left and center EMUS calculations, and we plotted the difference in the histograms. The contour lines are contours of the log marginal density, as in Figure~\ref{2d_fe_surface}. Figure~\ref{hist_diff} shows that while the two calculations largely sample the same regions, near $\log_{10} \lambda_1=0.45$ they become trapped on opposite sides of a  barrier. This leads to poor sampling, causing a slowly decaying error in the estimates of the marginal density, cf.\@  Figure~\ref{1D_marginal_plot}.
}
\label{hist_diff}
\end{figure}

To confirm the hypothesis that barriers in $\lambda_2$ were responsible for the poor convergence observed in the center and left samples, we performed a third calculation, stratifying in both $\log_{10} \lambda_1$ and $\log_{10} \lambda_2$.
We used a $50 \times 50$  grid of bilinear bias functions, with maxima equally spaced between $-1$ and $3.2$. 
To be precise, for $i,j=1,\dots, 50$, we defined the bias functions
\begin{align*}
  \psi_{ij}(\theta) &= \phi \left (\frac{-1 + h(i-1) - \log_{10} \lambda_1}{h} \right ) \\
  &\quad \times \phi \left (\frac{-1+h(j-1)-\log_{10} \lambda_2}{h} \right ),
\end{align*}
with $h$ as before.
Let $\eta_{ij}$ denote the biased distribution corresponding to $\psi_{ij}$.
 
We performed the two-dimensional EMUS calculation twice, initializing from the center and left samples drawn from the natural stratification in $\log_{10} \lambda_1$. 
For each $i=1, \dots, L$, to sample the row $\{\eta_{ij} : j=1, \dots, 50\}$ of biased distributions, we began by initializing sampling of a single biased distribution $\eta_{ik}$ with points from the either the center or left sample of $\pi_i$.
We then sampled the other distributions $\eta_{ij}$ for $j\neq k$ in sequence, again initializing with points from samples of adjacent distributions, either $\eta_{i,j+1}$ or $\eta_{i,j-1}$ in this case. 
If no samples were found inside the support of a biased distribution, that distribution was ignored. 
For each biased distribution, sampling was burned in for 4500 steps, and samples were collected for an additional 2500 steps.  
Ultimately, 1397 of the 2500 biased distributions were sampled; the unsampled distributions correspond to the white space in Figure~\ref{2d_fe_surface}.  

\begin{figure}[ht]
  \subfloat[\label{2d_fe_surface}]{\includegraphics[width=0.5\linewidth]{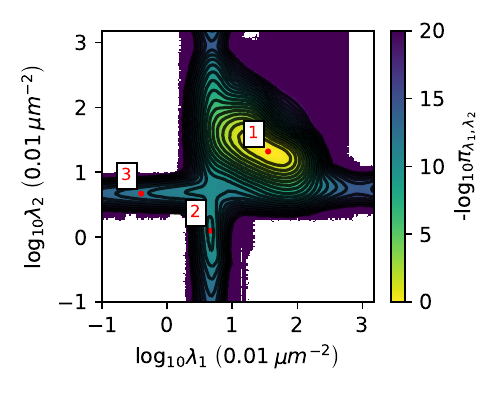}}
\subfloat[\label{unbiased}]{\includegraphics[width=0.5\linewidth]{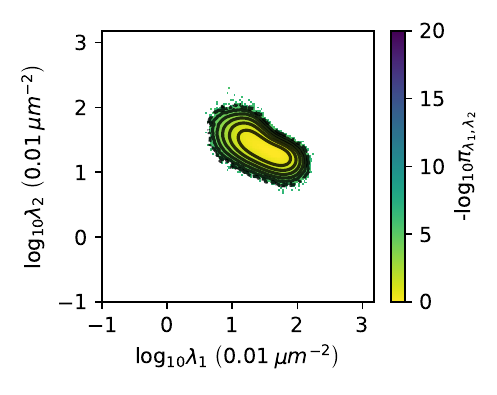}}
\caption{Logarithm of marginal density in $\log_{10} \lambda_1$ and $\log_{10} \lambda_2$ as estimated by EMUS and unbiased MCMC.
Contour lines in both figures are every unit change in the estimated $\log_{10}$ marginal density.
Figure~\ref{2d_fe_surface} is the EMUS estimate. The numbers $1$, $2$, and $3$ on this figure correspond to the mixture densities in Figure~\ref{mixtures}.  Note that at values of $\log_{10} \lambda$ near $3.0$ we begin to see the modes corresponding to singularities of the posterior. Figure~\ref{unbiased} is the marginal density estimated from a long unbiased trajectory of the ensemble sampler. Note that the entire trajectory lies in a small neighborhood of the mode labeled $1$ in Figure~\ref{2d_fe_surface}.}
\end{figure}

\begin{figure}[ht]
\begin{center}
  \includegraphics[width=0.8\linewidth]{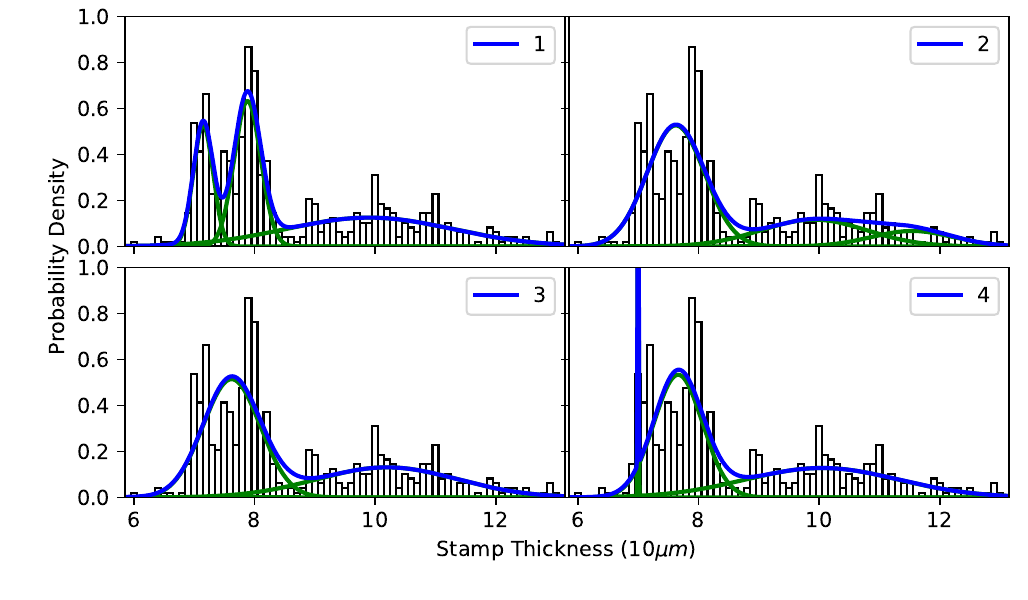}
  \caption{\label{mixtures}
  Gaussian mixtures corresponding to means of histogram bins.
  Mixtures one through three correspond to the labeled points on Figure \ref{2d_fe_surface}, mixture four corresponds to a distribution near a singularity of the posterior, with $\log_{10} \lambda_1 = 4.34$ and $\log_{10} \lambda_2 = 0.79$.
  To be precise, the blue curve in each plot is the mixture distribution corresponding to the mean of a histogram bin centered at the point labeled in Figure~\ref{2d_fe_surface}.
  The green curves are the individual mixture components. 
  The black bars are a histogram of the Hidalgo stamp data.}
\end{center}
\end{figure}

We computed the marginal in $\log_{10} \lambda_1$ and $\log_{10} \lambda_2$ using a $200 \times 200$ grid of histogram bins, covering the region ${-1 \leq \log_{10} \lambda_1 \leq 3.2}$ and ${-1 \leq \log_{10} \lambda_2 \leq 3.2}$;
this corresponds to taking ${h=(3.2-(-1))/200}$ in~\eqref{eq: defn of histogram bin}; the result from the center calculation appears in Figure~\ref{2d_fe_surface}. In Figure~\ref{mixtures}, we show the mixture distributions corresponding to the modes of the two-dimensinoal marginal in Figure~\ref{2d_fe_surface}. The two-dimensional marginals were essentially the same for the center and left initializations; see Figure~\ref{2d_fe_diff}.
We also estimated the one-dimensional marginal in $\log_{10} \lambda_1$ using the two-dimensional stratification; see the results labeled ``2D Center'' and ``2D Left'' in Figure~\ref{1D_marginal_plot}.
Finally, we estimated the relative asymptotic variance of the marginal in $\log_{10} \lambda_1$ computed by two-dimensional stratification. Again, we observe that EMUS performs much better than unbiased sampling in the tails, cf.\@ Figure~\ref{1D_marginal_err_plot}. 

\begin{figure}
\begin{center}
\includegraphics[width=0.5\linewidth]{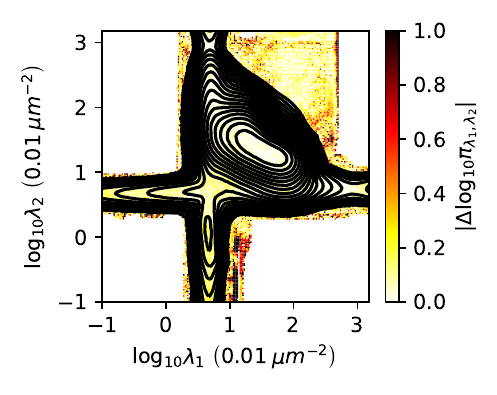}
\caption{\label{2d_fe_diff}The difference between the free energy surfaces of the two-dimensional umbrella sampling runs. The center calculation was initialized from the center one-dimensional calculation, and the left calculation from the left one-dimensional calculation. In general the difference is small, roughly a tenth of an order of magnitude in the log marginal.}
\end{center}
\end{figure}

The marginal in $\log_{10} \lambda_1$ and $\log_{10} \lambda_2$ confirms that barriers in $\lambda_2$ caused the problems observed in calculating the marginal in $\log_{10} \lambda_1$ using the natural stratification.
In fact, we see that computing the marginal in either $\lambda_1$ or $\lambda_2$ requires stratifying both variables, as stratifying only one leads to barriers that impede sampling in the other.  
In particular, there are barriers in $\lambda_2$ along the line $\log_{10} \lambda_1 = 0.45$ and a barrier in $\lambda_1$ along $\log_{10} \lambda_2 = 0.6$: In Figure~\ref{marginal_slice}, we plot an estimate of the conditional distribution of $\log_{10} \lambda_2$ with $\log_{10} \lambda_1 = 0.45$ fixed. This distribution is multimodal with a region of very low probability separating the modes, which explains the poor sampling depicted in Figure~\ref{hist_diff}.

\begin{figure}[ht]
\begin{center}
\includegraphics[width=0.5\linewidth]{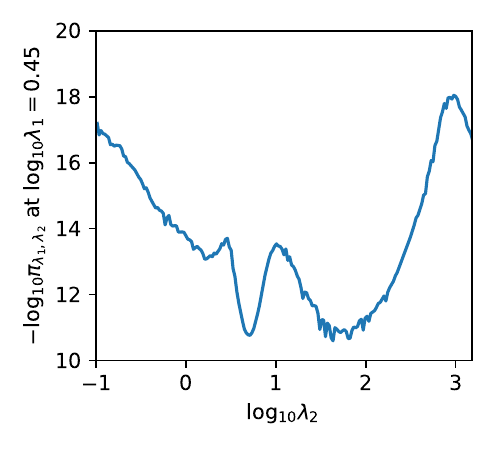}
\end{center}
\caption{
Here we give an estimate of the conditional distribution of $\log_{10} \lambda_2$ with $\log_{10} \lambda_1 = 0.45$ calculated from the two-dimensional marginal seen in Figure~\ref{2d_fe_surface}.  The conditional distribution is multimodal. The mode on the left corresponds to mixtures with the data from thicknesses of $60$ to $85$ $\mu \rm{m}$ covered by a single Gaussian similar to mode $2$ in Figure~\ref{mixtures}.
The mode on the right corresponds to mixtures with these data covered by two Gaussians similar to mode $1$ in Figure~\ref{mixtures}.}
\label{marginal_slice}
\end{figure}

To conclude, we have confirmed that EMUS can be extremely efficient for computing tails.
However, one must exercise care in the choice of strata.
The natural stratification often suffices, but in some cases, like computing the marginal in $\log_{10} \lambda_1$, the biased distributions of the natural stratification may be very difficult to sample.
We propose the use of different initializations, like the center and left samples, as a method of identifying problems related to poorly chosen strata.
Careful inspection of simulations performed with these different initializations can identify problems and suggest better strata.

\section{Conclusions}

We have analyzed the Eigenvector Method for Umbrella Sampling (EMUS), an especially simple and effective stratified MCMC method sharing many features with the popular WHAM~\cite{WHAM1992} and MBAR~\cite{shirts2008statistically} methods of computational chemistry. 
We have demonstrated the advantages of EMUS for sampling from multimodal distributions and computing tail probabilities, and we have explained how to identify and resolve the problems which may occur if the method is implemented poorly. 
We have also given a tutorial intended to explain how to diagnose and correct problems related to poorly chosen strata.

Our purpose was to explain the benefits of stratified MCMC analytically, with the ultimate goal of introducing stratified MCMC to a diverse audience of statisticians, engineers, and scientists. Since stratified MCMC had previously been applied only to a particular class of statistical mechanics calculations without any general justification, we began by developing a general theory. We hope that our theory will serve as the basis for further developments. For example, it may now be possible to undertake a comparison of EMUS and other so-called reaction coordinate methods such as Wang--Landau sampling~\cite{wang2001efficient} or Metadynamics~\cite{laio2002escaping}. Despite some similarities with EMUS, these methods work by a substantially different mechanism and understanding the relative advantages of the two approaches is non-trivial.
We also note that there is potential to apply stratification to problems that lie outside the scope of the present work. For example, we present a stratification method capable of computing dynamical quantities such as mean first passage times~\cite{dinner_trajectory_2018}.

\appendix

\section{Derivation of~\eqref{eq: decomposition of average 2} and~\eqref{eq: eigenvector problem}}
\label{apx: derivation of emus}

To see that~\eqref{eq: decomposition of average 2} holds, observe that
\begin{align*}
  \sum_{i=1}^L z_i \pi_i[g^\ast]/u_i &= \sum_{i=1}^L \frac{1}{\sum_{k=1}^L \pi[\psi_k]} \pi \left [ \frac{g \psi_i/u_i}{ \sum_{k=1}^L \psi_k/u_k} \right ] \\
                                     &= \frac{\pi[g]}{\sum_{k=1}^L \pi[\psi_k]}.
\end{align*}
Therefore, we have
\begin{equation*}
  \frac{\sum_{i=1}^L z_i \pi_i[g^\ast]/u_i}{\sum_{i=1}^L z_i \pi_i[\1^\ast]/u_i}= \frac{\pi[g]/ \left(\sum_{k=1}^L \pi[\psi_k] \right)}{\pi[\1]/ \left(\sum_{k=1}^L \pi[\psi_k] \right)} = \pi[g].
\end{equation*}

To prove~\eqref{eq: eigenvector problem}, observe that
\begin{align*}
  \sum_{i=1}^L w_i F_{ij} &= \sum_{i=1}^L \frac{z_i}{u_i} \frac{\pi_i[\psi_j^\ast]}{u_j} \\
                          &= \sum_{i=1}^L \frac{1}{u_j}\pi \left [ \frac{\psi_j (\psi_i/u_i)}{\sum_{k=1}^L \psi_k/u_k} \right ] \\
                          &=\frac{\pi[\psi_j]}{u_j} \\
  &= w_j.
\end{align*}

\section{Proof of Lemma~\ref{lem: irreducibility condition}}
\label{apx: irreducibility condition}
\begin{proof}
We prove only the second statement; proof of the first is similar.
By definition, a non-negative matrix $M \in \Real^{L \times L}$ is irreducible if and only if for every subset $A \subset \{1,2,\dots,L\}$ of the indices, there exist indices $i \in A$ and $j \notin A$ so that $M_{ji} > 0$. 
Now assume that for every $A \subset \{1,2,\dots, L\}$, there exist $j \notin A$ and $t \geq 0$ so that 
\begin{equation*}
 X^j_t \in \cup_{k \in A} \{ x: \psi_k(x) >0 \}.
\end{equation*}
Then for some $i \in A$, $\psi_i(X^j_t) > 0$, so $\bar F_{ji} >0$, hence $\bar F$ is irreducible.
\end{proof}

\section{Sparse Grid of Strata}
\label{apx: sparse grid}    
One may define a uniform grid of strata so that~\eqref{eq: key factor in section on relative variances} increases only as $d^2$ with dimension, not exponentially: For any $\iidx \in \Z^d$, let $V_\iidx' := h \iidx + h \left [-\frac12,\frac12 \right ]^d$. For $\iidx \neq \jidx$, define
  \begin{align*}
    W_{\iidx \jidx} := \Big \{ x \in V'_\jidx : \min_{y \in V'_\iidx} \lVert x -y\rVert &\leq \min_{y \in V'_\kidx} \lVert x -y\rVert \text{ for any } \kidx \in \Z^d\setminus \{\jidx\} \  \Big \}
  \end{align*}
  to be the $d$-dimensional pyramid consisting of all points in $V'_\jidx$ closer to $V'_\iidx$ than to any other cube $V'_\kidx$.
   Now let $e_n$ denote the $n$'th standard basis vector in $\Real^d$, and define
    \begin{equation*}
      V_\iidx := \cup_{n=1}^d (W_{\iidx, \iidx+e_n} \cup W_{\iidx, \iidx-e_n}) \cup V_\iidx'
    \end{equation*}
    to be the cube $V_\iidx'$ enlarged by all the neighboring pyramids $W_{\iidx\jidx}$.
    The strata $V_\iidx$ are convex, and the corresponding bias functions $\psi_\iidx  = \frac12 \1_{V_\iidx}$ are a partition of unity.
    Each stratum $V_{\iidx}$ intersects only the $2d$ neighboring strata $V_{\iidx \pm e_n}$ for $n=1,\dots, d$.
    Moreover, each intersection between neighboring strata $V_\iidx$ and $V_\jidx$ consists of the pair of pyramids $W_{\iidx\jidx}$ and $W_{\jidx\iidx}$, and it has volume $1/d$.
    Therefore, by Lemma~\ref{lem: size of entries in F}, for this choice of bias functions, the nonzero entries of $F$ decrease as $1/d$.
    It follows that~\eqref{eq: key factor in section on relative variances} increases as $d^2$.

\section{Proofs of Theorem~\ref{thm: CLT for EMUS} and Theorem~\ref{thm: upper bound on asymptotic variance}}
\label{apx: proof of clt}
Our proof of Theorem~\ref{thm: CLT for EMUS} (the CLT for EMUS) is based on the delta method. To apply the delta method, we require the following result ensuring the differentiability of $w(G)$:
\begin{lemma}\label{lem: differentiability of z}
The function $w(G)$ admits an extension $\tilde{w}: \Real^{L \times L} \rightarrow \Real^L$ which is differentiable on the set of irreducible stochastic matrices.
\end{lemma}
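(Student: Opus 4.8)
The plan is to realize $w(G)$ as the solution of a single \emph{invertible} linear system whose coefficient matrix depends affinely on $G$, so that differentiability reduces to the smoothness of matrix inversion. Fix an index $j \in \{1, \dots, L\}$, and for $G \in \Real^{L \times L}$ let $B(G)$ be the matrix obtained from $I - G$ by replacing its $j$'th column with the all-ones vector $\1$. For irreducible stochastic $G$, the defining relations $w(G)^\t (I-G) = 0$ and $w(G)^\t \1 = 1$ are then encoded by the single equation $w(G)^\t B(G) = e_j^\t$, where $e_j$ is the $j$'th standard basis vector: reading off the $k$'th coordinate recovers $w(G)^\t (I-G)_{\cdot k} = 0$ for $k \neq j$ and $w(G)^\t \1 = 1$ for $k = j$.

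The key step, which I expect to be the main obstacle, is to show that $B(G)$ is nonsingular whenever $G$ is irreducible and stochastic; this is the only place where both hypotheses are used. Suppose $u^\t B(G) = 0$. Then $u^\t (I-G)_{\cdot k} = 0$ for all $k \neq j$, so $u^\t (I - G) = c\, e_j^\t$ for some scalar $c$, and also $u^\t \1 = 0$. Multiplying $u^\t(I-G) = c\,e_j^\t$ on the right by $\1$ and using $(I-G)\1 = 0$ (which holds because $G$ is stochastic) forces $c = 0$, hence $u^\t(I-G) = 0$. By irreducibility and the Perron--Frobenius theorem, the left nullspace of $I - G$ is one-dimensional and spanned by $w(G)$, so $u = \alpha\, w(G)$ for some scalar $\alpha$; the constraint $u^\t \1 = 0$ then gives $\alpha = \alpha\, w(G)^\t \1 = 0$, so $u = 0$ and $B(G)$ is invertible.

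With invertibility established, the equation $w(G)^\t B(G) = e_j^\t$ yields $w(G) = \left ( B(G)^{-1} \right )^\t e_j$. I would define the extension on all of $\Real^{L \times L}$ by
\begin{equation*}
 \tilde w(G) := \left ( B(G)^{-1} \right )^\t e_j \text{ when } \det B(G) \neq 0, \text{ and } \tilde w(G) := 0 \text{ otherwise.}
\end{equation*}
Since $G \mapsto B(G)$ is affine and $\det B(G)$ is a polynomial in the entries of $G$, the set $\mathcal{U} := \{ G : \det B(G) \neq 0 \}$ is open, and by the argument above it contains every irreducible stochastic matrix. On $\mathcal{U}$ the entries of $B(G)^{-1}$ are rational functions of the entries of $B(G)$, so $\tilde w$ is $C^\infty$ on $\mathcal{U}$ and in particular differentiable at every irreducible stochastic matrix. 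Finally, the identity $w(G)^\t = e_j^\t B(G)^{-1}$ shows that $\tilde w$ agrees with $w$ on the irreducible stochastic matrices, so it is a genuine extension, completing the proof.
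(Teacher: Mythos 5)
Your proof is correct, and it is genuinely different from what the paper does: the paper simply invokes \cite[Lemma~3.1]{ThVKWe:Perturbation} to obtain a continuously differentiable extension of $w$ to an open set $U \supset \{$irreducible stochastic matrices$\}$, and then sets the extension to zero off $U$. You instead give a self-contained argument, encoding the stationarity and normalization conditions in the single bordered system $w(G)^\t B(G) = e_j^\t$, proving $B(G)$ is invertible for irreducible stochastic $G$ (the nullspace computation using $(I-G)\1 = 0$ and the one-dimensionality of the left $1$-eigenspace is exactly right, and the redundancy of the dropped $j$'th column equation is automatic since the columns of $I-G$ sum to zero), and then reading off smoothness from Cramer's rule on the open set $\{\det B(G) \neq 0\}$. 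What your route buys is an explicit closed-form extension $\tilde w(G) = (B(G)^{-1})^\t e_j$ that is $C^\infty$, not merely $C^1$, and that could in principle also be differentiated directly to recover the sensitivity formulas the paper obtains via the group inverse $(I-F)^\#$; what the paper's route buys is brevity and consistency with the perturbation machinery of \cite{ThVKWe:Perturbation} that is reused in Theorem~\ref{thm: upper bound on asymptotic variance}.
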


\begin{proof}
By~\cite[Lemma~3.1]{ThVKWe:Perturbation}, $w(G)$ admits a continuously differentiable extension to an open set $U \subset \Real^{L \times L}$.
We further extend the domain of $w(G)$ to $\Real^{L \times L}$ by arbitrarily defining $w(G) = 0$ whenever $G \in \Real^{L \times L} \setminus U$.
\end{proof}

The extension in Lemma~\ref{lem: differentiability of z} resolves two technicalities:
First, the set of stochastic matrices is not a vector space
but a compact, convex subset of $\Real^{L \times L}$ with empty interior.
Therefore, the derivative of $w$ is undefined.
Second, $\bar F$ may be reducible for some values of $N$
and some realizations of the processes sampling the biased distributions. In that case,
the invariant distribution of $\bar F$ is not unique, so $w(\bar F)$ is undefined.
Throughout the remainder of this work, $w(G)$ will denote the extension guaranteed by the lemma.

We now prove the CLT for EMUS.

\begin{proof}[Proof of Theorem~\ref{thm: CLT for EMUS}]
The proof is based on the delta method~\cite[Proposition~6.2]{BrennerBilodeau:TheoryMultivariateStats} and a formula for $w'(\bar F)$ given in~\cite{GolMey:ComputingInvDist}.

By Lemma~\ref{lem: differentiability of z}, $w(\bar F)$ is differentiable at $F$, so the function 
\begin{equation*}
 B \left (\bar F, \{ \bar g_i^\ast \}_{i=1}^L, \{\bar \1_i^\ast\}_{i=1}^L \right ) 
 := \pi_\emus[g] = \frac{\sum_{i=1}^L w_i(\bar F) \bar g^\ast_i}{\sum_{i=1}^L w_i(\bar F) \bar \1^\ast_i}
\end{equation*}
is differentiable at 
$\left (F, \{\pi_i[g^\ast]\}_{i=1}^L, \{\pi_i[\1^\ast]\}_{i=1}^L\right )$.
Let $\partial_i B \in \Real^{L+2}$ be the derivative of $B$ with respect to those quantities computed from $X^i_t$: That is, 
\begin{equation}
 \partial_i B := \left ( 
 \frac{\partial B}{\partial \bar F_{i:}},
 \frac{\partial B}{\partial \bar G_{i:}}
 \right ) \in \Real^{L+2},
 \label{eq: definition of derivatives of B}
\end{equation}
where $\frac{\partial B}{\partial \bar F_{i:}} \in \Real^L$ denotes the partial derivative of $B$ with respect to the $i$'th row of $\bar F$ and 
\begin{equation*}
 \frac{\partial B}{\partial \bar G_{i:}}  = \left ( \frac{\partial B}{\partial \bar g^\ast_i},\frac{\partial B}{\partial \bar \1^\ast_i}  \right ) \in \Real^2.
\end{equation*}
To simplify notation, we will assume throughout the remainder of this argument that all derivatives are evaluated at 
$\left (F, \{\pi_i[g^\ast]\}_{i=1}^L, \{\pi_i[\1^\ast]\}_{i=1}^L  \right )$.
In formulas involving matrix multiplication, we will treat $\partial_i B$, $\frac{\partial B}{\partial \bar F_{i:}}$, and $\frac{\partial B}{\partial \bar G_{i:}}$ as row vectors.

Since we assume that the processes $X^i_t$ sampling the different measures $\pi_i$ are independent, \cite[Chapter~1, Theorem~2.8]{Billingsley:ConvergenceProbabilityMeasures} implies that

\begin{align}\label{eq: central limit theorem for F and f}
  &\sqrt{M} \Big ( \left ( \bar{F}_{1:}, \bar{g}_1, \bar \1^\ast_1, \dots,
    \bar{F}_{L:}, \bar g^\ast_L, \bar \1^\ast_L \right ) - \left ( F_{1:}, \pi_1[g^\ast], \pi_1[\1^\ast],
    \dots,F_{L:}, \pi_L[g^\ast], \pi_L[\1^\ast] \right ) \Big ) \convdist N (0,\Sigma),
\end{align}
where $\Sigma$ is the covariance matrix of the product of the distributions $N \left (0, \kappa_i^{-1} \Sigma_i \right )$.
(That is, ${\Sigma \in \Real^{L(L+2) \times L(L+2)}}$ is the block diagonal matrix with the matrices $\kappa_i^{-1} \Sigma_i$ along the diagonal.)
Therefore, by the delta method,
\begin{equation*}
 \sqrt{M} ( \pi_\emus [g] - \pi[g] ) \convdist \mathrm{N}(0, \sigma^2),
\end{equation*} 
where 
\begin{align}
 \sigma^2 &= (\partial_1 B, \dots, \partial_L B) \Sigma (\partial_1 B, \dots, \partial_L B)^\t \nonumber \\
 &= \sum_{i=1}^L\kappa_i^{-1} \partial_i B \Sigma_i \partial_i B^\t.
 \label{eq: CLT proof, first formula for asymp var}
\end{align}

Now we observe that for any column vector $v \in \Real^L$ having mean zero, 
\begin{equation*}
\left . \frac{d}{d \eps} w_k(F + \eps e_i v^\t)  \right \rvert_{\eps=0} = \frac{\partial w_k}{\partial \bar F_{i:}} v = z_i v^\t (I-F)^\# e_k ,
\end{equation*}
by~\cite[Theorem~3.1]{GolMey:ComputingInvDist}. 
(In the formula above, $e_i \in \Real^L$ denotes the $i$'th standard basis vector.)
Therefore, we have
\begin{align}
 \frac{\partial B}{\partial \bar F_{i:}} v
 &= 
 \frac{\sum_{k=1}^L 
 \frac{\partial w_k}{\partial \bar F_{i:}}v \pi_k[g^\ast]}{\sum_{k=1}^L z_k \pi_k[\1^\ast]} 
 -
 \frac{\sum_{k=1}^L 
 \frac{\partial w_k}{\partial \bar F_{i:}}v \pi_k[\1^\ast]}
 {\sum_{i=1}^L z_k \pi_k[\1^\ast]} 
 \frac{\sum_{i=1}^L z_k \pi_k[g^\ast]}{\sum_{k=1}^L z_k \pi_k[\1^\ast]} \nonumber \\
 &=
 \sum_{k=1}^L 
 \frac{\partial w_k}{\partial \bar F_{i:}}v \Psi (\pi_k[ g^\ast] - \pi[g] \pi_k[\1^\ast]) 
 \label{eq: delta method proof, formula for derivative wrt F} \\
 &=
 z_i v^\t (I-F)^\# \mathfrak{g},
 \nonumber
\end{align}
where 
\begin{equation*}
\mathfrak{g}_k =\Psi \pi_k \left [g^\ast - \pi[g] \1^\ast \right] = \ell \cdot (\pi_k[g^\ast], \pi_k[\1^\ast]).
\end{equation*}
(Equality~\eqref{eq: delta method proof, formula for derivative wrt F} above follows from~\eqref{eq: decomposition of average 2} and the definition~\eqref{eq: definition of Psi} of $\Psi$.)
Also, 
\begin{equation}\label{eq: delta method proof, formula for derivative wrt g}
\frac{\partial B}{\partial \bar G_i}= z_i \Psi (1, -\pi[g]) = z_i \ell.
\end{equation}
Thus, 
\begin{align*}
  \partial_i B \Sigma_i \partial_i B^\t &= \frac{\partial B}{\partial \bar F_{i:}} \sigma^i \frac{\partial B}{\partial \bar F_{i:}}^\t + 2 \frac{\partial B}{\partial \bar F_{i:}} \rho_i \frac{\partial B}{\partial \bar G_{i:}}^\t + \frac{\partial B}{\partial \bar G_{i:}} \tau_i \frac{\partial B}{\partial \bar G_{i:}} \\
                                        &= z_i^2 \Big \{ (I-F)^\# \mathfrak{g}  \cdot \sigma^i (I-F)^\# \mathfrak{g} + 2 (I-F)^\# \mathfrak{g}  \cdot \rho_i \ell + \ell^\t \tau_i \ell \Big \},
\end{align*}
and the result follows by~\eqref{eq: CLT proof, first formula for asymp var}.
\end{proof}

We now prove of Theorem~\ref{thm: upper bound on asymptotic variance}. To begin, we give some upper bounds on the partial derivatives of the weight vector $w(F)$ with respect to the entries of the overlap matrix $F$.

\begin{definition}\label{def: log partial der}
Let $e_i \in \Real^L$ denote the $i$'th standard basis vector. For $i,j \in \{1,2, \dots, L\}$ with $i \neq j$, define the logarithmic partial derivatives
\begin{align}\label{eq: defn of log partial deriv}
  \frac{\partial \log w_k}{\partial {F}_{ij}}(F) :=&
   \frac{\partial}{\partial F_{ij}}
  \log w_k \left (\sum_{i \neq j} I + F_{ij} \left (e_i e_j^\t - e_i e_i^\t\right ) \right )
  \nonumber \\
  =&
  \left. \frac{d}{d\eps} \right \rvert_{\eps=0}
  \log w_k(F + \eps (e_ie_j^\t - e_i e_i^\t)).
\end{align}
(These partial derivatives must be understood as derivatives of the extension
guaranteed by Lemma~\ref{lem: differentiability of z}; otherwise, they are defined only when
$F_{ij} > 0$ and $F_{ii} >0$.)
\end{definition}

Our definition of logarithmic partial derivatives in~\eqref{eq: defn of log partial deriv} is not standard.
However, we observe that a version of the standard formula relating the total and partial derivatives of $\log w$ holds: For all matrices $H$ whose rows sum to zero,
\begin{equation}\label{eq: total derivative in terms of partial derivatives}
 \left . \frac{d}{d\eps} \right \rvert_{\eps =0} \log w_k (F + \eps H) =
 \sum_{i \neq j} \frac{\partial \log w_k}{\partial F_{ij}}(F) H_{ij}.
\end{equation}
We need only consider matrices whose rows sum to zero, since these are the only perturbations for which $F + \eps H$ can be stochastic.

The following result appears in~\cite[Theorem~3.6]{ThVKWe:Perturbation}. 
It is crucial in our proof of Theorem~\ref{thm: upper bound on asymptotic variance}.

\begin{lemma}\label{lem: bound on logarithmic derivatives}
Recall $\P_i[t_j < t_i]$ and $\frac{\partial \log w_k}{\partial {F}_{ij}}$
from Definitions~\ref{def: definitions of pij} and~\ref{def: log partial der}.
For all stochastic and irreducible matrices $F$, we have
\begin{equation*}
\frac{1}{2} \frac{1}{\P_i [t_j < t_i]} \leq
 \max_k \left \lvert \frac{\partial \log w_k}{\partial {F}_{ij}}(F) \right \rvert
\leq \frac{1}{\P_i [t_j < t_i]}.
\end{equation*}
\end{lemma}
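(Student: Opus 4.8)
The plan is to reduce the logarithmic derivatives to mean first passage times of the chain $Y_n$ from Definition~\ref{def: definitions of pij}, and then to bound these in terms of the escape probabilities $\P_i[t_j<t_i]$. Write $m_{kl}$ for the mean first passage time of $Y_n$ from $k$ to $l$, with $m_{kk}=0$, and write $w_i$ for $w_i(F)$. First I would start from the derivative formula $w'(F)H = w(F)\,H\,(I-F)^\#$ already used in the proof of Theorem~\ref{thm: CLT for EMUS} and taken from~\cite{GolMey:ComputingInvDist}, applying it in the admissible direction $H = e_i e_j^\t - e_i e_i^\t$, whose rows sum to zero. Since $w(F)H = w_i(e_j^\t - e_i^\t)$, this yields
\begin{equation*}
\frac{\partial \log w_k}{\partial F_{ij}}(F) = \frac{w_i}{w_k}\left[(I-F)^\#_{jk} - (I-F)^\#_{ik}\right].
\end{equation*}
Next I would substitute Meyer's identity $m_{kl} = \big((I-F)^\#_{ll} - (I-F)^\#_{kl}\big)/w_l$ relating the group inverse to mean first passage times (again~\cite{GolMey:ComputingInvDist}); the diagonal terms $(I-F)^\#_{kk}$ cancel, collapsing the expression to the clean form
\begin{equation*}
\frac{\partial \log w_k}{\partial F_{ij}}(F) = w_i\,(m_{ik} - m_{jk}),
\end{equation*}
valid for every $k$ (the cases $k=i$ and $k=j$ are checked directly using $m_{ii}=m_{jj}=0$).

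The second ingredient is the probabilistic identity $1/\P_i[t_j<t_i] = w_i\,(m_{ij}+m_{ji})$. I would prove this by watching $Y_n$ only at its visits to $\{i,j\}$: the induced process is a two-state chain whose stationary vector must be proportional both to $(\P_j[t_i<t_j],\,\P_i[t_j<t_i])$ and to $(w_i,w_j)$, and equating the mean return time of this two-state chain with the commute time $m_{ij}+m_{ji}$ yields the identity. With these two formulas in hand the three inequalities follow cleanly. For the upper bound I would establish the triangle inequality $m_{ik}\le m_{ij}+m_{jk}$ by the strong Markov property (pathwise, the hitting times satisfy $T_k\le T_j + T_k\circ\theta_{T_j}$), which gives $|m_{ik}-m_{jk}|\le \max(m_{ij},m_{ji})\le m_{ij}+m_{ji}$ and hence $\max_k\lvert \partial \log w_k/\partial F_{ij}\rvert \le w_i(m_{ij}+m_{ji}) = 1/\P_i[t_j<t_i]$. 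For the lower bound I would simply evaluate the derivative at $k=j$ and $k=i$, obtaining the values $w_i m_{ij}$ and $-w_i m_{ji}$; since $\max(w_i m_{ij},w_i m_{ji})\ge \tfrac12 w_i(m_{ij}+m_{ji})$, the maximum over all $k$ is at least $\tfrac12/\P_i[t_j<t_i]$. Finally, the remaining inequality $\P_i[t_j<t_i]\ge F_{ij}$ is immediate, since stepping directly from $i$ to $j$ in one step is one way to reach $j$ before returning to $i$.

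The hard part will be bookkeeping rather than depth: pinning down the group-inverse/mean-passage-time identity and the escape-probability identity with the correct index conventions and sign placement, and verifying that $\partial \log w_k/\partial F_{ij} = w_i(m_{ik}-m_{jk})$ survives in the degenerate cases $k\in\{i,j\}$. I would also take care that all derivatives are read through the differentiable extension $\tilde w$ of Lemma~\ref{lem: differentiability of z}, so that every expression stays well-defined even when $F_{ij}=0$ or $F_{ii}=0$; because we only evaluate at irreducible stochastic $F$ and perturb in a row-sum-zero direction, the formula $w'(F)H = w(F)H(I-F)^\#$ applies without modification.
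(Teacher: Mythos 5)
Your proposal is correct, but note that the paper itself offers no proof of this lemma at all: it is imported verbatim as Theorem~3.6 of the companion perturbation paper \cite{ThVKWe:Perturbation}, so there is nothing in this document to compare against line by line. Your argument is a legitimate self-contained derivation. The chain of identities checks out: $w'(F)H = w(F)H(I-F)^\#$ applied to $H = e_ie_j^\t - e_ie_i^\t$ gives $\partial_{F_{ij}}\log w_k = \tfrac{w_i}{w_k}\bigl[(I-F)^\#_{jk}-(I-F)^\#_{ik}\bigr]$; Meyer's relation $m_{kl} = \bigl((I-F)^\#_{ll}-(I-F)^\#_{kl}\bigr)/w_l$ (with your convention $m_{kk}=0$) collapses this to $w_i(m_{ik}-m_{jk})$ for every $k$, including $k\in\{i,j\}$; the triangle inequality $m_{ik}\le m_{ij}+m_{jk}$ gives the upper bound, evaluation at $k=i,j$ gives the factor-$\tfrac12$ lower bound, and $\P_i[t_j<t_i]\ge F_{ij}$ is immediate. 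The one place that needs more care than you allow is the identity $1/\P_i[t_j<t_i]=w_i(m_{ij}+m_{ji})$: the ``mean return time'' of the watched two-state chain is counted in watched-chain steps, not in real time, so it cannot simply be equated with the commute time. The clean route is a regeneration-cycle argument: over one $i\to j\to i$ commute cycle of expected length $m_{ij}+m_{ji}$, the expected number of visits to $i$ equals $w_i(m_{ij}+m_{ji})$ by the renewal--reward (cycle) formula, and also equals $1/\P_i[t_j<t_i]$ because the visit count before $t_j$ is geometric. With that substitution your proof is complete, and arguably more transparent than a bare citation, since it exposes the probabilistic meaning of the bound as a comparison of hitting times.
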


We also require the following lemma in the proof of Theorem~\ref{thm: upper bound on asymptotic variance}.

\begin{lemma}\label{lem: properties of covariance matrix}
The asymptotic covariance matrix $\sigma^i$ has the properties:
\begin{enumerate}
\item The rows and colums of $\sigma^i$ sum to zero. That is, for $e \in \Real^L$ the vector of all ones, 
\begin{equation*}
 \sigma^i e= 0 \text{ and } e^\t \sigma^i =0.
\end{equation*}
\item 
 For all $j = 1, \dots, L$, 
 \begin{equation*}\label{eq: sparsity of asymptotic cov}
  \sigma^i_{jk}=\sigma^i_{kj} = 0 \text{ whenever } F_{ik} = 0.
 \end{equation*}
\end{enumerate}
\end{lemma}

\begin{proof}
Since the rows of $\bar F$ sum to one with probability one, we have 
\begin{equation*}
 \var (\bar F_{i:} e) = 0
\end{equation*}
for any fixed number of samples $N_i$. Therefore, the asymptotic variance $\sigma^i$ has $e^\t \sigma^i e = 0$, and it follows that $e^\t \sigma^i = \sigma^i e = 0$ since $\sigma^i$ is symmetric and positive semidefinite.

Let $k$ be such that $F_{ik} = 0$. Since $\bar F_{ik}=0$ with probability one, we have 
\begin{equation*}
 \cov (\bar F_{ik}, \bar F_{ij}) = 0
\end{equation*}
for any $j = 1, \dots, L$, and therefore $\sigma^i_{jk} = 0$. 
\end{proof}

We now prove Theorem~\ref{thm: upper bound on asymptotic variance}.
\begin{proof}[Proof of Theorem~\ref{thm: upper bound on asymptotic variance}]
  We begin with formula~\eqref{eq: CLT proof, first formula for asymp var}:
  \begin{equation}\label{eq: restatement of basic formula for asymptotic variance of emus}
    \sigma^2= \sum_{i=1}^L\kappa_i^{-1} \partial_i B^\t \Sigma_i \partial_i B.
  \end{equation}
  Since the asymptotic covariance matrix $\Sigma_i$ is symmetric and positive semidefinite, the Cauchy inequality holds:
  \begin{equation*}
    a^\t \Sigma_i b \leq \frac12 a^\t \Sigma_i a + \frac12 b^\t \Sigma_i b,  
  \end{equation*}
  for all $a,b \in \Real^{L+1}$. 
  Therefore,
  \begin{align}
    \partial_i B^\t \Sigma_i \partial_i B 
    &=
      \left (\frac{\partial B}{\partial \bar F_{i:}}, \frac{\partial B}{\partial\bar G_{i:}} \right ) 
      \Sigma_i 
      \left (\frac{\partial B}{\partial \bar F_{i:}}, \frac{\partial B}{\partial\bar G_{i:}} \right )^\t
      \nonumber \\ 
    &\leq 2 \left (\frac{\partial B}{\partial \bar F_{i:}}, 0 \right ) 
      \Sigma_i 
      \left (\frac{\partial B}{\partial \bar F_{i:}}, 0 \right )^\t
      + 
      2 \left (\mathbf{0}^\t, \frac{\partial B}{\partial\bar G_{i:}} \right )
      \Sigma_i
      \left (\mathbf{0}^\t, \frac{\partial B}{\partial\bar G_{i:}} \right )^\t 
      \nonumber \\ 
    &= 2 \frac{\partial B}{\partial \bar F_{i:}} \sigma^i \frac{\partial B}{\partial \bar F_{i:}}^\t + 2 \frac{\partial B}{\partial \bar G_{i:}} \tau_i \frac{\partial B}{\partial \bar G_{i:}}^\t. \nonumber \\
    &=: 2A_0 + 2A_1 . \label{eq: cauchy inequality in proof of upper bound}
  \end{align}
  (Here, $\mathbf{0}$ denotes the zero vector in $\Real^L$, interpreted as a column vector.)

  We now estimate the term $A_0$ defined above.
  By~\eqref{eq: delta method proof, formula for derivative wrt F}, we have 
  \begin{align}
    A_0 &= \frac{\partial B}{\partial \bar F_{i:}}^\t \sigma^i \frac{\partial B}{\partial \bar F_{i:}} \nonumber \\
        &= \sum_{j,k,\ell,m=1}^L \mathfrak{g}_\ell \frac{\partial w_\ell}{\partial\bar F_{ij}} \sigma^i_{jk}  \mathfrak{g}_m \frac{\partial w_m}{\partial\bar F_{ij}} \nonumber \\
        &= \sum_{\ell, m=1}^L z_\ell \mathfrak{g}_\ell z_m \mathfrak{g}_m \sum_{\substack{j \neq i \\ F_{ij} >0}}
    \sum_{\substack{k \neq i \\ F_{ik} >0}} 
    \frac{\partial \log w_\ell}{\partial \bar F_{ij}} \sigma^i_{jk} \frac{\partial \log w_m}{\partial \bar F_{ik}}.
    \nonumber \\
        &= \sum_{\ell, m=1}^L z_\ell \mathfrak{g}_\ell z_m \mathfrak{g}_m
          \sum_{\substack{j \neq i \\ F_{ij} >0}}
    \sum_{\substack{k \neq i \\ F_{ik} >0}} 
    \sqrt{\var_{\pi_i}(\psi_j^\ast)}  
    \frac{\partial \log w_\ell}{\partial \bar F_{ij}} 
    R^i_{jk}
    \sqrt{\var_{\pi_i}(\psi_k^\ast)} 
    \frac{\partial \log w_m}{\partial \bar F_{ik}},
    \label{eq: a0 with autocorrelation}
  \end{align}
  where 
  \begin{equation*}
    R^i_{jk}:=
    \frac{\sigma^i_{jk}}{\sqrt{\var_{\pi_i}(\psi_j^\ast)}\sqrt{\var_{\pi_i}(\psi_k^\ast)}}.
  \end{equation*}
  (The third equality above follows from formula \eqref{eq: total derivative in terms of partial derivatives} relating the total and partial derivatives of $\log w$, since the rows and colums of $\sigma^i$ sum to zero by Lemma~\ref{lem: properties of covariance matrix}.)

  We claim that 
  \begin{equation}
    \begin{split}  
    &\sum_{\substack{j \neq i \\ F_{ij} >0}}
    \sum_{\substack{k \neq i \\ F_{ik} >0}} 
    \sqrt{\var_{\pi_i}(\psi_j^\ast)}  
    \frac{\partial \log w_\ell}{\partial \bar F_{ij}} 
    R^i_{jk}
    \sqrt{\var_{\pi_i}(\psi_k^\ast)} 
    \frac{\partial \log w_m}{\partial \bar F_{ik}}
    \leq \tr(R^i) \sum_{\substack{j \neq i \\ F_{ij} >0}}
    \var_{\pi_i}(\psi_j^\ast) 
    \frac{\partial \log w_\ell}{\partial \bar F_{ij}}^2.
    \label{eq: claimed trace estimate}
  \end{split}
\end{equation}
  To prove this, we observe that $R^i$ is symmetric and positive semidefinite since $\sigma^i$ is symmetric and positive semidefinite. Therefore, $R^i$ has the spectral decomposition 
  \begin{equation*}
    R_i = \sum_{j=1}^L \lambda_{i,j} v^{i,j} (v^{i,j})^\t
  \end{equation*}
  with eigenvalues $\lambda_{i,j}>0$ and corresponding eigenvectors $v^{i,j}$ such that $\lVert v^{i,j} \rVert=1$.
  Thus, for any  $a \in \Real^L$, 
  \begin{align}
    a^\t R^i a &= \sum_{j=1}^L \lambda_{i,j} \lvert v^{i,j} \cdot a \rvert^2 
                 \leq  \left (\sum_{j=1}^L \lambda_{i,j} \right ) \lVert a \rVert^2 = \tr(R^i) \lVert a \rVert^2.
                 \label{eq: general trace estimate}
  \end{align}
  Inequality~\eqref{eq: claimed trace estimate} follows from~\eqref{eq: general trace estimate} by setting
  \begin{equation*}
    a_j =
    \begin{cases}
      \sqrt{\var_{\pi_i}(\psi_j^\ast)}  
      \frac{\partial \log w_\ell}{\partial \bar F_{ij}} 
      &\text{ if } 
      j \neq i \text{ and } F_{ij} >0, \text{ and } \\
      0 &\text{ otherwise.}
    \end{cases}
  \end{equation*}

  Finally, combining~\eqref{eq: a0 with autocorrelation},~\eqref{eq: claimed trace estimate}, and Lemma~\ref{lem: bound on logarithmic derivatives} yields
  \begin{equation*}
    A_0 \leq 
    \tr(R^i) \left ( \sum_{\ell=1}^L z_\ell \lvert \mathfrak{g}_\ell \rvert \right )^2  
    \sum_{\substack{j \neq i \\ F_{ij} >0}}
    \frac{\var_{\pi_i}(\psi_j^\ast)}{\P_{i}[t_j < t_i]^2}.
  \end{equation*}
  Moreover, we have 
  \begin{align*}
    \sum_{\ell=1}^L z_\ell \lvert \mathfrak{g}_\ell \rvert 
    &=
      \Psi \sum_{\ell=1}^L z_\ell \lvert \pi_\ell [ g^\ast - \pi[g] \1^\ast ] \rvert 
      \leq \Psi \sum_{\ell=1}^L z_\ell \pi_\ell [ \lvert h \rvert ] 
      = \pi[\lvert h \rvert],
  \end{align*}
  by~\eqref{eq: decomposition of average 2}, and therefore
  \begin{equation}
    \label{eq: final estimate of a0}
    A_0  
    \leq 
    \tr(R^i)\pi[\lvert h \rvert ]^2  
    \sum_{\substack{j \neq i \\ F_{ij} >0}}
    \frac{\var_{\pi_i}(\psi_j^\ast)}{\P_{i}[t_j < t_i]^2}.
  \end{equation}

  We now observe that by~\eqref{eq: delta method proof, formula for derivative wrt g} 
  \begin{equation*} 
    A_1 = z_i^2 \ell^\t \tau_i \ell = \Psi^2 z_i^2 \acov (\bar h_i),
  \end{equation*}
  $\acov ( \bar h_i)$ denotes the asymptotic covariance of the trajectory average $\bar h_i$ of $h$ over the biased process $X^i_t$.
  Therefore, combining~\eqref{eq: restatement of basic formula for asymptotic variance of emus} and~\eqref{eq: final estimate of a0}, we find 
  \begin{equation*}
    \sigma^2 \leq 2\sum_{i=1}^L \frac{1}{\kappa_i}
    \left \{
      z_i^2 \Psi^2 \acov(\bar h_i) 
      + \tr(R^i)\pi[\lvert h \rvert ]^2  
      \sum_{\substack{j \neq i \\ F_{ij} >0}}
      \frac{\var_{\pi_i}(\psi_j^\ast)}{\P_{i}[t_j < t_i]^2}
    \right \},
  \end{equation*}
  as desired.
\end{proof}

When the bias functions are a partition of unity, both the EMUS method and the statements of Theorems~\ref{thm: CLT for EMUS} and~\ref{thm: upper bound on asymptotic variance} simplify considerably. (The bias functions are a partition of unity if and only if $\sum_{i=1}^L \psi_i(x) =1$ for all $x$.) In this case, $f^\ast =f$ for all functions $f$, and the EMUS method reduces to 
\begin{align*}
  \pi_\emus [g]&=\sum_{i=1}^L w_i(\bar F) \bar g_i, 
\end{align*}
where
\begin{align*}
\bar F_{ij} &= N_i^{-1} \sum_{t=1}^{N_i} \psi_j(X^i_t) \text{ and } 
  \bar g_i = N_i^{-1} \sum_{t=1}^{N_i} g(X^i_t).
\end{align*}

\begin{corollary}\label{cor: upper bound, partition of unity}
Suppose that the bias functions are a partition of unity. In that case, Theorem~\ref{thm: upper bound on asymptotic variance} holds with either $\var_\pi(g)$ or $\pi[\lvert g \rvert ]^2$ in place of $\pi[\lvert h \rvert]^2$. In addition, $\Psi=1$, and one can replace $\acov(\bar h_i)$ with the asymptotic variance $\acov (\bar g_i)$ of $\bar g_i$. 
\end{corollary}

\begin{proof}
  When the bias functions are a partition of unity,
  \begin{equation*}
    \pi[\lvert h \rvert]^2 = \pi[\lvert g - \pi[g] \rvert ]^2 \leq  \pi[\lvert g - \pi[g] \rvert^2 ] = \var_\pi(g),
  \end{equation*}
  and so we may replace $\pi[\lvert h \rvert]^2$ with $\var_\pi(g)$. 
  In addition, equation~\eqref{eq: delta method proof, formula for derivative wrt F} holds with $\mathfrak{g}_k = \pi_k[g]$. 
  Thus, following the argument above, one may verify that the result also holds with $\pi[\lvert g \rvert ]^2$ in place of $\pi[\lvert h \rvert]^2$.
\end{proof}


\section{Proof of Theorem~\ref{thm: ergodicity and estimates of iat}}
\label{apx: estimate of integrated autocovariance}

In the arguments below, for any probability measure $\nu$ on a set $\Omega$, we let 
\begin{equation*}
 L^2(\nu) := \{ u: \Omega \rightarrow \Real : \nu[u^2] < \infty\},
\end{equation*}
and we define the $L^2(\nu)$ inner product 
\begin{equation*}
 \langle f,g \rangle_\nu = \nu[fg]
\end{equation*}
with the corresponding norm
\begin{equation*}
 \lVert f \rVert_{L^2(\nu)} := \sqrt{\langle f,f\rangle_\nu}.
\end{equation*}
Given a set $U \subset \Real^d$, we define $L^2(U)$, $\lVert \cdot \rVert_{L^2(U)}$, $\langle , \rangle_U$ to be the analogous function space, norm, and inner product for Lebesgue measure on $U$.

Our proof of Theorem~\ref{thm: ergodicity and estimates of iat} requires a Poincar\'{e} inequality, Lemma~\ref{lem: poincare inequality}. 
We refer to~\cite[Section~3]{LelievreStoltz:PDEinMD} for an introduction to Poincar\'e inequalities and their role in the theory of diffusion processes.

\begin{lemma}\label{lem: poincare inequality}
  Assume that the Poincar\'e inequality holds for $U$ with constant $\Lambda$; that is, assume that for all weakly differentiable $f: U \rightarrow \Real$ so that $\nabla f \in L^2(U)$, 
  \begin{equation*}
  \left  \lVert f - \int_U f \, dx \right \rVert_{L^2(U)} \leq 
    \Lambda(U) \lVert \nabla f\rVert_{L^2(U)}
  \end{equation*}
 We have a similar Poincar\'e inequality for $\pi_h$:
 \begin{equation*}
   \begin{split}
     &\lVert f - \pi_h(f) \rVert_{L^2(\pi_h)}  \leq 
    h \Lambda(U)
    \exp\left (\frac{\beta}{2} \left (  \sup_{U_h} V - \inf_{U_h} V \right ) \right ) \lVert \nabla f\rVert_{L^2(\pi_h)}.
  \end{split}
\end{equation*}
\end{lemma}
\begin{proof}
By a standard scaling argument, the Poincar\'e inequality holds for $U_h$ with constant $h \Lambda$. 
To see this, let $A_h : U \rightarrow U_h$ be the affine transformation
\begin{equation*}
  A_h x = x_0 + h (x-x_0).
\end{equation*}
For any $f: U_h \rightarrow \Real$ with $\nabla f \in L^2(U_h)$, using the change of variable formula and the chain rule, we have 
\begin{align*}
 \left \lVert f - \int_{U_h} f \right \rVert^2_{L^2(U_h)} 
 &= 
 h^d \left \lVert f \circ A_h - \int_U f \circ A_h  \right \rVert^2_{L^2(U)} \\
 &\leq 
 h^d \Lambda^2 
 \left \lVert \nabla (f \circ A_h)  \right \rVert^2_{L^2(U)} \\
 &= h^d h^2 \Lambda^2
 \left \lVert (\nabla f )\circ A_h  \right \rVert^2_{L^2(U)} \\
 &= h^2 \Lambda^2 
 \left \lVert \nabla f \right \rVert^2_{L^2(U_h)}. 
\end{align*}

Now observe that for any $f \in L^2(\pi_h)$,
\begin{equation*}
\lVert f - \pi_h[f] \rVert_{L^2(\pi_h)} = \min_{c \in \Real} \lVert f - c\rVert_{L^2(\pi_h)},
\end{equation*}
since $\pi_h [f]$ is the $L^2(\pi_h)$ orthogonal projection of $f$ onto the space of constant functions.
Therefore, we have
\begin{align*}
\lVert f - \pi_h[f] \rVert_{L^2(\pi_h)}^2
&\leq \left \lVert f - \int_{U_h} f \right \rVert_{L^2(\pi_h)}^2 \\
&\leq \left ( \sup_{x \in U_h} \pi_h(x) \right )
\left \lVert f - \int_{U_h} f \right \rVert_{L^2(U_h)}^2 \\
&\leq h^2 \Lambda^2 \left ( \sup_{U_h} \pi(x)  \right )
\lVert \nabla f \rVert_{L^2(U_h)}^2 \\
&\leq h^2 \Lambda^2 
\frac{\sup_{x \in U_h} \pi_h (x)}{\inf_{x \in U_h} \pi_h(x)}
\lVert \nabla f \rVert_{L^2(\pi_h)}^2,
\end{align*}
and the result follows.
\end{proof}

\begin{remark}
 The Poincar\'e inequality for the Lebesgue measure on a set $U$ holds under very weak conditions on $U$. For example, when $U$ is convex, the Poincar\'e inequality holds with constant $\Lambda (U) = D/\pi$, where $D$ is the diameter of the domain~\cite{payne1960poincare}.
\end{remark}

We now prove Theorem~\ref{thm: ergodicity and estimates of iat}:

\begin{proof}[Proof of Theorem~\ref{thm: ergodicity and estimates of iat}]
We begin by stating a simple consequence of the functional central limit theorem for reversible, continuous time Markov processses:
Let $Y_t$ be a reversible, stationary Markov process with ergodic distribution $\pi$ and generator $L$.
Let $g \in L^2(\pi)$, and define
\begin{equation*}
 \bar g := T^{-1} \int_{s=0}^T g(Y_s)  \, ds.
\end{equation*}
By~\cite[Corollary~1.9]{kipnis1986clt}, 
\begin{equation*}
 \sqrt{T} (\bar g - \pi[g]) \convdist \N (0, \sigma^2(g)),
\end{equation*}
where 
\begin{equation}\label{eq: resolvent formula for asymptotic variance}
 \sigma^2(g) = \langle g - \pi[g], L^{-1} (g-\pi[g]) \rangle_{\pi}.
\end{equation}
Here, $L^{-1} (g-\pi[g])$ denotes any function in the domain of $L$ with \[L(L^{-1}(g-\pi[g]))=g-\pi[g]\]
and $\pi[L^{-1}(g-\pi[g])]=0$.
Such a function must exist when $g \in L^2(\pi)$ and $X_t$ is reversible~\cite{kipnis1986clt}.

We now show that the process $X^h_t$ meets the conditions above for the central limit theorem.
First, we recall that the generator of $X^h_t$ is the operator
\begin{equation*}
 L_h = \beta^{-1} \Delta -\nabla V \cdot \nabla
\end{equation*}
with domain 
\begin{equation*}
D(L_h) := \{g \in C^2(U^h): \nabla g (x) \cdot \mathbf{n}(x) = 0 \text{ for all } x \in \partial U^h \};
\end{equation*}
see~\cite[Proposition~3.2]{andres2009pathwise} for the case of a convex polyhedron or~\cite[Chapter~8]{ethier1986markov} for a domain with $C^3$ boundary.

By~\cite[Theorem~4.3.3]{jiang2004noneq}, a process $Y_t$ with invariant distribution $\pi$ is reversible if its generator is symmetric and it has the strong continuity property 
\begin{equation}\label{eq: strong continuity of Xht}
 \lim_{t \rightarrow 0^+} \lVert T_t f - f \rVert_\pi = 0 \text{ for all } f \in L^2(\pi),
\end{equation}
where $T_t f(x) := \E_x [f(Y_t)]$ denotes the backwards semigroup associated with $Y_t$.
The generator $L_h$ of $X^h_t$ is symmetric, since for all $f,g \in D(L_h)$, using integration by parts, we have
\begin{align}
-\beta^{-1} \langle \nabla f, \nabla g \rangle_{\pi}
&= -\beta^{-1} \int_{U_h} \nabla f \cdot \nabla g z_h^{-1} \exp(-\beta V) \, dx \nonumber \\
&= 
\beta^{-1} \int_{U_h} f \diver (z_h^{-1} \exp(-\beta V) \nabla g) \,  dx 
\nonumber \\
&\quad -\beta^{-1} \int_{\partial U_h} f z_h^{-1} \exp(-\beta V) \nabla g \cdot n \, dS 
\nonumber \\
&= 
\int_{U_h} (\beta^{-1} \Delta g - \nabla V \cdot \nabla g ) f z_h^{-1} \exp(-\beta V) \,  dx \nonumber \\
&= \langle f, L_h g \rangle_\pi.
\label{eq: form corresponding to generator}
\end{align}
(Here, $z_h^{-1} := \int_{U_h} \exp(-\beta V) \, dx$ is the normalizing constant for $\pi_h$.)
Since $\langle \nabla f, \nabla g \rangle_{\pi}$ is invariant under exchanging $f$ and $g$, $\langle f , L_h g \rangle_\pi = \langle L_h f, g \rangle_\pi$ and $L_h$ is symmetric.
We postpone discussion of the strong continuity of $X^h_t$ to the end of the proof.

We now use the Poincar\'e inequality (Lemma~\ref{lem: poincare inequality}) and~\eqref{eq: form corresponding to generator} to prove that $X^h_t$ is ergodic and to estimate the term $L_h^{-1} (g-\pi_h[g])$ appearing in the formula for $\sigma^2_h(g)$; in essence, we adapt the approach outlined in~\cite[Section~3]{LelievreStoltz:PDEinMD} to the family of reflected processes $X^h_t$.
We prove ergodicity first. 
By~\cite[Proposition~2.2]{bhattacharya1982functionalclt}, a process is ergodic if and only if $0$ is a simple eigenvalue of its generator. 
By the Poincar\'e inequality (Lemma~\ref{lem: poincare inequality}) and~\eqref{eq: form corresponding to generator}, for all $u \in D(L_h)$,
\begin{align}
  \lVert u - \pi_h[u] \rVert_{L^2(\pi_h)}^2
  \leq C_h^2 \lVert \nabla u \rVert_{L^2(\pi_h)}^2 
 = C_h^2 \beta\langle u, -L u \rangle_{\pi_h} 
 \leq C_h^2 \beta \lVert u \rVert_{L^2(\pi_h)}
 \lVert L u \rVert_{L^2(\pi_h)}, \label{eq: coercivity for biased processes}
\end{align}
where 
\begin{equation*}
 C_h = h \Lambda(U)
    \exp\left (\frac{\beta}{2} \left (  \sup_{U_h} V - \inf_{U_h} V \right ) \right ).
\end{equation*}
Now if $u$ is not constant, $\lVert u - \pi_h[u] \rVert_{L^2(\pi_h)}^2 > 0$, so $\lVert L_h u \rVert_{L^2(\pi_h)}> 0$ and $u$ is not an eigenvector with eigenvalue $0$.
Hence, $0$ is a simple eigenvalue of $L_h$, and $X^h_t$ is ergodic.

Finally, we estimate $\sigma^2_h(g)$. 
We have
\begin{equation*}
 \lVert u \rVert_{L^2(\pi_h)} \leq C_h^2 \beta \lVert Lu \rVert_{L^2(\pi_h)}. 
\end{equation*}
Taking $u = L_h^{-1} (g - \pi_h[g])$ in the above yields
\begin{equation*}
 \lVert L_h^{-1} (g - \pi_h[g]) \rVert_{L^2(\pi_h)} \leq
 C_h^2 \beta \lVert g - \pi_h[g] \rVert_{L^2(\pi_h)},
\end{equation*}
which implies 
\begin{equation*}
 \sigma^2_h(g) = \langle g - \pi_h[g], L^{-1} (g-\pi_h[g])\rangle_{\pi_h} \leq C_h^2 \beta \var_{\pi_h}(g), 
\end{equation*}
using the Cauchy--Schwarz inequality.

It remains to show that the process $X^h_t$ has the strong continuity property~\eqref{eq: strong continuity of Xht}. 
We only sketch an argument, since the basic ideas are standard. 
First, one can use the Lipschitz continuity of strong solutions of the reflected process~\cite[Lemma~4.1]{andres2009pathwise} to show that $X^h_t$ has the Feller property.
(That is, one can show that $T_t u$ is continuous whenever $u$ is continuous.)
In addition, since the process $X^h_t$ has an infinitesimal generator, we have the pointwise continuity property
\begin{equation}\label{eq: pointwise continuity at zero}
 \lim_{t \rightarrow 0^+} T_t u(x) = u(x)
\end{equation}
for all $x \in U_h$ and all $u \in D(L_h)$. 
Now we have $\lVert T_t \rVert_\infty \leq 1$ for all $t\geq 0$, where  $\lVert T_t \rVert_{\infty}$ is the operator norm of $T_t$ on the space of continuous functions with the sup-norm, and therefore by a density argument the limit~\eqref{eq: pointwise continuity at zero} holds for all continuous $u$.
Hence, by~\cite[Lemma~1.4]{bottcher2013fellerprimer}, we have 
\begin{equation*}
 \lim_{t \rightarrow 0^+} \sup_{x \in U_h} \lvert T_t u (x)-  u(x) \rvert=0
\end{equation*}
for all continuous $u$.
The strong continuity property~\eqref{eq: strong continuity of Xht} then follows by another density argument, using that $\lVert T_t \rVert_{L^2(\pi_h)} \leq 1$ for all $t \geq 0$.
\end{proof}
\section{Proof of Theorems~\ref{thm: low temp EMUS} and~\ref{sec: small probability limit}}
\label{apx: proof of low temp limit}
\begin{proof}[Proof of Theorem~\ref{thm: low temp EMUS}]
By Corollary~\ref{cor: upper bound, partition of unity}, since the bias functions are a partition of unity, we have 
\begin{align}
  \sigma^2(g) 
  \leq 
  2 \sum_{\iidx \in \mathbb{Z}^d/K\mathbb{Z}^d} 
  \kappa_\iidx^{-1}
  \Bigg \{
  &C(\bar g_\iidx) z_\iidx^2 
   + \var_\pi(g) \tr (R^\iidx)
    \sum_{\substack{ \jidx \neq \iidx \\ F_{\iidx\jidx} >0}}
  \frac{1}{F_{\iidx\jidx}} \Bigg \}.
  \label{eq: restatement of upper bound on variance for low temp}
\end{align}
To prove the desired upper bound, we substitute estimates of $C(\bar g_\iidx)$, $R^\iidx$, and $F_{\iidx\jidx}$ into the inequality above.

First, we consider the asymptotic covariances $R^\iidx$ and $C(\bar g_\iidx)$.
Let 
\begin{equation*}
 h=1/K.
\end{equation*}
The diameter of $U_\iidx$ is $2 \sqrt{d} h$, so by Assumption~\ref{asm: dependence of iat on strata}
\begin{align}
  R^\iidx_{\jidx \jidx}
  &\leq 
    C h^a \beta^b \exp \left ( 2 \sqrt{d} h\beta \lVert \nabla V \rVert_{L^\infty}  \right ) \leq
    C h^{a-b} \exp \left ( 2 \sqrt{d} \lVert \nabla V \rVert_{L^\infty} \right ).
    \label{eq: upper bound on sigma in low temp}
\end{align}
(The second inequality follows since $h\beta \leq 1$ by definition.) 
Similarly, 
\begin{equation}\label{eq: upper bound on tau in low temp}
 \acov (\bar g_\iidx) \leq C h^{a-b} \exp \left ( 2 \sqrt{d} \lVert \nabla V \rVert_{L^\infty} \right ) \var_{\pi_\iidx}(g). 
\end{equation}

Second, by Lemma~\ref{lem: size of entries in F}, the nonzero entries of the overlap matrix $F$ are bounded below as $\beta$ tends to infinity:
\begin{equation} \label{eq: lower bound on F in low temp}
 F_{\iidx\jidx} \geq \frac{\exp \left (- 2 \sqrt{d} \lVert \nabla V \rVert_{L^\infty} \right )}{4^d} 
\end{equation}
for all $\iidx, \jidx$ so that $F_{\iidx, \jidx}>0$.
We also observe that each row of $F$ has $3^d$ nonzero entries, since $F_{\iidx, \iidx+\mathbf{k}} > 0$ only when all entries of $\mathbf{k}$ belong to $\{-1,0,1\}$.

We now estimate the term involving $\acov(\bar g_\iidx)$ in~\eqref{eq: restatement of upper bound on variance for low temp}.
By~\eqref{eq: upper bound on tau in low temp}, we have
\begin{align}
  \sum_{\iidx \in \mathbb{Z}^d/K\mathbb{Z}^d} z_\iidx^2 \acov(\bar g_\iidx) 
  \leq  
    C h^{a-b} &\exp \left ( 2 \sqrt{d} \lVert \nabla V \rVert_{L^\infty} \right ) \sum_{\iidx \in \mathbb{Z}^d/K\mathbb{Z}^d} z_\iidx^2 \var_{\pi_\iidx}(g).
    \label{eq: first estimate of g part of variance in low temp proof}
\end{align}
Now we have 
\begin{align*}
 \var_{\pi_\iidx}(g) 
 &= \pi_\iidx \left [ \lvert g - \pi_\iidx[g] \rvert^2 \right ]
 \leq \pi_\iidx \left [ \left \lvert g - \pi[g] \right \rvert ^2 \right ].
\end{align*}
Therefore,
\begin{align}
\sum_{\iidx \in \mathbb{Z}^d/K\mathbb{Z}^d} z_\iidx^2 \var_{\pi_\iidx}(g) 
&\leq 
\sum_{\iidx \in \mathbb{Z}^d/K\mathbb{Z}^d} z_\iidx \pi_\iidx \left [ \left \lvert g - \pi[g] \right \rvert ^2 \right ]
= \pi \left [ \left \lvert g - \pi[g] \right \rvert ^2 \right ] = \var_{\pi}(g).
\label{eq: second estimate of g part of variance in low temp proof}
\end{align}
(The inequality follows since $0 \leq z_\iidx \leq 1$ for all $\iidx$; the second to last equality follows using~\eqref{eq: final decomposition formula} and that $\{\psi_\iidx\}_{\iidx \in \mathbb{Z}^d/K\mathbb{Z}^d}$ is a partition of unity.)
Thus,
\begin{equation}\label{eq: final estimate of g part of variance in low temp proof}
  \sum_{\iidx \in \mathbb{Z}^d/K\mathbb{Z}^d} z_\iidx^2 \acov(\bar g_\iidx) 
 \leq  
 C h^{a-b} \exp \left ( 2 \sqrt{d} \lVert \nabla V \rVert_{L^\infty} \right ) 
 \var_{\pi}(g).
\end{equation}

It remains to address the term involving $R^\iidx$ in~\eqref{eq: restatement of upper bound on variance for low temp}:
Using~\eqref{eq: upper bound on sigma in low temp},~\eqref{eq: lower bound on F in low temp}, and that each row of $F$ has $3^d$ nonzero entries, we have 
\begin{align}
 \tr (R^\iidx) \sum_{\substack{ \jidx \neq \iidx \\ F_{\iidx\jidx} >0}}
  \frac{1}{F_{\iidx\jidx}} 
  &\leq C 6^{2d} h^{a-b} \exp \left (4 \sqrt{d} \lVert \nabla V \rVert_{L^\infty} \right ) 
  \label{eq: final estimate of sigma part of variance in low temp proof}
\end{align}
for every $\iidx \in \mathbb{Z}^d/K\mathbb{Z}^d$.
Finally, using~\eqref{eq: final estimate of g part of variance in low temp proof},~\eqref{eq: final estimate of sigma part of variance in low temp proof}, and $\kappa_\iidx^{-1} = K^d = \lceil \beta \rceil^d$, we conclude 
\begin{align*}
 \sigma^2(g) 
 &\leq 
   2 C h^{a-b} \Bigg \{ K^d \exp \left ( 2 \sqrt{d} \lVert \nabla V \rVert_{L^\infty} \right ) + 6^{2d} K^{2d} \exp \left (4 \sqrt{d} \lVert \nabla V \rVert_{L^\infty} \right )  \Bigg \} \var_{\pi}(g)\\
&\leq \left ( D \lceil \beta \rceil^{d+b-a} 
+  E \lceil \beta \rceil^{2d+b-a} \right ) \var_{\pi}(g),
\end{align*}
where the constants $D$ and $E$ depend on $d$ and $V$, but not on $g$ or $\beta$.
\end{proof}

We note that if one uses the bias functions proposed in Appendix~\ref{apx: sparse grid}, then the constants $D$ and $E$ in the proof of Theorem~\ref{thm: low temp EMUS} grow only polynomially with the dimension $d$, not exponentially.
However, we do not claim that those bias functions perform better than the uniform grid~\eqref{eq: uniform grid of strata} or the bias functions of Section~\ref{subsec: numerical experiments} in practice.

We now prove Theorem~\ref{thm: small prob EMUS}:

\begin{proof}[Proof of Theorem~\ref{thm: small prob EMUS}] 
  Take $g := \1_{x \geq M}$.
Since the bias functions are a partition of unity, by Corollary~\ref{cor: upper bound, partition of unity}, we have 
\begin{equation}\label{eq: restatement of upper bound on variance for tails}
  \sigma^2_M 
 \leq 
 2 \sum_{i =0}^{K+1} 
 \kappa_i^{-1}
 \left \{
 \acov(\bar g_i)  z_i^2 +
 p_M^2 \tr (R^i)
\sum_{\substack{ j \neq i \\ F_{ij} >0}}
\frac{1}{F_{ij}} \right \}.
\end{equation}

First, we estimate 
\begin{equation*}
  R^i_{jj} \leq C h^a \exp \left (h \max_{x \leq M} \lvert V'(x) \rvert \right ) 
  \leq Ce h^a
\end{equation*}
for all $i=1, \dots, K-1$ by Assumption~\ref{asm: dependence of iat on strata}.
By Assumption~\ref{asm: sampling unbounded strata},
\begin{equation*}
  R^K_{jj} \leq D \text{ for } j = K-1,K,
\end{equation*}
and $R^K_{jj} = 0$ for $j \neq K-1,K$, since $\psi_{K+1}$ is constant over the support of $\pi_K$.
In addition,
\begin{equation*}
  R^{K+1}_{jj} = 0 \text{ for all } j=1, \dots, L,
\end{equation*}
since all bias functions $\psi_i$ take a constant value over the support of $\pi_{K+1}$.
Likewise,
\begin{equation*}
  \acov(\bar g_K) \leq Ce h^a \var_{\pi_K} (g) \leq Ceh^a,
\end{equation*}
and $\acov(\bar g_i) = 0$ for all $i\neq K$.

We now show that the nonzero entries of the overlap matrix are bounded below independent of $M$. 
First, we estimate the entries which are averages over the biased distributions with bounded support.
By Lemma~\ref{lem: size of entries in F}, we have
\begin{equation*}
 F_{ij} \geq \frac{1}{2\exp(2)} > 0
\end{equation*}
for all ${i=0, \dots, K-1}$ and $j$ so that $F_{ij}>0$.
whenever $F_{ij} > 0$.
It remains to address those entries related to biased distributions with unbounded support, so with $i=K,K+1$. 
By Lemma~\ref{lem: lower bound on F in small prob limit}, $F_{K,K+1}$ and $F_{K,K-1}$ are bounded below by some $\theta >0$ independent of $M$, for this choice of bias functions. 
(Lemma~\ref{lem: lower bound on F in small prob limit} and its proof appear in Appendix~\ref{apx: proof of low temp limit}. Lemma~\ref{lem: lower bound on F in small prob limit} is the only part of the proof which relies on Assumption~\ref{asm: assumptions for small prob limit}.)
In addition, for any $i=0,\dots,K+1$, we have $F_{ii}=\frac12$, which implies $F_{K+1,K}=1-F_{K+1,K+1}=\frac12$ since $F$ is stochastic when the bias functions are a partition of unity. 

Finally, we substitute the above estimates of the overlap matrix and the variances into~\eqref{eq: restatement of upper bound on variance for tails}.
Let $c = \min \{\theta, 1/2\exp(2)\}$.
Observe that $h$ decreases with $M$, so $C e h^a \leq E$ for some constant $E$, uniformly in $M$. Let $F=\max\{D,E\}$. 
We have 
\begin{align*}
 \frac{\sigma^2}{p_M^2} &\leq 
  \frac{2}{p_M^2} \sum_{i =0}^{K+1} 
 (K+2)
 \left \{
 \acov(\bar g_i)  z_i^2 +
                          p_M^2 (2F) \frac{2}{c^2} \right \}
                          \leq 
 2(K+2) F  \frac{z_K^2}{p_M^2} +
 \frac{4(K+2)^2}{c^2}.
\end{align*}
We now observe that 
\begin{equation*}
 \frac{z_K}{p_M} = \frac{z_K}{z_{K+1}} = \frac{F_{K+1,K}}{F_{K,K+1}} \leq \frac{1}{c}.
\end{equation*}
Therefore, 
\begin{equation*}
 \frac{\sigma^2}{p_M^2} \leq \frac{2F(K+2) + 4F(K+2)^2}{c^2},
\end{equation*}
which proves the result.
\end{proof}

We now prove Lemma~\ref{lem: lower bound on F in small prob limit}, which is used in the proof of Theorem~\ref{thm: small prob EMUS}.

\begin{lemma}\label{lem: lower bound on F in small prob limit}
Under the hypotheses of Theorem~\ref{thm: small prob EMUS}, there exist constants ${M_1, \theta_+, \theta_- >0}$ depending on $V$ but not on $M$ so that 
\begin{equation*}
 F_{K,K+1} \geq \theta_+ >0 \text{ and } F_{K,K-1} \geq \theta_- >0
\end{equation*}
whenever $M \geq M_1$.
\end{lemma}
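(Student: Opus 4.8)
The plan is to reduce the lemma to a two-sided comparison of two integrals and then to carry out that comparison using the convexity and ergodicity hypotheses of Assumption~\ref{asm: assumptions for small prob limit}. Since the bias functions in~\eqref{eq: bias functions for tails} form a partition of unity, $F_{ij} = \pi_i[\psi_j] = \frac{\int \psi_i \psi_j e^{-V}\,dx}{\int \psi_i e^{-V}\,dx}$. Writing $A := \int_{M-h}^M e^{-V(x)}\,dx$ and $B := \int_M^\infty e^{-V(x)}\,dx$, and recalling that $\psi_K$, $\psi_{K-1}$, $\psi_{K+1}$ have supports $[M-h,\infty)$, $[M-2h,M]$, and $[M,\infty)$ with value $\tfrac12$, one computes directly
\begin{equation*}
F_{K,K+1} = \frac{B}{2(A+B)}, \qquad F_{K,K-1} = \frac{A}{2(A+B)}.
\end{equation*}
Thus it suffices to bound the ratio $A/B$ both above and below by constants depending only on $V$: indeed $F_{K,K+1} = \frac{1}{2(1 + A/B)}$ and $F_{K,K-1} = \frac{A/B}{2(1+A/B)}$, so an upper bound on $A/B$ yields $\theta_+$ and a positive lower bound on $A/B$ yields $\theta_-$.

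Next I would estimate $A$ and establish the upper bound on $B$. Taking $M_1 \ge M_0 + 1$ ensures $M - h \ge M_0$ (recall $h \le 1$), so $V$ is increasing and convex on $[M-h,\infty)$ by~\eqref{eq: convexity condition in small prob limit}. Monotonicity of $e^{-V}$ gives $h\,e^{-V(M)} \le A \le h\,e^{-V(M-h)}$, and since $V(M) - V(M-h) \le h\max_{[0,M]}|V'| \le 1$ by the choice of $K$ (cf.\@~\eqref{eq: small prob proof bound on range of V}), we obtain $h\,e^{-V(M)} \le A \le e\,h\,e^{-V(M)}$. For the tail, integration by parts yields
\begin{equation*}
B = \int_M^\infty \frac{1}{V'}\,\bigl(V' e^{-V}\bigr)\,dx = \frac{e^{-V(M)}}{V'(M)} - \int_M^\infty \frac{V''}{(V')^2}e^{-V}\,dx \le \frac{e^{-V(M)}}{V'(M)},
\end{equation*}
using $V'' \ge 0$ and $V' > 0$. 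Combining with $A \ge h\,e^{-V(M)}$ gives $A/B \ge h\,V'(M)$, which I would bound below by a positive constant using that $V'(M) \ge \sqrt{c/\alpha}$ (from~\eqref{eq: ergodicity condition in small prob limit} together with $V'' \ge 0$) and $h = 1/\lceil \max_{[0,M]}|V'|\rceil$. The only care needed is whether $\max_{[0,M]}|V'|$ is attained at $M$ or on the fixed compact set $[0,M_0]$; in either case $h\,V'(M)$ is bounded below independently of $M$.

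The crux is the matching \emph{lower} bound on $B$, and this is where the ergodicity condition~\eqref{eq: ergodicity condition in small prob limit} is essential. Since $V'' \le \alpha (V')^2$, the function $-1/V'$ has derivative $V''/(V')^2 \le \alpha$, so integrating over $[M, M+h]$ gives $\tfrac{1}{V'(M)} - \tfrac{1}{V'(M+h)} \le \alpha h$; because $h\,V'(M) \le 1$ this rearranges to $V'(M+h) \le V'(M)/(1-\alpha)$, whence $V(M+h) - V(M) \le h\,V'(M+h) \le 1/(1-\alpha)$. Therefore $B \ge \int_M^{M+h} e^{-V}\,dx \ge h\,e^{-V(M+h)} \ge h\,e^{-V(M)}e^{-1/(1-\alpha)}$, and comparing with $A \le e\,h\,e^{-V(M)}$ gives $A/B \le e^{1 + 1/(1-\alpha)}$. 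The two bounds on $A/B$ then deliver $\theta_+$ and $\theta_-$. I expect the main obstacle to be precisely this two-sided control of the tail integral $B$ against the local integral $A$: the upper bound on $B$ needs only convexity, but the lower bound genuinely requires the quantitative estimate~\eqref{eq: ergodicity condition in small prob limit} to prevent $V'$ (and hence the decay of $e^{-V}$) from blowing up across the single step of width $h$ just beyond $M$.
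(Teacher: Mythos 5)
Your proof is correct and follows essentially the same route as the paper's: both reduce the two entries to ratios of $A=\int_{M-h}^M e^{-V}\,dx$ and $B=\int_M^\infty e^{-V}\,dx$, bound the tail integral above using convexity (your integration by parts is equivalent to the paper's tangent-line bound $V(x)\geq V(M)+V'(M)(x-M)$), bound it below across one step of width $h$ via the Riccati inequality $V''\leq\alpha (V')^2$ exactly as the paper does, and use the choice of $K$ to keep $hV'$ of order one. The only differences are bookkeeping: anchoring both integrals at $M$ rather than $M-h$ lets you skip the paper's mean-value-theorem comparison of $V'(M-h)$ with $V'(M)$, and your observation that $V'\geq\sqrt{c/\alpha}$ on $[M_0,\infty)$ slightly streamlines the paper's two-case analysis of $V'/\lceil\max_{[0,M]}\lvert V'\rvert\rceil$ according to whether $V'$ is bounded.
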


\begin{proof}
We consider $F_{K,K-1}$ first. 
We have 
\begin{equation*}
 F_{K,K-1} = \frac12 \frac{\pi([M-h,M))}{\pi([M-h, \infty))} 
 = \frac12 
 \frac{\int_{M-h}^M \exp(-V(x)) \, dx}{\int_{M-h}^\infty \exp(-V(x)) \, dx}.
\end{equation*}
By the integral mean value theorem,
\begin{equation*}
 \int_{M-h}^M \exp(-V(x)) \, dx = h \exp(-V(\xi_{M-h,M}))
\end{equation*}
for some $\xi_{M-h, M} \in [M-h, M]$. 
Moreover, by~\eqref{eq: convexity condition in small prob limit}, we have 
\begin{equation*}
 V(x) \leq V(M) + V'(M) (x-M) \text{ for all } x \geq M \geq M_0.
\end{equation*}
Therefore, when $M -h\geq M_0$,
\begin{align*}
  \int_{M-h}^\infty \exp(-V(x)) \, dx
 &\leq \int_{M-h}^\infty \exp(-V(M-h) - V'(M-h) (x-M+h)) \, dx\\
 &= \frac{\exp(-V(M-h))}{V'(M-h)}.
\end{align*}
It follows that 
\begin{align}
 F_{K,K-1} &\geq h  V'(M-h) \exp(V(M-h)-V(\xi_{M-h,M})) \nonumber \\
 &\geq h  V'(M-h) \exp \left (-h \max_{x \leq M} \lvert V'(x) \rvert \right ) \nonumber \\
 &\geq h V'(M-h)  \exp(-1) \nonumber \\
 &= \frac{V'(M-h)}
 { \left \lceil \max_{x \leq M} \lvert V'(x) \rvert \right \rceil}  \exp(-1),
 \label{eq: quotient of derivatives in small prob lemma}
\end{align}
using the definition $h=M/K$. 

To estimate the quotient in expression~\eqref{eq: quotient of derivatives in small prob lemma}, we distinguish two cases: 
By~\eqref{eq: convexity condition in small prob limit}, $V'$ is nondecreasing on $[M_0, \infty)$, so either ${\lim_{x \rightarrow \infty} V'(x) = C_2 < \infty}$ or ${\lim_{x \rightarrow \infty} V'(x) = \infty}$. 
In the first case, $V'$ is bounded, and we have
\begin{equation}\label{eq: small prob lemma v' bounded case}
 \frac{V'(M-h)}
 { \left \lceil \max_{x \leq M} \lvert V'(x) \rvert \right \rceil} 
 \geq 
  \frac{V'(M_0)}
 { \left \lceil \max_{x \in [0,\infty)} \lvert V'(x) \rvert \right \rceil} 
 >0, 
\end{equation}
whenever $M-h \geq M_0$.
In the second case, for $M$ sufficiently large, 
\begin{equation*}
 \max_{x \leq M} \lvert V'(x) \rvert =  V'(M).
\end{equation*}
Therefore, applying in succession the mean value theorem, the monotonicity of $V'$, assumption~\eqref{eq: ergodicity condition in small prob limit}, and the hypothesis $\lim_{x\rightarrow \infty} V'(x) = \infty$, we have that for all $M$ sufficiently large,
\begin{align}
 \frac{V'(M-h)}{\left \lceil \max_{x \leq M} \lvert V'(x) \rvert \right \rceil} 
 &= \frac{V'(M)}{\lceil V'(M) \rceil} 
 - \frac{V'(M)-V'(M-h)}{\lceil V'(M) \rceil} \nonumber \\
 &\geq \frac{V'(M)}{\lceil V'(M) \rceil}
 -\frac{h V''(\eta_{M-h, M})}{V'(M)} \nonumber \\
 &\geq \frac{V'(M)}{\lceil V'(M) \rceil}
 -\frac{ V''(\eta_{M-h, M})}{V'(\eta_{M-h,M})^2} \nonumber \\
 &\geq \frac{V'(M)}{\lceil V'(M) \rceil}
 -\alpha \nonumber \\
 &\geq \frac{1 - \alpha}{2} \label{eq: small prob lemma v' unbounded case} \\
 &>0. \nonumber
\end{align}
(In the second and third lines above, $\eta_{M-h,M} \in [M-h, M]$ denotes the point guaranteed by the mean value theorem so that ${V'(M)-V'(M-h) = hV''(\eta_{M-h,M})}$.)
It follows from~\eqref{eq: quotient of derivatives in small prob lemma}, \eqref{eq: small prob lemma v' bounded case}, and~\eqref{eq: small prob lemma v' unbounded case} that there exist $M_-,\theta_- >0$ so that 
\begin{equation}\label{eq: small prob proof Fk,k-1 lower bound}
 F_{K,K-1} \geq \theta_- >0
\end{equation}
whenever $M \geq M_-$.

Now we prove that $F_{K,K+1}$ is bounded below. 
We have
\begin{align}
 F_{K,K+1} &= \frac12 
 \frac{\int_M^\infty \exp(-V(x)) \, dx}
 {\int_{M-h}^\infty \exp(-V(x)) \, dx} \nonumber \\
 &= F_{K,K-1}
 \frac{\int_M^\infty \exp(-V(x)) \, dx}
 {\int_{M-h}^M \exp(-V(x)) \, dx} \nonumber \\
 &\geq \theta_-
 \frac{\int_M^{M+h} \exp(-V(x)) \, dx}{\int_{M-h}^M \exp(-V(x)) \, dx} \nonumber \\
 &\geq \theta_-
 \frac{\int_M^{M+h} \exp(V(x-h)-V(x))\exp(-V(x-h)) \, dx}
 {\int_{M-h}^M \exp(-V(x)) \, dx} \nonumber \\
 &\geq \theta_- \exp\left (\min_{[M-h,M+h]} V - \max_{[M-h,M+h]} V \right ) \nonumber \\
 &\geq \theta_- \exp \left (-2h \max_{[M-h,M+h]} \lvert V' \rvert \right ).
 \label{eq: exponent in lower bound lemma small prob section}
\end{align}

As above, to bound the quantity appearing in the exponent in~\eqref{eq: exponent in lower bound lemma small prob section}, we distinguish the two cases $\lim_{x \rightarrow \infty} V'(x) = C_1 < \infty$ and ${\lim_{x \rightarrow \infty} V'(x) = \infty}$. 
In the first case, for $M$ sufficiently large that ${2C_1 \geq \lvert V'(x) \rvert \geq C_1/2}$ whenever ${x \geq M-h}$, we have 
\begin{equation}
 h \max_{[M-h,M+h]} \lvert V' \rvert 
 = \frac{\max_{[M-h,M+h]} \lvert V' \rvert}
 {\left \lceil \max_{[0, M]} \lvert V' \rvert \right \rceil} 
 \leq \frac{2 C_1}{C_1/2} =4.
 \label{eq: lower bound on F small prob limit equation}
\end{equation}
In the second case, for $M$ sufficiently large, 
\begin{align}
  h \max_{[M-h,M+h]} \lvert V' \rvert 
 = \frac{\max_{[M-h,M+h]} \lvert V' \rvert}
 {\left \lceil \max_{[0, M]} \lvert V' \rvert \right \rceil} 
 \leq \frac{V'(M+h)}{V'(M)}.
 \label{eq: lower bound proof quotient of derivatives}
\end{align}
By~\eqref{eq: ergodicity condition in small prob limit}, we have the differential inequality
\begin{equation*}
 V'' < \alpha \lvert V' \rvert^2.
\end{equation*}
This implies 
\begin{equation*}
 V'(M+s) \leq y'(s)
\end{equation*}
for
\begin{equation*}
 y(s) = \frac{1}{V'(M)^{-1} - \alpha s}
\end{equation*}
the solution of the initial value problem 
\begin{equation*}
 y' = \alpha y^2 \text{ and } y(0) = V'(M).
\end{equation*}
Therefore, 
\begin{align*}
 V'(M+h) &\leq \frac{1}{V'(M)^{-1} -\alpha h} =\frac{1}{V'(M)^{-1} -\alpha \lceil V'(M) \rceil^{-1}} \leq \frac{V'(M)}{1-\alpha},
\end{align*}
so by~\eqref{eq: lower bound proof quotient of derivatives}, 
\begin{equation}
 h \max_{[M-h,M+h]} \lvert V' \rvert \leq \frac{1}{1-\alpha}.
 \label{eq: last equation in small prob lower bound on F}
\end{equation}
It follows from~\eqref{eq: exponent in lower bound lemma small prob section}, \eqref{eq: lower bound on F small prob limit equation}, and~\eqref{eq: last equation in small prob lower bound on F} that there exist $M_+,\theta_+ >0$ so that 
\begin{equation}\label{eq: small prob proof Fk,k+1 lower bound}
 F_{K,K+1} \geq \theta_+ >0
\end{equation}
whenever $M \geq M_+$.
\end{proof}

\section{Improved Method of Computing Error Bars}
\label{sec:error-bars}
\newcommand{\avevec}{v}
In \cite[Section VII.B.1]{Thiede2016}, we proposed a practical method of estimating the asymptotic standard deviations (error bars) of averages computed by EMUS. Using the notation established in Appendix~\ref{apx: proof of clt}, our method proceeds as follows:
\begin{enumerate}
  \item Compute $\bar F$, $\{\bar g^\ast_i\}_{i=1}^L$, and $\{\bar 1^\ast_i\}_{i=1}^L$.
\item Compute $w(\bar F)$ and the group inverse $(I-\bar F)^\#$.
  \item Evaluate $\partial_i B$ at $\bar F$, $\{\bar g^\ast_i\}_{i=1}^L$, and $\{\bar \1^\ast_i\}_{i=1}^L$.
  \item Compute the time series
    \begin{equation*}
      \begin{split}
        \bar \zeta^i_t = \partial_i B \cdot &\Big ( \left ( \psi_1(X^i_t), \dots, \psi_L(X^i_t),g^\ast(X^i_t), 1^\ast(X^i_t) \right ) - \left (\bar F_{i1}, \dots, \bar F_{iL},  \bar g^\ast_i, \bar \1^\ast_i \right ) \Big ).
      \end{split}
    \end{equation*}
  \item Compute an estimate $\bar \chi_i^2$ of the integrated autocovariance of
  $\bar \zeta^i_t$ using an algorithm such as ACOR~\cite{acor}.
  \item Compute as an estimate $\sigma^2$ the quantity
  \begin{equation}\label{eq:approx formula for asymptotic variance}
    \bar \sigma^2 := \sum_{i=1}^L \frac{\bar \chi_i^2}{\kappa_i}.
  \end{equation}
\end{enumerate}

We originally proposed computing the group inverse $(I-\bar F)^\#$ using the method of \cite{GolMey:ComputingInvDist} based on the QR factorization. We have since discovered that this method does not always yield sufficiently accurate results. For example, when computing error bars for the marginal in $\mu_2$ in Section~\ref{subsec: numerical experiments}, we observed a highly oscillatory numerical error affecting some entries of $(I-\bar F)^\#$.
That the sign pattern in Figure~\ref{fig:sign-pattern-qr} fails to be symmetric is evidence of this numerical error.
We note that since the exact overlap matrix $F$ is in detailed balance with $w(F)$, we have $\diag(w(F)) F  \diag(w(F))^{-1} = F^\t$.
(Here, $\diag(w(F))$ denotes the diagonal matrix with $w(F)$ along the diagonal.)
Therefore,
\begin{equation*}
  ((I-F)^\#)^\t = \diag(w(F)) (I-F)^\# \diag(w(F))^{-1},
\end{equation*}
which implies that the sign pattern of $(I-F)^\#$ is symmetric since $w(F)$ is positive.
As a result of these numerical errors, we were unable to accurately compute error bars for the EMUS estimate of the marginal density.

\begin{figure}[ht]
  \subfloat[\label{fig:sign-pattern-qr}]{\includegraphics[width=0.5\linewidth]{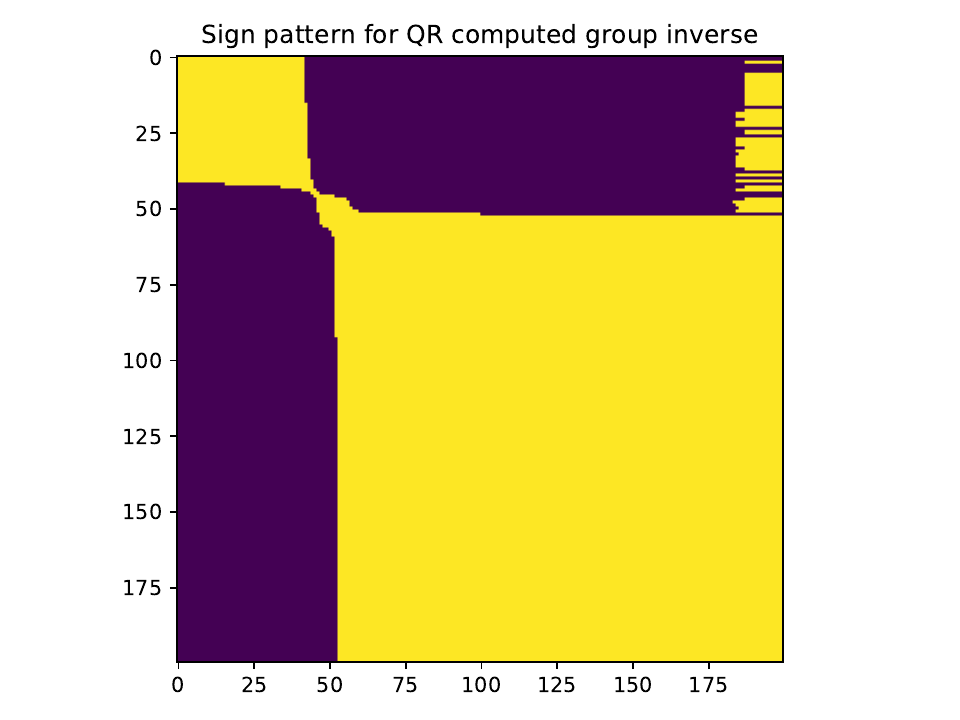}}
\subfloat[\label{fig:sign-pattern-iterative}]{\includegraphics[width=0.5\linewidth]{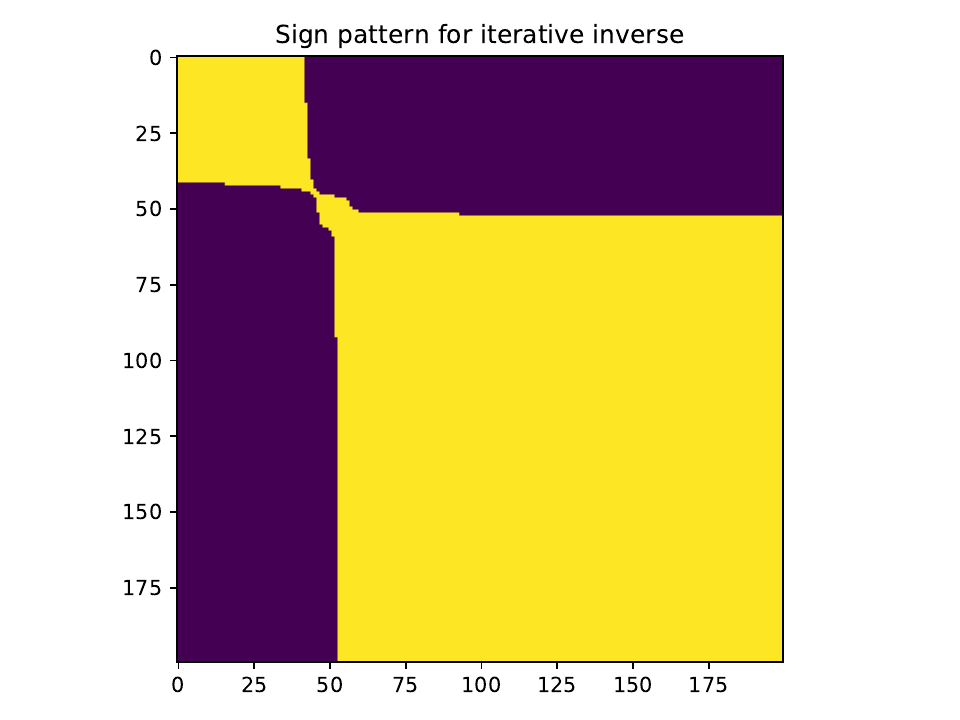}}
\caption{Sign pattern of group inverse $(I-\bar F)^\#$ computed by method of \cite{GolMey:ComputingInvDist} (Figure~\ref{fig:sign-pattern-qr}) and using power iteration (Figure~\ref{fig:sign-pattern-iterative}). Yellow indicates an entry with positive sign, blue a negative sign. Here, we consider the overlap matrix $\bar F$ computed to estimate the marginal density of $\mu_2$ in Section~\ref{subsec: numerical experiments}. The oscillations in sign observed in the upper right corner of Figure~\ref{fig:sign-pattern-qr} are evidence of numerical error.}
\label{fig:qr-group-inverse-error}
\end{figure}

We therefore propose computing the group inverse by a new method combining QR factorization with power iteration. We first compute an estimate $G^0$ of $(I-\bar F)^\#$ by the method of~\cite{GolMey:ComputingInvDist}. We then iterate
\begin{equation}\label{eq:power iteration for group inverse}
  G^{n+1} = \mathscr{I}(G^n)= \tilde F G^n + I-e w(\bar F)^t,
\end{equation}
where $e \in \Real^L$ denotes the column vector of all ones and $\tilde F := (I-e w(F)^t) \bar F$.
We observe that $(I-\bar F)^\#$ is a fixed point of this iteration, since
\begin{align*}
  \mathscr{I}((I-\bar F)^\#)
  &= (I-e w(\bar F)^t) \bar F (I-\bar F)^\# + (I-e \pi(\bar F)^t) \\
  &= (\bar F - I) (I-\bar F)^\# + (I-e w(\bar F)^t) + (I-\bar F)^\# \\
  &=(I-\bar F)^\#.
\end{align*}
Above, we use well known properties of the group inverse, including that the spectral projector $I-e w(\bar F)^t$ commutes with $\bar F$, that $(I-e w(\bar F)^t)(I-\bar F)^\# = (I-F)^\#$, and that $(I-\bar F)(I-\bar F)^\#=I-e w(\bar F)^t$.

Moreover, when $\bar F$ is irreducible, $\mathscr{I}^K$ is a contraction for $K$ sufficiently large.
By the Perron-Frobenius theorem, the spectral radius of $\tilde F$ is smaller than $1-\eps$, for some $\eps >0$.
Therefore, by Gelfand's formula, for any matrix norm $\left \lVert \cdot \right \rVert$, we have ${\lim_{k \rightarrow \infty} \left \lVert \tilde F^k \right \rVert^{1/k} < 1-\varepsilon/2}$, and so for some $K$,
\begin{equation*}
  \left \lVert \tilde F^k \right \rVert < (1 - \varepsilon/2)^k \text{ whenever } k \geq K.
\end{equation*}
Now
\begin{equation*}
  \mathscr{I}^K(G)= \tilde F^K G + (I-e\pi(F)^t) \sum_{j=0}^{K-1} F^j.
\end{equation*}
Thus, assuming that the norm $\lVert \cdot \rVert$ is submultiplicative,
\begin{align*}
  \left \lVert \mathscr{I}^K(G)- \mathscr{I}^K(H) \right \rVert = \left \lVert \tilde F^K (G-H) \right \rVert \leq \left \lVert \tilde F^K \right \rVert \lVert G-H \rVert \leq (1-\varepsilon/2)^K  \lVert G-H \rVert.
\end{align*}
Therefore, the power iteration converges and its limit is the group inverse $(I-\bar F)^\#$.

Using this new method, we computed $(I-\bar F)^\#$ for $\bar F$ the overlap matrix involved in estimating the marginal in $\mu_2$ in Section~\ref{subsec: numerical experiments}. We performed $10^6$ power method iterates. Observe that the sign pattern of the group inverse computed with power iteration is symmetric; see Figure~\ref{fig:sign-pattern-iterative}.   

The power iteration~\eqref{eq:power iteration for group inverse} converges slowly when the spectral gap of $\bar F$ is small. We have shown in \cite{ThVKWe:Perturbation} that the spectral gap may be very small: It decreases exponentially with a temperature parameter in a limit similar to the one analyzed in Section \ref{sec: low temp} above. However, even when the spectral gap is small, we conjecture that a modest number of power iterations will significantly reduce the numerical error in the group inverse, since the error in the initial calculation seems to be highly oscillatory and the power iteration has a smoothing effect. 

\bibliographystyle{spmpsci}
\bibliography{umbrella}
\end{document}